\documentclass[a4paper,12pt]{amsart}
\usepackage[lmargin=2.4cm,rmargin=2.25cm,bmargin=3cm,tmargin=2.5cm]{geometry}

\usepackage[utf8]{inputenc}
\usepackage{amsmath} 
\usepackage{amssymb}
\usepackage{amsthm}
\usepackage{textcomp}
\usepackage{enumerate}
\usepackage[sort,numbers]{natbib}
\usepackage{array}
\usepackage{verbatim}
\usepackage{hyperref}
\usepackage{graphicx}
\usepackage{booktabs}
\usepackage{algpseudocode}
\usepackage{stmaryrd} 
\usepackage{xcolor}
\hypersetup{
  colorlinks,
  linkcolor={red!50!black},
  citecolor={blue!50!black},
  urlcolor={blue!80!black}
}

\usepackage{tcolorbox}
\usepackage{soul}
\usepackage[foot]{amsaddr}

\usepackage{subcaption}

\usepackage{xcolor}
\usepackage[linesnumbered,ruled,vlined]{algorithm2e}
\usepackage{setspace}
\SetKwInput{KwInput}{Input}                
\SetKwInput{KwOutput}{Output}              
\usepackage{lipsum}
\makeatletter
\newcommand{\algorithmfootnote}[2][\footnotesize]{%
  \let\old@algocf@finish\@algocf@finish
  \def\@algocf@finish{\old@algocf@finish
    \leavevmode\rlap{\begin{minipage}{\linewidth}
    #1#2
    \end{minipage}}%
  }%
}


\newtheorem{theorem}{Theorem}

\newtheorem{proposition}[theorem]{Proposition}
\newtheorem{lemma}[theorem]{Lemma}

\theoremstyle{definition}

\theoremstyle{remark}

\begin{document}

  \title[De-biasing particle filtering for a continuous time hidden Markov model]{De-biasing particle filtering for a continuous time hidden Markov model with a Cox process observation model}

  \author{Ruiyang Jin}
  \address[RJ]{Department of Engineering, University of Cambridge}

  \author{Sumeetpal S.~Singh}
  \address[SSS]{Department of Engineering, University of Cambridge}

  \author{Nicolas Chopin}
  \address[NC]{ENSAE, Institut Polytechnique de Paris}
  
\begin{abstract} 
   
    We develop a (nearly) unbiased particle filtering algorithm for a specific
    class of continuous-time state-space models, such that (a) the latent
    process $X_t$ is a linear Gaussian diffusion; and (b) the observations
    arise from a Poisson process with intensity $\lambda(X_t)$.  The likelihood
    of the posterior probability density function of the latent process
    includes an intractable path integral. Our algorithm relies on Poisson
    estimates which approximate unbiasedly this  integral. We show how we
    can tune these Poisson estimates to ensure that, with large probability,
    all but a few of the estimates generated by the algorithm are positive.
    Then replacing the negative estimates by zero leads to a much smaller bias
    than what would obtain through discretisation.  We quantify the probability
    of negative estimates for certain special cases and show that our particle
    filter is effectively unbiased. We apply our method to a challenging 3D
    single molecule tracking example with a Born and Wolf observation model.
\end{abstract} 

\keywords{Sequential Monte Carlo; Particle filter; Diffusions; Continuous-time; Hidden Markov model; Cox process; Path integral; Poisson estimate}

\maketitle
\sloppy

\section{Introduction} 
\subsection{Background}
Diffusion processes have been extensively used for modelling continuous-time
phenomena in a range of scientific areas, including finance
\citep{black2019pricing, merton1975optimum}, biochemistry 
\citep{mcadams1997stochastic, gillespie1976general, gillespie1977exact},
physics \citep{obukhov1959description} and engineering 
\citep{pardoux1984etude}. These processes are usually applied to model both the
observed process and an unobserved signal/state process in a hierarchical model. 

This paper develops novel methods for optimal filtering of multivariate
diffusion processes observed at irregular time instances, which follow a  Cox
process whose intensity is a (non-negative) function of the state process. 
The complete data likelihood of such a model includes a path integral of the
state trajectory (through the intensity function), which is intractable. 
This precludes the use of standard particle filters. 

Another common recognised problem in continuous time filtering for diffusion
processes is the unavailability of transition densities
\citep{jasra2020multilevel, fearnhead2010random}. In our problem though, the
hidden state is described by a linear SDE and thus state transition density is
available, but the likelihood still remains intractable for the reason
mentioned above. 
Estimators proposed by  \cite{nicolau2002new} replaces the path integral (with
respect to time) with a Riemann approximation based on a number of intermediate
points. This technique is further used to construct the transition density
estimator of \cite{durham2002numerical} and implemented in filtering context to
approximate the weights.

To remove the time-discretisation error in the numerical approximation of the
path intergral, the so-called Poisson estimator is often deployed. The path
integral estimate is computed using a (infinite) series expansion which is
expressed as a random finite series where the random truncation is given by a
Poisson random variable.  The first Poisson estimator was introduced in the
field of statistical physics by \cite{wagner1988unbiased, wagner1989undiased}.
This was subsequently further developed in the computational statistics
literature, to provide an unbiased estimation of diffusion transition densities
by \cite{beskos2006exact}, to its generalisation forms and its use in
sequential importance sampling by \cite{fearnhead2008particle}, and to its
variant using power series expansion by \cite{papaspiliopoulos_2011}. One
drawback of using a Poisson estimator is that it may return negative values,
which can result in an overall negative likelihood estimate and thus
prohibiting the use of the likelihood estimate for model calibration via
Particle MCMC \citep{andrieu2010particle}. A naive way to ensure positive
estimates is to truncate all negative estimates to zero, which obviously comes
at the expense of introducing a bias to the estimate.
\cite{fearnhead2010random} use Wald's identity (for martingales) to generate an
unbiased estimate of the path integral that are guaranteed to be positive.
However, this method does not seem to yield an unbiased  estimate of the
likelihood itself (see Section \ref{sec:furthercomments} for an elaboration on this
point), and has a bias which appears difficult to quantify.

\subsection{Contributions} 

The approach we pursue in this work is to employ the standard Poisson estimate and retain only the positive part of the returned estimate. (In Section \ref{sec:furthercomments}, we discuss the retaining the absolute value which will allow us to completely de-bias the estimate.) We are able to quantify the probability of encountering a negative weight (in certain idealised scenarios) and show that this probability decreases exponentially with the inverse of the time interval size over which the estimate is computed. (For some typical experimental settings in our numerical work the probability is exceptionally small, of the order $10^{-50}.$)  This exponential decrease in the probability of a negative estimate yields a few extra boons. The first being a rapidly diminishing mean square error, for the likelihood estimate, in the available CPU time. The second being the probability a complete run of an $N$-particle approximation for $T/\Delta$ time steps encountering a negative estimate (thus needing truncation) being equally rare and straightforward to control using our proposed (heuristic) tuning procedure. (Here $[0,T]$ is the time interval for smoothing, $\Delta$ is the interval over which the path integral is estimated and thus there are $T/\Delta$ path integrals to estimate for each particle.) To control a negative weight event, the extra simulation cost per-particle per-time step  is $\mathcal{O}(\Delta)$ and thus the total extra cost is $N \times (T/\Delta)\times \Delta$, which does not blow up $\Delta$ tends to zero.

As for our second contribution, we apply our methodology to a challenging model calibration problem arising from  single molecule fluoresencence microscopy, which is a very popular live cell imaging technology. We combine our likelihood estimate with the particle marginal Metropolis-Hastings algorithm  \citep{andrieu2010particle} to estimate the model parameters for data that arises from observing a diffusing molecule in 3D via a Cox process and a Born-Wolf observation model. We show how our particle filter significantly outperforms the conventional time discretisation based approach for the intractable path integral as implemented in \cite{d2022limits}. 
Our method is shown to have a negligible bias due to our tuning heuristic that controls the occurrence of a negative Poisson path integral estimate (and thus the truncation induced bias).


The paper is organised as follows. Section \ref{sec:problem_formulation} presents the problem formulation while  
Sections \ref{sec:paticle_filtering} and \ref{sec:discretisation_free} present the particle filtering methodology in continuous time. Section \ref{sec:paticle_filtering} presents the particle filter that employs a simple time discretisation of the path integral and Section \ref{sec:discretisation_free} the more sophisticated particle filter that employs the Poisson estimator of the path integral. Our proposed algorithm and accompanying theoretical results on its performance are also presented in Section 4.  Experiments including likelihood estimation, state estimation (smoothing) and parameter estimation are presented in Section \ref{sec: numerical experiments}. Proofs and additional algorithms can be found in the Appendix.
\section{Problem Formulation}\label{sec:problem_formulation}

\subsection{Notation}

We consider a  latent continuous time Markov process  $\{X_t\}_{t\geq0}$ which
takes values in $\mathcal{X}\subset \mathbb{R}^n$, has a time-inhomogeneous
Markov transition density, $X_{t_k}\vert(X_{t_{k-1}}=x_{k-1}) \sim
f^{\theta}_{t_{k-1},t_k}(x_{k}|x_{k-1})$, and initial density $\nu^{\theta}$.
The superscript $\theta$ is the parameter of the model, and will be defined
subsequently. 
By $X\sim \mathcal{N}(\mu,\Sigma)$, we mean that $X$ has the distribution of a
Gaussian random variable with mean $\mu$ and covariance $\Sigma$, whereas
$\text{N}(x; \mu,\Sigma)$ is the evaluation this Gaussian density at $x$. We
use the standard sequence notation $i:j=i,i+1, \ldots, j-1, j$, and
$\left\lceil x\right\rceil$ to denote the smallest integer number greater than
or equal to $x\in \mathbb{R}$.   The $\mathcal{Y}\subset \mathbb{R}^m$-valued
stochastic process $\{Y_k\}_{k\in\mathbb{Z}_+}$ corresponds to the observed
process with observation density $g^\theta(y_k\vert x_k)$. 
A realisation of a Poisson point process on the positive real line is a
sequence of increasing time points $0<t_1<t_2<\ldots$ generated according to a
non-negative intensity function $t \mapsto \lambda_t$. For our application, the
intensity function is stochastic and state dependent, i.e. $\lambda_t =
\lambda(X_t)\geq 0$.

\subsection{Hidden Markov Model Formulation} \label{sec: discrete time HMM}

Let $(t_1, y_{t_1}),\ldots, (t_{n_p}, y_{t_{n_p}})$ be an observed sequence of
non-negative increasing arrival times $0<t_i<T$ and arrival locations $y_{t_i}$
of a marked Poisson point process on the real line, recorded in the time
interval $[0, T]$. The arrival times are generated by a Poisson point process
on $[0,\infty)$ with stochastic intensity function $\lambda(X_t)$, which is
determined by a latent continuous time Markov process $\{X_t\}_{t\geq0} \subset
\mathcal{X}$ and a non-negative real valued function
$\lambda:\mathcal{X}\rightarrow\mathbb{R}$.  The locations $y_{t_i}\in
\mathcal{Y}$ are marks of the point process and are generated according to the
conditional (on $X_{t_i}=x$) probability density function,
\begin{equation*}
    Y_{t_i}|(X_{t_i}=x) \sim g^\theta (y |x)dy, \qquad i\in\left\{1:n_p\right\}.
\end{equation*}

The {\it exact} likelihood is
\begin{equation}
    \mathcal{L} = 
    \mathbb{E}\left\{\left(\prod_{i=1}^{n_p}\lambda\left(X_{t_i}\right)g^\theta\left(y_{t_i}\vert X_{t_i}\right)\right\}
\times \exp\left(-\int_{0}^{T}\lambda\left(X_s\right)ds\right)\right\}
    \label{eq: true likelihood}
\end{equation}
where the expected value is computed with respect to the law of
$\{X_t\}_{0\leq t \leq T}$.

\section{Particle filtering}\label{sec:paticle_filtering}

We adopt a discretisation of the positive real axis which is divided into segments of maximum length $\Delta$ defined sequentially as follows:
\begin{align}
    &t_0^\Delta=0,\nonumber\\ &t_k^\Delta=t^\Delta_{k-1}+\min\big\{\Delta, T-t_{k-1}^\Delta, \min_{t_i>t_{k-1}^\Delta} t_i-t_{k-1}^\Delta\big\},\quad  k>1
    \label{eq:time_step}
\end{align}
where $t_i$ is the (observed) arrival time. Thus \eqref{eq:time_step} defines  an increasing sequence of time points
$t_0^\Delta = 0<t_1^\Delta<\ldots< t_{m-1}^\Delta<t_m^\Delta=T$ spaced $\Delta$
apart unless the spacing is narrowed to coincide with the arrival of the
observation $y_{t_i}$ at time $t_i$ and ensures 
$\{t_1,\ldots,t_{n_p}\}\subset\{t_1^\Delta,\ldots,t_{m-1}^\Delta\}$.
The {\it exact} likelihood \eqref{eq: true likelihood} may be re-expressed using 
time points $t_i^\Delta$ as 
\begin{equation}
    \label{eq:true_likelihood_2}
    \mathcal{L}=\mathbb{E}\left\{
        \left(\prod_{i=1}^{n_p}\lambda(X_{t_{i}})g^\theta(y_{t_{i}}\vert
        X_{t_{i}})\right)
    \times\left(\prod_{j=1}^{m}\exp\left(-\int_{t_{j-1}^{\Delta}}^{t_{j}^{\Delta}}\lambda(X_{s})\mathrm{d}s\right)\right)\right\}.
\end{equation} 

The exact likelihood is not straightforwardly (using an approach such as in
Algorithm \ref{alg: bootstrap particle filter_discrete_time_version}) amenable
to unbiased estimation using particle filtering due to the path-integrals of
$\lambda(X_s)$. A simple approach is to replace the path-integral over $[0,T]$
with the following Reimann approximation
\begin{equation}
    \mathcal{L}_{\Delta}=\mathbb{E}\left\{ \left(\prod_{i=1}^{n_p}\lambda(X_{t_{i}})g^\theta(y_{t_{i}}\vert X_{t_{i}})\right)
        \times \prod_{j=1}^{m}\exp
        \left(-\lambda(X_{t_{j-1}^{\Delta}})(t_{j}^{\Delta}-
    t_{j-1}^{\Delta})\right)\right\}       
    \label{eq:approx_likelihood_2}
\end{equation}
(The subscript $\Delta$ denotes the dependence on the time discretisation and
emphasises that $\mathcal{L}_{\Delta}\neq\mathcal{L}$.) The posterior density
function of $(X_0, X_1, \ldots, X_m)=(X_{t_0^\Delta}, X_{t_1^\Delta}, \ldots,
X_{t_m^\Delta})$  for this time discretised model is 
\begin{align}
   &\int p^{\theta}_{\Delta}(x_0, \ldots, x_m) h(x_{0:m})dx_{0:m}\nonumber\\
   &\propto \mathbb{E}\left\{h\left(X_{t_0^{\Delta}}, \ldots, X_{t_m^{\Delta}}\right)\times 
       \left(\prod_{i=1}^{n_p}\lambda(X_{t_{i}})g^\theta(y_{t_{i}}\vert X_{t_{i}})\right) 
   \times \prod_{j=1}^{m}\exp \left(-\lambda(X_{t^{\Delta}_{j-1}})(t_{j}^{\Delta}-t_{j-1}^{\Delta}) \right)\right\} 
   \label{eq: integral form posterior discrete time}
\end{align}
This posterior density function  and its likelihood can be estimated using a
conventional particle filter  as described in Algorithm \ref{alg: bootstrap
particle filter_discrete_time_version} \citep{d2022limits}. 

The estimate $\hat{\mathcal{L}}_{\Delta}$ returned by Algorithm \ref{alg:
bootstrap particle filter_discrete_time_version} is an unbiased estimate of the
time-discretised likelihood ${\mathcal{L}}_{\Delta}$. In the next section we
will develop a particle method that approximates the exact (not-discretised)
likelihood, and in the numerical section (Section \ref{sec: numerical
experiments}) we will extensively contrast its estimation accuracy compared to
Algorithm \ref{alg: bootstrap particle filter_discrete_time_version} applied to
model \eqref{eq: integral form posterior discrete time}.

\begin{algorithm}[t!]
\setstretch{1.}
\SetAlgoLined
\DontPrintSemicolon
  

\For{$i\in\{1: N\}$}{Sample ${X}_0^{(i)}\sim \nu^\theta(\cdot)$.\\
Set 
$W_0^{(i)} = \exp\left(-X_0^{(i)}(t_{1}^{\Delta}-t_{0}^{\Delta})\right)$.\\
Resample $\{X_{0}^{(i)},W_0^{(i)}\}$ to obtain $\{\Tilde{X}_{0}^{(i)},\frac{1}{N}\}$.\\
}
\For{$k\in \{1:m-1\}$}{
\For{$i\in\{1:N\}$}{
Sample $X_k^{(i)}\sim f_{t_{k-1}^{\Delta},t_k^{\Delta}}^\theta(\cdot|\Tilde{X}_{k-1}^{(i)})$ and
set $X_{0:k}^{(i)}=(\Tilde{X}_{0:k-1}^{(i)},X_k^{(i)})$. \\
Set 
\[
W_k^{(i)} = \exp\left(-X_k^{(i)}(t_{k+1}^{\Delta}-t_{k}^{\Delta})\right)
\times \prod_{j=1}^{n_p}\left(
\lambda(X_k^{(i)})g^\theta(y_{t_{j}}\vert X_{k}^{(i)})
\right)^{\mathbb{I}[t_k^{\Delta}\leq t_j < t_{k+1}^{\Delta}]}.
\] \\
\hfill {\tt {\% Find all $y_{t_j}$ with $t_j \in [t_k^{\Delta},t_{k+1}^{\Delta})$.}} \\
Resample $\{X_{0:k}^{(i)},W_k^{(i)}\}$ to obtain $\{\Tilde{X}_{0:k}^{(i)},\frac{1}{N}\}$.\\
}}
Compute the (unbiased) estimate of the likelihood in \eqref{eq:approx_likelihood_2}:
\begin{equation}
\hat{\mathcal{L}}_{\Delta}=\prod_{k=0}^{m-1}\left\{\frac{1}{N}\sum_{i=1}^N W_k^{(i)} \right\}.
\label{eq: pf_biased likelihood estimates}
\end{equation}
\caption{Bootstrap particle filter}
\label{alg: bootstrap particle filter_discrete_time_version}
\end{algorithm}
\section{Particle Filtering to Mitigate Model Discretisation Error}
  \label{sec:discretisation_free}
  We propose a simple method to nearly unbiasedly estimate the true likelihood
$\mathcal{L}$. The idea to discretise the path integrals into smaller
$\Delta$ length time integrals, $\exp\left(-\int_{t}^{t+\Delta}
\lambda(X_s)ds\right)$, which are amenable to simple unbiased estimation and
whose probability of being positive approaches 1 rapidly as $\Delta$ tends to
0. We truncate a negative estimate to 0, and when combined with the rarity of
such events, it is simple to quantify the bias, which is also shown to be
rapidly decreasing as $\Delta$ tends to 0. This estimate can be used within
particle filtering and Particle MCMC; such methods are known particle filtering
with ``random weights'' as in \cite{rousset2006discussion},
\cite{fearnhead2008particle} and \cite{fearnhead2010random}.

Specifically, we are going to construct real valued random variables $E_1,
\ldots, E_{m}$, which are conditionally independent given $X_{t_0^\Delta},
\ldots, X_{t_{m}^\Delta}$ (in the manner made precise below in \eqref{eq: E})
and each unbiasedly estimates the corresponding term
$\exp(-\int^{t_{i}^\Delta}_{t_{i-1}^\Delta}\lambda(X_s)ds)$ in the manner of
\eqref{eq:unbiased_E}:
\begin{equation}
    p(e_1, \ldots, e_{m}|x_{t_0^\Delta},\ldots, x_{t_{m}^\Delta})=\prod_{i=1}^{m}p_{t_{i-1}^\Delta,t_{i}^\Delta}(e_i|x_{t_{i-1}^\Delta},x_{t_{i}^\Delta})
    \label{eq: E}
\end{equation}
\begin{equation}
    \int^\infty_{-\infty}e_ip_{t_{i-1}^\Delta, t_{i}^\Delta}(e_i|x_{t_{i-1}^\Delta},x_{t_{i}^\Delta})de_i=\mathbb{E}\Big\{\exp\big(-\int^{t_{i}^\Delta}_{t_{i-1}^\Delta}\lambda(X_s)ds\big)\big\vert X_{t_{i-1}^\Delta}=x_{t_{i-1}^\Delta}, X_{t_{i}^\Delta}=x_{t_{i}^\Delta}\Big\}. \label{eq:unbiased_E}
\end{equation}

With these random variables $E_1,\ldots, E_{m}$, we retain the unbiasedness of
the estimate of the numerator and denominator (the likelihood),
\begin{align*}
    &\int p_T(x_{0:m})h(x_{0:m})dx_{0:m}\\&\propto \mathbb{E}\Big\{h(X_{t_0^\Delta},X_{t_1^\Delta},\ldots, X_{t_m^\Delta})\times \Big(\prod^{n_p}_{i=1}\lambda(X_{t_i})g^\theta(y_{t_i}|X_{t_i})\Big)\times \prod^{m}_{j=1}E_j\Big\}
\end{align*}
which follows through a conditioning expectation argument. For $k\in\{1:m\}$,
let
\begin{align}
    &\int p_{t_k^\Delta}(x_0,\ldots, x_k)h_k(x_{0:k})dx_{0:k}\nonumber\\&\propto \mathbb{E}\Big\{h_k(X_{t_0^\Delta},X_{t_1^\Delta},\ldots,X_{t_k^\Delta})\times \Big(\prod^{n_p}_{i=1}[\lambda(X_{t_i})g^\theta(y_{t_i}|X_{t_i})]^{\mathbb{I}[t_i\leq t_k^\Delta]}\Big)\times \prod^{k}_{j=1}E_j\Big\}
    \label{eq:sequential_unbiased}
\end{align}
where, recall, $t_m=T$. Once we have defined \eqref{eq: E}, it will be
straightforward to construct a particle approximation of the conditional
probability density functions \eqref{eq:sequential_unbiased}. These posterior
densities, unlike \eqref{eq: integral form posterior discrete time}, do not
have a time discretisation bias. Our particle filtering algorithm, detailed in
Algorithm \ref{alg: bootstrap particle filter}, also returns an estimate the
exact likelihood \eqref{eq:true_likelihood_2}.

The next subsection explains how to construct these variables $E_i$ using the
Poisson estimate approach. The following subsections will explain how to ensure
that the probability of $E_i<0$ may be made negligible. 
\subsubsection{The Poisson Estimator}
We first consider a fixed trajectory $\{X_s\}_{0<s\leq t_1^\Delta}$, then
\begin{align}
    \exp\left(-\int_{0}^{t_{1}^{\Delta}}\lambda(X_{s})\mathrm{d}s\right)
    &=\exp(c)\exp(I-c)\nonumber\\
    &=\exp(c)\sum^\infty_{k=0}\frac{(I-c)^k}{k!}\nonumber\\
    &=\exp(c+\eta)\sum^\infty_{k=0}\exp(-\eta)\frac{\eta^k}{k!}\big(\frac{I-c}{\eta}\big)^k\nonumber\\
    &=\exp(c+\eta)\sum^\infty_{k=0}\exp(-\eta)\frac{\eta^k}{k!}\prod^k_{i=1}
    \mathbb{E}_{\tau_i}\left(\frac{-t_1^\Delta \lambda(X_{\tau_i}) -c}{\eta}\right)\nonumber\\
    &=\exp(c+\eta)\mathbb{E}_\kappa\left[\prod^\kappa_{i=1}
    \mathbb{E}_{\tau_i}\left(\frac{-t_1^\Delta \lambda(X_{\tau_i})-c}{\eta}\right)\right],
    \nonumber
\end{align}
where $I=-\int^{t_1^\Delta}_0\lambda(X_s)ds$ and $-t_1^\Delta \lambda(X_{\tau_i})$'s are the unbiased estimates of $I$. The above derivation follows the approach outlined in \cite{papaspiliopoulos_2011}.

The inclusion of constant $c$ is to optimise the resulting estimator. The
inclusion of $\mathcal{P}$o$(\eta)$ distribution is to allow an unbiased
estimate to be based on a truncated sum and finally $\mathbb{E}_\kappa(\cdot)$
and $\mathbb{E}_{\tau_i}(\cdot)$ denote expectation with respect to
$\kappa\sim\mathcal{P}o(\eta)$ and $\tau_i\sim\mathcal{U}(0,t_1^\Delta)$. The final line yields the resulting unbiased estimator: 

\begin{equation}
    E_1=\exp(c+\eta)\Big[\mathbb{I}_{\{\kappa=0\}}
    +\mathbb{I}_{\{\kappa>0\}}\Big(\prod^\kappa_{i=1}\frac{-t_1^\Delta\lambda(X_{\tau_i})-c}{\eta}\Big)\Big]
\label{eq:poisson_estimate}
\end{equation}
as the sample from $E_1\sim p(e_1|x_0,x_{t_1^\Delta})$. 

\cite{papaspiliopoulos_2011} discussed how to choose $c$ and $\eta$ in order to
make the variance of the estimate as small as possible. In particular, he
showed that $c^\star=I-\eta$ is the value of $c$ that minimises the variance
(for a fixed $\eta$). Our approach is slightly different: we aim at controlling
the probability of the estimate being negative. For that purpose, we set 
$c=-t_1^\Delta \lambda(X_0)-\eta$ (which can also be seen as a tractable
approximation of $c^\star$). This yields:
\begin{equation*}
    E_1=\exp\left\{-t_1^\Delta\lambda\left(X_0\right\}\right)\left[\mathbb{I}_{\{\kappa=0\}}+\mathbb{I}_{\{\kappa>0\}}\left(\prod^\kappa_{i=1}\left[1+\frac{t_1^\Delta}{\eta}\left(\lambda\left(X_{0}\right)-\lambda\left(X_{\tau_i}\right)\right)\right]\right)\right].
\end{equation*}
We postpone the discussion on how to control the probability of a negative
estimate to the next sub-section. 

The Poisson estimator for any time interval $t_{i-1}^\Delta\leq t\leq t_{i}^\Delta$ is detailed in Algorithm \ref{alg: Random Weight Estimator Algorithm for One Particle}. Note that we assume we can exactly sample $X_{\tau_j}$ from 
$p(x_\tau |x_{\tau_{j-1}})$ for $j\in\{1:\kappa\}$. This is possible for linear
Gaussian diffusions, as discussed in the introduction; see Appendix \ref{sec: bridge density} for
details. 

The particle filter with the Poisson estimator is described in Algorithm
\ref{alg: bootstrap particle filter}. Step 8 of this algorithm makes a call to
Algorithm \ref{alg: Random Weight Estimator Algorithm for One Particle} to get
the desired samples $E_k^{(i)}$ from $p(e_k\vert X_{t_{k-1}^\Delta}^{(i)},
X_{t_k^\Delta}^{(i)})$.

\begin{algorithm}[t!]
\DontPrintSemicolon
  
 \KwInput{$\eta, t_{i-1}^\Delta, t_i^\Delta, X_{t_{i-1}^\Delta}$}

    Generate $\kappa\sim\mathcal{P}o\big(\eta\big)$.\\
    Generate $\tau_1,\tau_2,\ldots, \tau_\kappa\sim\mathcal{U}(t_{i-1}^\Delta,t_{i}^\Delta)$, sort them in ascending order and relabel them so that $\tau_1<\tau_2<\ldots,<\tau_\kappa$.\\
    Sequentially sample $X_{\tau_j}$ from $p(x_\tau|x_{\tau_{j-1}})$ for $j\in\{1: \kappa\}$ where $\tau_0=t_{i-1}^{\Delta}$. Sample $X_{t_{i}^\Delta}$ from $p(x_{t_{i}^\Delta}|x_{\tau_{\kappa}})$.\\
  Compute and return the estimate: 
\begin{align}
\scriptsize
    E=&\exp\left(-(t_{i}^\Delta-t_{i-1}^\Delta)\times \lambda\left(X_{t_{i-1}^\Delta}\right)\right)\times\nonumber\\
    &\Bigg[\mathbb{I}_{\{\kappa=0\}}+
    \left.\mathbb{I}_{\{\kappa>0\}}\left(\prod^{\kappa}_{j=1}\left[1+\frac{t_{i}^\Delta-t_{i-1}^\Delta}{\eta}\left(\lambda(X_{t_{i-1}^\Delta})-\lambda(X_{\tau_j})\right)\right]\right)\right].
    \nonumber
\end{align}

  \KwOutput{$(E,X_{t_{i}^\Delta})$ \qquad {\tt \% The sample from $p(e, x_{t_{i}^\Delta} \vert X_{t_{i-1}^\Delta} )$}}
\caption{PE($\eta, t_{i-1}^\Delta, t_i^\Delta, X_{t_{i-1}^\Delta}$)}
\label{alg: Random Weight Estimator Algorithm for One Particle}
\end{algorithm}
\begin{algorithm}[tb]
\small
\DontPrintSemicolon
Find $\Delta$ \eqref{eq:find_Delta} and define time steps \eqref{eq:time_step}.\\
\For{$i\in\{1:N\}$}{Sample $\Tilde{X}_{t_0^\Delta}^{(i)}\sim \nu^\theta(\cdot)$ and set $W_0^{(i)}= \frac{1}{N}$.\\
}
Estimate $\hat{l}_0$. \hfill {\tt \% See Section \ref{sec:design_choice}.}\\
\For{$k\in\{1:m\}$}{
\For{$i\in\{1:N\}$}{
Set $\eta_k=\left(t_k^\Delta-t_{k-1}^\Delta\right)\hat{l}_{k-1}$.\\
Sample $(E^{(i)}_k,X_{t_k^\Delta}^{(i)})\leftarrow$PE($\eta_{k}, t_{k-1}^\Delta, t_k^\Delta, \tilde{X}_{t_{k-1}^\Delta}^{(i)})$ and set $\left(X_{t_0^\Delta}^{(i)}, \ldots, X_{t_k^\Delta}^{(i)}\right)=\left(\Tilde{X}_{t_0^\Delta}^{(i)}, \ldots, \Tilde{X}_{t_{k-1}^\Delta}^{(i)}, X_{t_k^\Delta}^{(i)}\right)$.\\
Update $\hat{l}_k$ using \eqref{eq:update_lipschitz}. \\
Set $W_k^{(i)} = \max\{E_k^{(i)},0\}
\times \prod_{j=1}^{n_p}\left(
\lambda(\tilde{X}_{t_{k-1}^\Delta}^{(i)})g^\theta(y_{t_{j}}\vert \tilde{X}_{t_{k-1}^\Delta}^{(i)})
\right)^{\mathbb{I}[t_j=t_{k-1}^{\Delta}]}.$\\
\hfill {\tt \% Incorporating $y_{t_j}$ with $t_j=t_{k-1}^{\Delta}$. }\\
Resample $\big\{(X_{t_0^\Delta}^{(i)},\ldots, X_{t_k^\Delta}^{(i)}), W_k^{(i)}\big\}$ to obtain $\big\{(\Tilde{X}_{t_0^\Delta}^{(i)},\ldots, \Tilde{X}_{t_k^\Delta}^{(i)}),\frac{1}{N}\big\}$.}}
Compute the likelihood estimate:
\begin{equation}
\hat{\mathcal{L}}=\prod_{k=1}^{m}\left\{\frac{1}{N}
\sum_{i=1}^N W_k^{(i)}
\right\}.
\label{eq: pf_unbiased likelihood estimates}
\end{equation}
\caption{Bootstrap particle filter in continuous time}

\label{alg: bootstrap particle filter}
\footnotetext{Remark: $\hat{l}_k$ and $\eta_k$ in step 8 and 10 are updated according to the design choices described in Section \ref{sec:design_choice}. PE used in step 9 is the Poisson estimator algorithm described in Algorithm \ref{alg: Random Weight Estimator Algorithm for One Particle}}
\end{algorithm}
\subsection{Negative Poisson Estimate Control}
Although the Poisson estimator can return negative values, the following lemma shows that the probability of this happening is controllable by adjusting $(\eta,\Delta)$ and in particular decays exponentially fast in $\Delta$.

\begin{lemma}
Let $\left\{X_s\right\}_{0\leq s\leq \Delta}$ be one dimensional Brownian motion which starts at $X_0=x_0$. Consider the estimate \eqref{eq:poisson_estimate} (with $t_1^{\Delta}=\Delta$) of the path integral
$\mathbb{E}\left \{ \left. \exp\left(-\int_{0}^{\Delta} \lambda(X_s)ds\right) \right|X_\Delta=x_\Delta\right\}$. Let $\lambda(\cdot)$ be a non-negative $l$-Lipschitz function, then 
the following bound holds when $\eta> \Delta l \left \vert x_\Delta-x_0\right\vert$,
\begin{align}
    \Pr \Big(E_1<0 \vert \kappa >0, X_\Delta=x_\Delta \Big)< 2\exp\Big\{-\frac{\frac{2\eta}{\Delta l}(\frac{\eta}{\Delta l}-|x_\Delta-x_0|)}{\Delta}\Big\}.
    \label{eq: bound for probability}
\end{align}
\label{lem: probability bound for E<0}
\begin{proof}
See Appendix \ref{sec: lemma 1}.
\end{proof}
\end{lemma}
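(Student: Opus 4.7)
The plan is to bound the event $\{E_1 < 0\}$ by a simple exit-time event for the Brownian bridge, then apply the reflection principle. With the choice $c = -\Delta\lambda(X_0) - \eta$ in the Poisson estimator \eqref{eq:poisson_estimate}, the factor $\exp(-\Delta\lambda(X_0))$ is always positive, so conditional on $\kappa > 0$ the sign of $E_1$ is determined by the product $\prod_{i=1}^\kappa\bigl[1 + \frac{\Delta}{\eta}(\lambda(X_0) - \lambda(X_{\tau_i}))\bigr]$. A necessary condition for $E_1<0$ is that at least one of these $\kappa$ factors is negative, i.e.\ that there exists some $\tau_i$ with $\lambda(X_{\tau_i}) - \lambda(X_0) > \eta/\Delta$.

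First, I would use the $l$-Lipschitz property to translate this into a condition on the trajectory itself: the event
\[
\bigl\{\exists\, i \in \{1,\ldots,\kappa\} : \lambda(X_{\tau_i}) - \lambda(X_0) > \eta/\Delta \bigr\}
\]
is contained in $\bigl\{\sup_{0 \leq s \leq \Delta} |X_s - x_0| > \eta/(\Delta l)\bigr\}$, since $|\lambda(X_{\tau_i})-\lambda(X_0)| \le l|X_{\tau_i}-X_0|$ and $\tau_i \in [0,\Delta]$. Crucially this upper-bounding event is independent of $\kappa$ and of the $\tau_i$, depending only on the path. This reduces the problem to bounding, under the bridge measure conditioned on $X_\Delta = x_\Delta$, the probability that the Brownian motion leaves the interval $[x_0 - a, x_0 + a]$ with $a := \eta/(\Delta l)$.

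Next I would invoke the standard reflection-principle identities for a Brownian bridge. Setting $W_s := X_s - x_0$, which conditioned on $X_\Delta = x_\Delta$ is a bridge from $0$ to $y := x_\Delta - x_0$ of length $\Delta$, one has
\[
\Pr\bigl(\max_{0 \le s \le \Delta} W_s > a \,\big|\, W_\Delta = y\bigr) = \exp\!\Bigl(-\tfrac{2a(a-y)}{\Delta}\Bigr),\qquad
\Pr\bigl(\min_{0 \le s \le \Delta} W_s < -a \,\big|\, W_\Delta = y\bigr) = \exp\!\Bigl(-\tfrac{2a(a+y)}{\Delta}\Bigr),
\]
valid when $a > |y|$, which is exactly the hypothesis $\eta > \Delta l|x_\Delta - x_0|$. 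A union bound on the two one-sided exit events, followed by replacing $\pm y$ with $-|y|$ in the exponent, yields the factor $2$ and the stated exponential rate $\exp\!\bigl(-\frac{2a(a - |y|)}{\Delta}\bigr)$ in \eqref{eq: bound for probability}.

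I do not anticipate a serious obstacle; the only delicate point is making sure the conditioning is handled cleanly. Specifically, conditional on $X_\Delta = x_\Delta$ the path $\{X_s\}_{0 \le s \le \Delta}$ and the Poisson data $(\kappa, \tau_1,\ldots,\tau_\kappa)$ are independent, so the supremum bound is unaffected by the conditioning on $\kappa > 0$; and the Markov structure of Algorithm \ref{alg: Random Weight Estimator Algorithm for One Particle} (sampling $X_{\tau_j}$ and then $X_\Delta$ bridge-wise) is consistent with viewing the trajectory as a standard Brownian bridge, so no Girsanov/change-of-measure argument is needed here. The bound is already an overestimate (we bound an odd-number-of-negatives event by an at-least-one-negative event, then by the all-path supremum), which is fine since the target bound in the lemma is itself an upper bound that is tight enough for the downstream tuning heuristic.
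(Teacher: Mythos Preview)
Your proposal is correct and follows essentially the same route as the paper's proof: bound $\{E_1<0\}$ by the event that some factor in the product is negative, use the $l$-Lipschitz property to pass to $\{\sup_{0\le s\le\Delta}|X_s-x_0|>\eta/(\Delta l)\}$, and then apply the one-sided reflection-principle identities for the Brownian bridge together with a union bound to obtain the stated two-sided estimate. The paper organises the bridge identities as separate propositions before the main argument, but the logic and the resulting bound are the same.
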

Note that the estimate is trivially positive when $\kappa=0$ and hence the bound is given conditionally on $\kappa>0$. 
An illustration of how $\text{Pr}(E_1<0|\kappa>0)$ and its corresponding bound evolve as $\Delta$ changes for different choices of $\eta$ is provided in Figure \ref{fig:log prob l3 eta}. Each data point is a Monte Carlo estimate of the conditional probability (conditioned on $\kappa>0$) the random variable \eqref{eq:poisson_estimate}, with $t_1^{\Delta}=\Delta$, is negative. The Monte Carlo estimate of the conditional probability is computed for various  choices of $\eta$ and $\vert x_\Delta-x_0\vert$ using $10^8$ experiments each. \eqref{eq: bound for probability} suggests that choosing $\eta=c\Delta^{\frac{3}{2}}l$ with $\left \vert x_\Delta- x_0\right \vert=d\Delta^{\frac{1}{2}}$ (for some positive constants $c$ and $d$) results in constant bound. This is reflected by the straight line behaviour of the Data 1 in Figure \ref{fig:log prob l3 eta}. For contrast, the bound on the conditional probabilities are also illustrated. 
\begin{figure}[t!]
    \centering
    \includegraphics[width=0.6\linewidth]{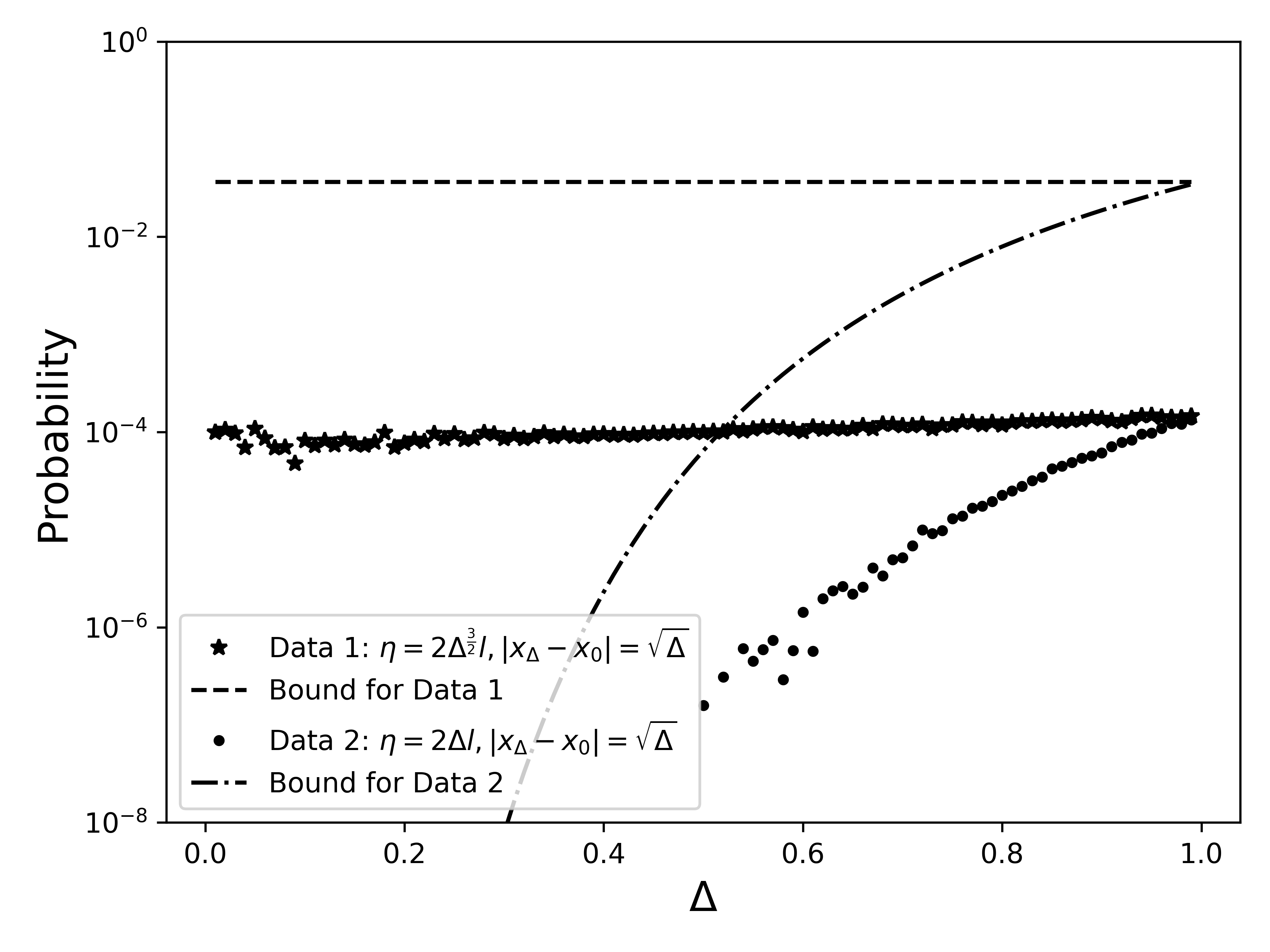}
    \caption{Plot of probability \eqref{eq: bound for probability} 
    vs. $\Delta$ for different choices of $\eta$.}
    \label{fig:log prob l3 eta}
\end{figure}
 We also compute the bound when averaging over $X_\Delta$. Combining the bound \eqref{eq: bound for probability} with the expansion  $\mathbb{I}_{[E_1<0]}\leq \mathbb{I}_{[E_1<0]}\mathbb{I}_{[\eta> \Delta l \left \vert X_\Delta-x_0\right\vert]} + \mathbb{I}_{[\eta\leq \Delta l \left \vert X_\Delta-x_0\right\vert]}$, we can compute the unqualified bound for $\Pr(E_1 <0 \vert \kappa > 0)$ to be 
\begin{equation}
\Pr(E_1 <0 \vert \kappa > 0) \leq
2+4\Phi\left(\frac{2\eta}{\Delta^{\frac{3}{2}}l}\right)-6\Phi\left(\frac{\eta}{\Delta^{\frac{3}{2}}l}\right)\label{eq:avg_bound}
\end{equation}   
where $\Phi$ is the CDF of a standard normal distribution. (The proof is provided in Appendix \ref{sec: expectation of the probability bound}.) In Section \ref{sec:design_choice}, we advocate a design choice of $\eta=\Delta l$ (with the Lipschitz constant estimated in a causal manner with the population of particles) to ensure the simulation cost decreases proportionally with the time discretisation $\Delta$. An estimate of $\mathbb{E}\left\{\exp\left(-\int^T_0 \lambda(X_t) dt \right) \right\}$ or 
\[
 \mathbb{E}\left\{\exp
 \left(-\int^\Delta_0 \lambda(X_t) dt \right)\cdots
 \exp\left(-\int^T_{\lfloor\frac{T}{\Delta} \rfloor \Delta} \lambda(X_t) dt \right)
 \right\}
\]
will entail the product of $T/\Delta$ (conditionally independent) estimates for the individual intervals. Using \eqref{eq:avg_bound}, this estimate is negative with a probability no greater than $\Pr(E_1 <0 \vert \kappa > 0)\times T/\Delta$. Figure \ref{fig: averaged bound} illustrates how the bound in \eqref{eq:avg_bound}, when multiplied with $T/\Delta$, decays with the choice $\eta=\Delta l$.
\begin{figure}[t!]
    \centering
    \includegraphics[width=0.6\linewidth]{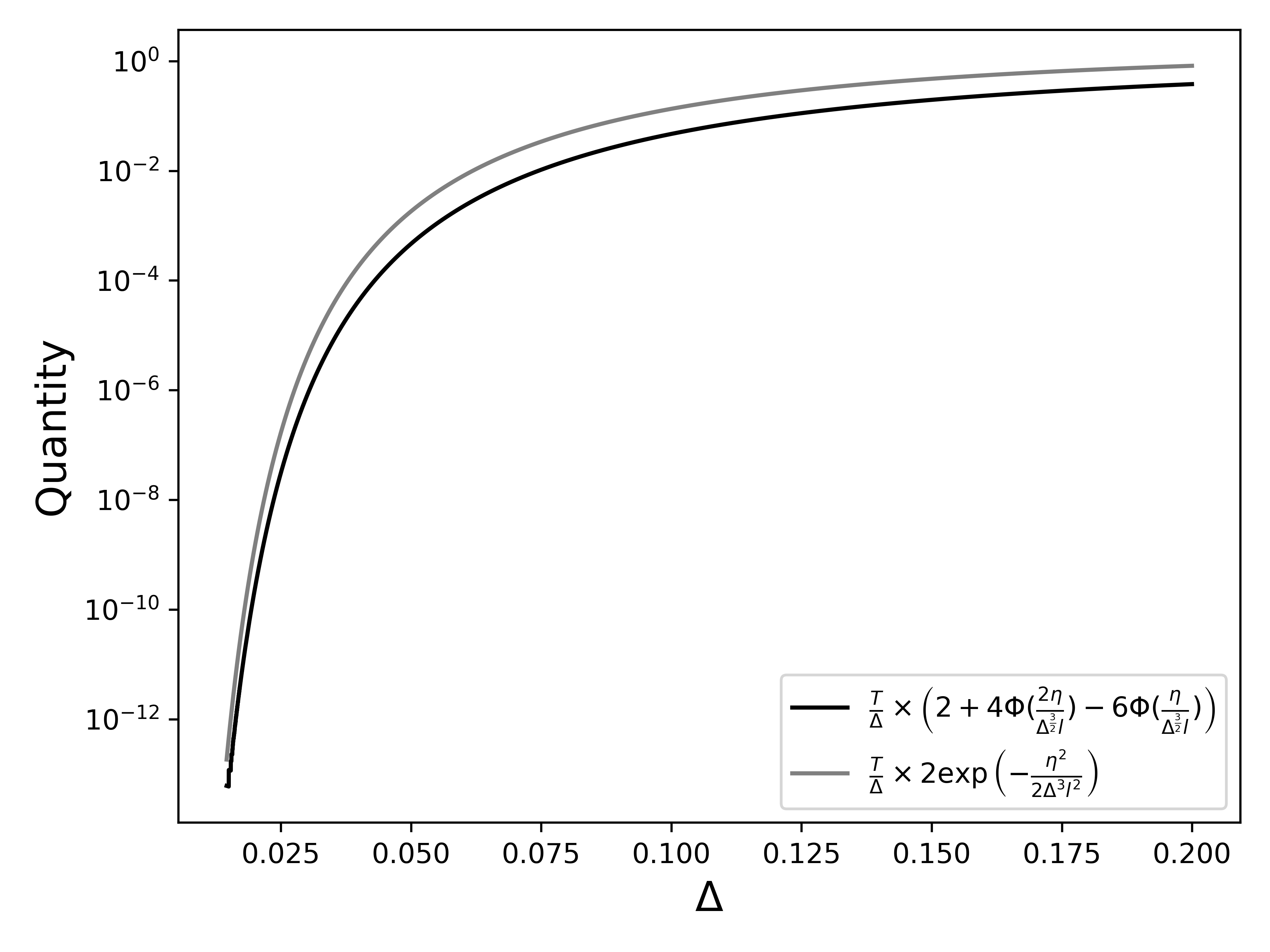}
    \caption{Plot of \eqref{eq:avg_bound} multiplied by $T/\Delta$ vs. $\Delta$ for the design choice of $\eta=\Delta l$ when $T=1$ with fitted relationship.}
    \label{fig: averaged bound}
\end{figure}
\subsubsection{Design choice for $(\eta,\Delta)$}  \label{sec:design_choice}
We employ a reasoning similar to the above to bound the probability  Algorithm \ref{alg: bootstrap particle filter} encounters a negative Poisson estimate. For a step-size $\Delta$, an $N-$particle implementation has $\left\lceil{T/\Delta}\right \rceil$ forward steps and the event of encountering at least one negative Poisson estimate is $\left\{\bigcup^N_{n=1}\bigcup^{\left\lceil{T/\Delta}\right \rceil}_{i=1}\{E_i^{(n)}<0\}\right\}$. Using  Lemma \ref{lem: probability bound for E<0}, its probability may be bounded above by the union bound, 
\begin{align*}
    &\Pr\left(\bigcup^N_{n=1}\bigcup^{\left\lceil{T/\Delta}\right \rceil}_{i=1}\{E_i^{(n)}<0\}\right)\leq\sum^N_{n=1}\sum^{\left\lceil{T/\Delta}\right \rceil}_{i=1}\Pr\left(E_i^{(n)}<0\right)
    < N\left\lceil{\frac{T}{\Delta}}\right\rceil \times 2\exp\left\{-\frac{\frac{2\eta}{\Delta l}(\frac{\eta}{\Delta l}-d \sqrt{\Delta})}{\Delta}\right\}
\end{align*}
where we have assumed that $|x_{ \Delta}-x_{(i-1)\Delta}|\leq d \sqrt{\Delta}$, for all $i \in\left\{1:\left\lceil{T/\Delta}\right \rceil\right\}$, for some constant $d>0$ and  $\eta/(\Delta l)> d \sqrt{\Delta}$. (A similar heuristic could also be found using \eqref{eq:avg_bound}.) We can make a design choice for $\eta$ and $\Delta$ (within the constraints $\eta\geq \Delta^{\frac{3}{2}}l$ and $\eta/(\Delta l)> d \sqrt{\Delta}$) to ensure the probability of encountering a negative estimate is at most $\epsilon$,
\begin{equation}
    \left\lceil{\frac{NT}{\Delta}}\right \rceil\times 2\exp\Big\{-\frac{\frac{2\eta}{\Delta l}(\frac{\eta}{\Delta l}- d \sqrt{\Delta})}{\Delta}\Big\}\leq \epsilon.
    \label{eq: conditional probability equation}
\end{equation}
For example, using $\eta =\Delta l$, the bound will fall below $\epsilon$ once $\Delta$ is small enough, say $\Delta = \bar{\Delta}$, and will continue to hold as $\Delta$ is decreased further since the left hand side decreases as $\Delta$ decreases. A similar heuristic could also be found using \eqref{eq:avg_bound}. In summary, set $\eta=\Delta l$  and 
\begin{equation}
\Delta=\sup\{\Delta >0 : \eqref{eq: conditional probability equation} \textrm{~and~} \left\lceil{{NT}/{\Delta}}\right \rceil\times \eqref{eq:avg_bound}\leq \epsilon
\}
\label{eq:find_Delta}
\end{equation}
One can apply numerical methods such as Newton's method to solve \eqref{eq: conditional probability equation}. Also, $\epsilon$ can be exceptionally small, for example, $\Delta=0.01$, $NT=10^4, d=3$ and $\eta=\Delta l$ yields  $\epsilon\approx 10^{-55}$. 

The design choice ${\eta}_k= \Delta \hat{l}_{k-1}$ can be computed sequentially in Algorithm \ref{alg: bootstrap particle filter}. $\hat{l}_{k-1}$ is the empirical Lipschitz constant updated sequentially as follows \begin{equation}
    \hat{l}_k:=\max\left\{\max_{i \in \{1:N\}}
    \frac{|\lambda(X_{t_k^\Delta}^{(i)})-\lambda(X_{t_{k-1}^\Delta}^{(i)})|}{|X_{t_k^\Delta}^{(i)}-X_{t_{k-1}^\Delta}^{(i)}|}, \hat{l}_{k-1}\right\}
    \label{eq:update_lipschitz}
\end{equation}
where the initial estimate $\hat{l}_{0}$ can be chosen to be the maximum ratio estimate as in  \eqref{eq:update_lipschitz} but computed with the particle set at time $t_0^{\Delta}$ only and the maximum is found over $i \neq j \in\{1:N\}$.  These design choices for $\hat{l}_k$ and $\eta_k$ for Algorithm \ref{alg: bootstrap particle filter} are used in all the numerical experiments presented in Section \ref{sec: numerical experiments}.

\subsubsection{Truncation Bias}
In Algorithm \ref{alg: bootstrap particle filter}, we truncate the negative Poisson estimates to zero which will induce a bias. Hence we wish to study the bias of this truncated estimate for time discretisation, $0<\Delta<\cdots < m\Delta =T$ when $\Delta$ approaches zero, i.e.
\begin{equation*}
    \mathbb{E}\left\{\exp\left(-\int_0^T\lambda(X_s)ds\right)\right\}-\mathbb{E}\left\{E_1^+\cdots E^+_m\right\}.
\end{equation*}
where $E^+_i=E_i\mathbb{I}_{A_i^c}$ is the truncated Poisson estimate and $A_i$ denotes the event $E_i<0$.
To do so, we can bound the omitted term $\mathbb{I}_{A}\prod_{i=1}^m E_i$ where $A=A_1\cup \ldots \cup A_m$ using the following lemma.
\begin{lemma}\label{lem:bias}
Let $\{X_s\}_{0\leq s\leq \Delta}$ be one dimensional Brownian motion which starts at $X_0=x_0$. Let $\lambda(\cdot)$ be a non-negative $l$-Lipschitz function and consider the estimate of the path integral 
\begin{equation*}
\mathbb{E}\left\{\exp\left(-\int_0^\Delta\lambda(X_t)dt\right)\cdots \exp\left(-\int_{(m-1)\Delta}^{m\Delta}\lambda(X_t)dt\right)\right\}=\mathbb{E}\left\{E_1\cdots E_m\right\},
\end{equation*}
where $(E_{i+1},X_{(i+1)\Delta})\leftarrow \mathrm{PE}(\Delta l,i\Delta,(i+1)\Delta,X_{i\Delta})$ (see Algorithm \ref{alg: Random Weight Estimator Algorithm for One Particle}) for $i=0,\ldots,m-1$.

Then the following bound holds,
\begin{align*}
    \left\vert\mathbb{E}\left\{\mathbb{I}_A \prod_{i=1}^m E_i\right\}\right\vert&\leq \exp(\frac{Tl}{2})\times \left(\frac{1+4\Delta^2 l}{1-4\Delta^2 l}\right)^\frac{m}{2} 
    \times m^{\frac{1}{2}}
    \left[2\exp\left(-\frac{1}{2\Delta}\right)\right]^{\frac{1}{2}}.
\end{align*}
\begin{proof}
See Appendix \ref{sec: lemma 2}.
\end{proof}
\end{lemma}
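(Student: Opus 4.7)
The plan is to reduce the bound to two standard estimates via Cauchy--Schwarz. Writing
\[
\left|\mathbb{E}\left\{\mathbb{I}_A \prod_{i=1}^m E_i\right\}\right| \leq \mathbb{E}\left\{\mathbb{I}_A \left|\prod_{i=1}^m E_i\right|\right\} \leq \sqrt{\Pr(A)}\sqrt{\mathbb{E}\left\{\prod_{i=1}^m E_i^2\right\}},
\]
the right-hand side splits naturally into a ``probability of failure'' factor and an $L^2$ factor. The goal is then to show $\Pr(A) \leq 2m\exp(-1/(2\Delta))$, which contributes $m^{1/2}[2\exp(-1/(2\Delta))]^{1/2}$, and $\mathbb{E}\{\prod_i E_i^2\} \leq \exp(Tl)\bigl(\frac{1+4\Delta^2 l}{1-4\Delta^2 l}\bigr)^m$, which contributes the remaining factor $\exp(Tl/2)\bigl(\frac{1+4\Delta^2 l}{1-4\Delta^2 l}\bigr)^{m/2}$.

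For the probability factor I would apply the union bound $\Pr(A) \leq \sum_i \Pr(A_i)$ and note that $E_i \geq 0$ automatically when $\kappa_i = 0$, so only $\{\kappa_i>0\}$ contributes. Lemma~\ref{lem: probability bound for E<0} with $\eta = \Delta l$ gives, on the event $\{|X_{i\Delta}-X_{(i-1)\Delta}|<1\}$, the conditional bound $2\exp(-2(1-|X_{i\Delta}-X_{(i-1)\Delta}|)/\Delta)$. Integrating this against the Gaussian law of the Brownian increment $X_{i\Delta}-X_{(i-1)\Delta} \sim \mathcal{N}(0,\Delta)$, and using a Gaussian tail bound on the complement $\{|X_{i\Delta}-X_{(i-1)\Delta}|\geq 1/2\}$ to absorb the boundary of validity, yields the per-interval bound $2\exp(-1/(2\Delta))$; summing across $i$ and taking the square root then produces the stated factor.

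For the $L^2$ factor I would condition on the entire diffusion path $\{X_s\}_{0\leq s\leq T}$: given the path, the Poisson ingredients $(\kappa_i, \tau_{i,j})$ are independent across distinct intervals, so
\[
\mathbb{E}\left\{\prod_i E_i^2\right\} = \mathbb{E}\left\{\prod_i \mathbb{E}[E_i^2 \mid \{X_s\}]\right\}.
\]
The series expansion underlying the Poisson estimator, combined with $\mathbb{E}_\kappa[a^\kappa] = e^{\eta(a-1)}$ for $\kappa\sim\mathcal{P}\mathrm{o}(\eta)$, gives $\mathbb{E}[E_i^2 \mid \{X_s\}] = \exp(-2\int_{(i-1)\Delta}^{i\Delta}\lambda(X_s)\mathrm{d}s) \exp(\eta\,\mathbb{E}_\tau[\delta_\tau^2])$, where $\delta_\tau = (\Delta/\eta)(\lambda(X_{(i-1)\Delta})-\lambda(X_\tau))$. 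Discarding the first exponential via $\lambda\geq 0$, applying the Lipschitz estimate $\eta\,\mathbb{E}_\tau[\delta_\tau^2] \leq (l/\Delta)\int(X_{(i-1)\Delta}-X_s)^2 \mathrm{d}s$ (for the choice $\eta=\Delta l$), and invoking the Markov property of Brownian motion reduces the task to bounding $(\mathbb{E}[\exp(l\int_0^\Delta W_s^2 \mathrm{d}s)])^m$. The explicit Cameron--Martin form $1/\sqrt{\cos(\Delta\sqrt{2l})}$ of this MGF can then be compared with $\sqrt{(1+4\Delta^2 l)/(1-4\Delta^2 l)}$ by a Taylor-series argument, with a residual slack $\exp(\Delta l/2)$ per interval absorbed into the $\exp(Tl/2)$ prefactor upon multiplying over all $m$ intervals.

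The main obstacle is matching constants exactly: both target factors are tight up to explicit constants, so each intermediate inequality must be chosen carefully rather than applied greedily. The comparison between the trigonometric MGF $1/\sqrt{\cos(\Delta\sqrt{2l})}$ and the rational form $(1+4\Delta^2 l)/(1-4\Delta^2 l)$ is delicate since both expand as $1+O(\Delta^2 l)$, and calibrating the Gaussian-tail split so as to produce \emph{exactly} $\exp(-1/(2\Delta))$ (rather than $\exp(-c/\Delta)$ for some $c<1/2$) will require a concrete choice of threshold and explicit verification. Everything else is a careful bookkeeping exercise.
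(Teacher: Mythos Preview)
Your Cauchy--Schwarz split is exactly the paper's opening move, and both target bounds $\Pr(A)\le 2m\exp(-1/(2\Delta))$ and $\mathbb{E}\{\prod E_i^2\}\le e^{Tl}\bigl(\frac{1+4\Delta^2 l}{1-4\Delta^2 l}\bigr)^{m}$ are what the paper proves. The execution, however, is simpler than what you outline, and the simplifications directly address the two ``delicate'' points you flag.

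\textbf{Probability factor.} You propose to invoke Lemma~\ref{lem: probability bound for E<0} conditionally on $X_\Delta$ and then integrate against the Gaussian law of the increment. The paper instead extracts from the \emph{proof} of Lemma~\ref{lem: probability bound for E<0} the unconditioned inclusion
\[
\{E_i<0\}\subset\Bigl\{\sup_{(i-1)\Delta\le s\le i\Delta}\bigl|X_s-X_{(i-1)\Delta}\bigr|\ge \tfrac{\eta}{\Delta l}\Bigr\}
=\Bigl\{\sup_{0\le s\le\Delta}|B_s|\ge 1\Bigr\}\quad(\eta=\Delta l),
\]
and then applies the running-maximum tail bound $\Pr(\sup_{[0,\Delta]}|B_s|\ge a)\le 2\exp(-a^2/(2\Delta))$ with $a=1$. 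This gives $\Pr(A_i)\le 2\exp(-1/(2\Delta))$ in one line and sidesteps your worry about tuning a threshold to hit exactly the constant $1/2$ in the exponent.

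\textbf{Second-moment factor.} Your exact identity $\mathbb{E}[E_i^2\mid\{X_s\}]=\exp(-2\int\lambda)\exp(\eta\,\mathbb{E}_\tau[\delta_\tau^2])$ is correct (incidentally your Lipschitz step should read $\eta\,\mathbb{E}_\tau[\delta_\tau^2]\le l\int (X_{(i-1)\Delta}-X_s)^2\,\mathrm{d}s$, without the extra $1/\Delta$, though your stated MGF $1/\sqrt{\cos(\Delta\sqrt{2l})}$ is the right one). The paper avoids Cameron--Martin entirely by the cruder pathwise bound
\[
|E_i|\le\Bigl(1+\max_{(i-1)\Delta\le s\le i\Delta}|X_s-X_{(i-1)\Delta}|\Bigr)^{\kappa_i}=:F_i,
\]
noting that the $F_i$ are i.i.d., and computing $\mathbb{E}[F_i^2]$ by first summing over $\kappa_i\sim\mathcal{P}\mathrm{o}(\Delta l)$ to get $e^{-\Delta l}\mathbb{E}[\exp(\Delta l(1+M)^2)]$ with $M=\max_{[0,\Delta]}|B_s|$, then using $(1+M)^2\le 2+2M^2$ and the tail-integral formula
\[
\mathbb{E}\bigl[e^{2\Delta l\,M^2}\bigr]=1+\int_1^\infty\Pr\!\Bigl(M>\sqrt{\tfrac{\log w}{2\Delta l}}\Bigr)\mathrm{d}w
\le 1+2\int_1^\infty w^{-1/(4\Delta^2 l)}\,\mathrm{d}w=\frac{1+4\Delta^2 l}{1-4\Delta^2 l},
\]
again via the same running-maximum bound. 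Every step is an elementary inequality with explicit constants, so the rational form $\frac{1+4\Delta^2 l}{1-4\Delta^2 l}$ and the prefactor $e^{\Delta l}$ drop out directly, with no need to compare $1/\sqrt{\cos(\cdot)}$ to anything.

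In short: your route is valid and in fact sharper at the intermediate stage, but it trades two genuinely delicate calibrations for what in the paper are two one-line applications of the reflection-principle tail bound $\Pr(\sup|B_s|\ge a)\le 2e^{-a^2/(2\Delta)}$.
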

For $m=T/\Delta$, the second (ratio) term in the product recedes quickly to one as $\Delta$ approaches 0, which implies the final term dominates the bias. For $m=T/\Delta$, the final term also tends to 0.
Based on this result, as an indicative trend, the square of the relative bias (which contributes additively in the relative MSE calculation) of Algorithm \ref{alg: bootstrap particle filter} is of the order
\begin{equation*}
    \frac{\left(\mathcal{L}-\mathbb{E}(\hat{\mathcal{L}})\right)^2}{\mathcal{L}^2}\leq \text{const}(T)\times \frac{1}{\Delta} \exp\left(-\frac{1}{2\Delta}\right).
\end{equation*}
where $\hat{\mathcal{L}}$ is \eqref{eq: pf_unbiased likelihood estimates}.  This result is commented on further in Section \ref{sec: benchmark}.
\subsubsection{Further comments} \label{sec:furthercomments}
The following idea, based on \emph{Wald's identity} for sampling, was employed in \cite{fearnhead2010random} to deal with negative weights in particle filtering.
We describe it here in the context of a single step within particle filtering
and discuss its implications for estimating the likelihood. Consider
$X_{0}\sim\nu^{\theta}$ and let $G^{\theta}(x_0)$ be a non-negative function, also assumed
$\theta$ dependent, and the aim is to estimate the likelihood $L(\theta)=\mathbb{E}^{\theta}\left(G^{\theta}(X_0)\right)$. 
Assume there exists an unbiased estimate of $G^{\theta}(x_0)$ for
any $(\theta,x_{0})$ defined as follows. Let $p^{\theta}(e\vert x_{0})$
be a conditional pdf on the real line with mean $\int_{-\infty}^{\infty}ep^{\theta}(e\vert x_{0})\mathrm{d}e=G^{\theta}(x_0)$.
Given $X_{0}$, let $E^{(i)}$, $i=1,2,\ldots$, be independent
samples from $p^{\theta}(e\vert X_{0})$ and let $K=\inf\left\{ k>0:E^{(1)}+\ldots+E^{(k)}>0\right\} $.
Then $\hat{L}=\sum_{i=1}^{K}E^{(i)}$ has mean 
\[
\mathbb{E^{\theta}}(\hat{L})=\mathbb{E}^{\theta}\left(G^{\theta}(X_0) \mathbb{E}^{\theta}(K \vert X_0)\right)\neq L(\theta)\times \mathrm{constant}
\]
where product $G^{\theta}(X_0) \mathbb{E}^{\theta}(K \vert X_0)$ is Wald's identity, $\mathbb{E}^{\theta}(K\vert X_0)$ is the mean of the number of independent
draws needed to ensure positivity and the constant on the right is $\theta$ independent; a $\theta$ independent constant is needed for the method to be used for model calibration. This approach of sampling until the estimate is positive was proposed in \cite{fearnhead2010random} to 
address the event that a negative estimate is returned by $p^{\theta}(e\vert X_{0})$. The constant $\mathbb{E}^{\theta}(K\vert X_0)$
seems to play no role in a particle filtering algorithm, since the
weights are normalised before used as an input to the resampling step.
However $\mathbb{E}^{\theta}(K\vert X_0)$, which is clearly $X_0$ dependent, can be $\theta$ dependent as well, e.g.~as it would for $G^{\theta}(x_0)=\mathbb{E}^{\theta}_{x_0}\{\exp(-\int_0^{\Delta} \lambda (X_s)\mathrm{d}s) \}$ and its estimate $E^{(i)}$ returned by Algorithm \ref{alg: Random Weight Estimator Algorithm for One Particle} (for $\mathrm{PE}(\eta,0,\Delta,x_0)$) since $K$ would depend on the law of $\{X_t\}_{t}$. Also, the function $G^{\theta}(x_{0})$ can be $\theta$ dependent. As there is no easy way to
compute or remove this factor $\mathbb{E}^{\theta}(K\vert X_0)$, this precludes
its use within e.g.~a PMMH sampler which require a
(positive) unbiased estimator of $L(\theta)$ to generate MCMC samples
from the posterior density of the model parameters $\theta$. We provide some experiments in Appendix \ref{sec: wald experiments} to show that the idea of Wald's identity for sampling returns biased estimates.

We note finally that it is possible to adapt our approach slightly to return
(perfectly) unbiased estimates.  Recall that a particle filter such as
Algorithm \ref{alg: bootstrap particle filter} may return an unbiased estimate
of not only the normalising constant, but more generally of any
unnormalised path expectation \citep{del2004feynman}; that is, the quantity 
\begin{equation}
    \label{eq:unbiased_gamma}
    \hat{\mathcal{L}} \times \frac{\sum_{i=1}^N W_{m}^{(i)}
\varphi(X_{t_0^\Delta}^{(i)}, \ldots, X_{t_{m}^\Delta}^{(i)})}{\sum_{i=1}^N W_{m}^{(i)}}
\end{equation}
is an unbiased estimate of 
\begin{equation*}
\mathbb{E}\left\{ \left(\prod_{i=1}^{n_p}\lambda(X_{t_{i}})g^\theta(y_{t_{i}}\vert X_{t_{i}})\right)
  \Psi(X_{t_0^\Delta}, \ldots, X_{t_m^\Delta}) \times
  \varphi(X_{t_0^\Delta}, \ldots, X_{t_m^\Delta}) \right\}
\end{equation*}
where $\Psi(\cdot)$ is the expectation of a product of Poisson estimates of the form 
$\prod_{i=1}^m \max(0, E_i)$. This quantity would be equal to 
$\exp\left( - \int_0^T \lambda(X_s) ds\right) $ if we could replace each
truncated estimate $\max(0, E_i)$ by the estimate $E_i$ itself.

We may use this to estimate unbiasedly the marginal likelihood of an
alternative model, based on a different likelihood for the data (given the
states). In particular, consider a variant of Algorithm \ref{alg: bootstrap
particle filter} where $\max\{E_k^{(i)}, 0\}$ is replaced by $|E_k^{(i)}|$ in
line 11. (Adapt the definition of $\Psi$ accordingly.) The weights remain
non-negative, and the output remains biased (for estimating the true likelihood
$\mathcal{L}$). In \eqref{eq:unbiased_gamma}, replace $\varphi(\cdot)$ by 
$(-1)^n$,  where $n$ is the number of negative Poisson estimates $E_k$ that have
occurred while constructing the considered trajectory (the argument of
$\varphi(\cdot)$).  It is easy to see that this is an unbiased estimate of the
true likelihood $\mathcal{L}$. (Formally, $\varphi$ is then a function of both
the  state trajectory and the $E_k$ variables that have been generated while
constructing that trajectory in this case). 

In our numerical experiments, it was easy to set up the tuning parameters to
make the number of occurrences of negative weights equal to virtually zero, so
we could not observe any practical benefit in removing (entirely) the bias.
However, this approach may be kept in mind for more complicated scenarios. 

\section{Numerical Experiments}
\label{sec: numerical experiments}

In this section, we present numerical examples to compare Algorithm \ref{alg: bootstrap particle filter_discrete_time_version} and Algorithm \ref{alg: bootstrap particle filter} for likelihood estimation, smoothing and model calibration using Particle MCMC. 

\subsection{1D example with exact calculation}
\label{sec: benchmark}
We first consider a simple example in which the state $X_t$ is a one
dimensional Brownian motion and $X_t$ is observed in zero mean unit variance
Gaussian noise.
The intensity function of the Cox process is $\lambda(x) = x+10$.
The state starts at $x_0=0$ at time $t=0$ and the record of observations stops
at time $T=2$. 

The integration that defines this likelihood can be computed exactly and thus
can serve as ground truth; see Appendix \ref{sec: exact computation of likelihood function}.
We assume we observe $n_p=2$ data-points, to make it possible to
do a large number of runs; see Appendix \ref{sec: empirical relationship between rel var and delta} for extra results
with $n_p>2$. 

For the analysis below,  we use the relative mean squared error (rMSE) as the metric to measure the quality of likelihood estimates.
\begin{figure}[t!]
     \centering
     \begin{subfigure}[b]{0.48\textwidth}
         \centering
         \includegraphics[width=\textwidth]{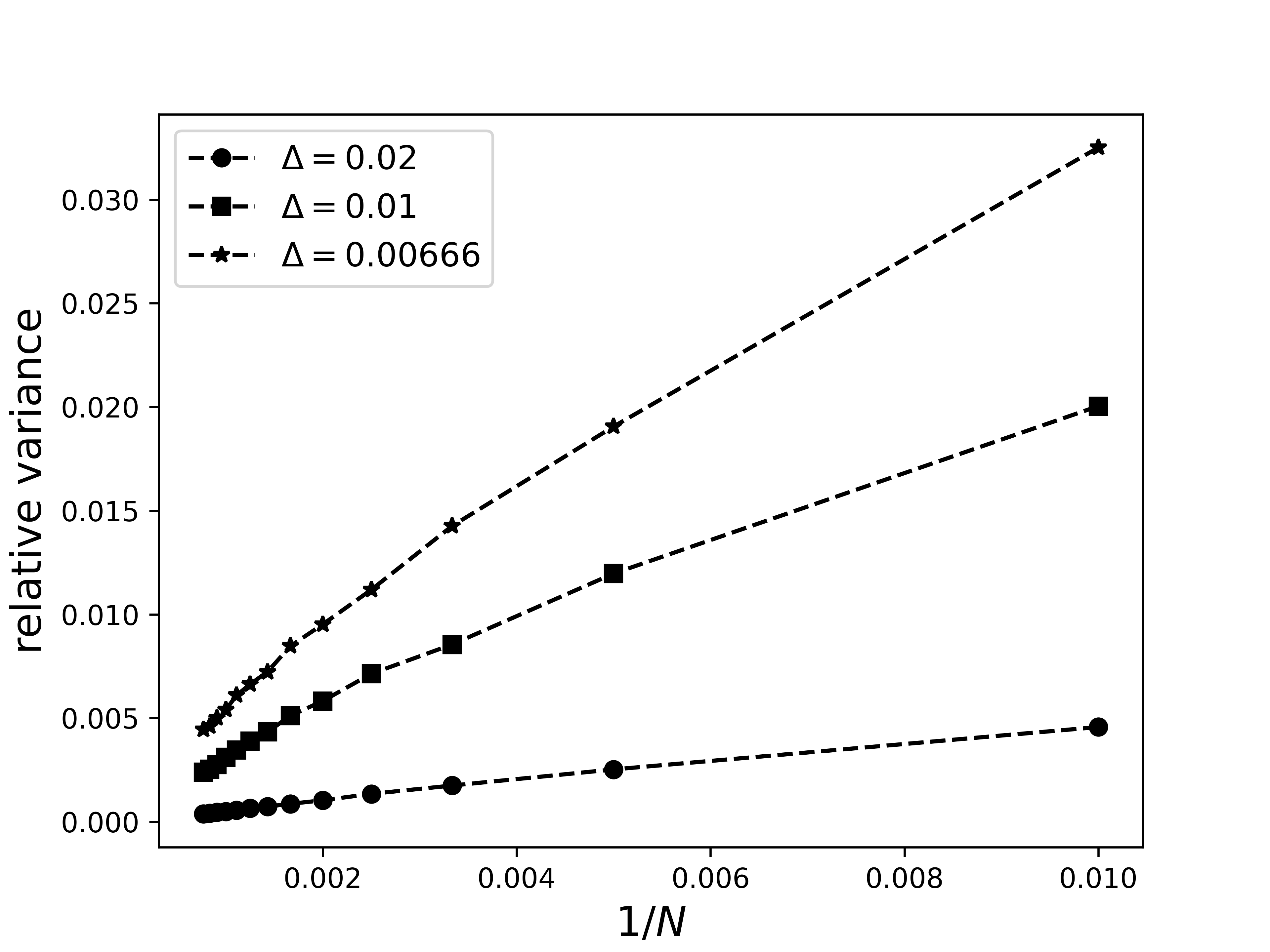}
         \caption{}
         \label{fig:relative variance vs Delta}
     \end{subfigure}
     \hfill
     \begin{subfigure}[b]{0.48\textwidth}
         \centering
         \includegraphics[width=\textwidth]{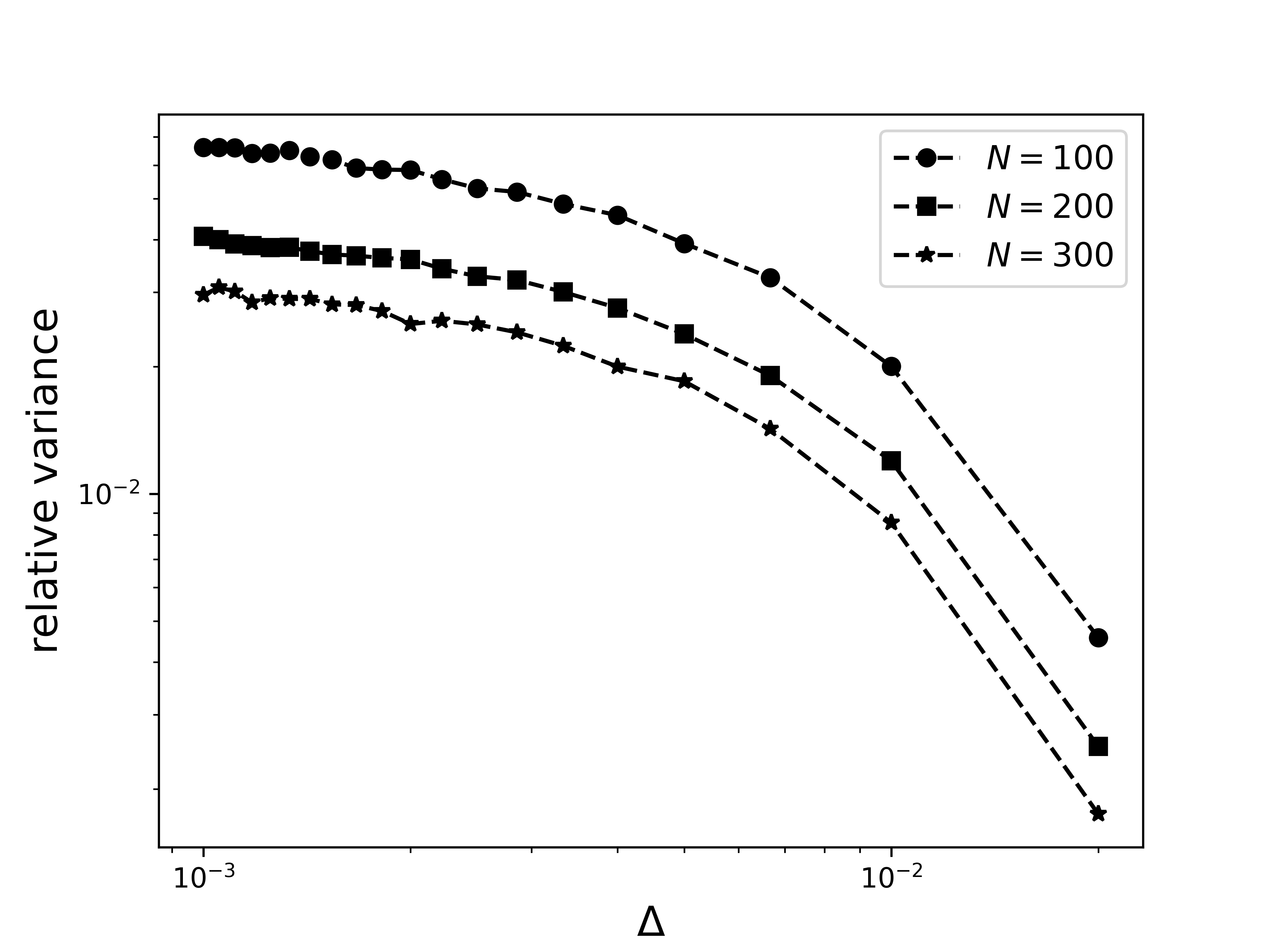}
         \caption{}
         \label{fig:relative variance vs N}
     \end{subfigure}
     \caption{Plot of relative variance, defined to be $\mathbb{E}\{(\hat{\mathcal{L}}_\Delta)^2\}/\mathcal{L}_\Delta^2-1$, for $\hat{\mathcal{L}}_\Delta$ given by Algortihm \ref{alg: bootstrap particle filter_discrete_time_version}. In (a) versus $1/N$ and in (b) as $\Delta$ varies on a log scale. }
\end{figure}
Numerical results displayed in Figure \ref{fig:relative variance vs Delta}
shows that the estimate of $\mathcal{L}_\Delta$ given by Algorithm \ref{alg:
bootstrap particle filter_discrete_time_version}, for any $\Delta$, has a
relative variance which is inversely proportional to the number of particles
$N$ used in the particle filter, where the relative variance defined to be 
$\mathbb{E}\{(\hat{\mathcal{L}}_\Delta)^2\}/\mathcal{L}_\Delta^2-1$. 
(We note though that the slope varies very slightly with $1/N$.) In Figure \ref{fig:relative variance vs N}, as expected, the relative variance for a fixed $N$ stabilises as $\Delta$ decreases. (This reason is that a time-discretised particle system with systematic resampling converges to a continuous time limit as $\Delta$ approaches zero, as recently shown in \cite{chopin2022resampling}.) 
For any sufficiently smooth function $\lambda(\cdot)$, the weak error of Euler scheme (i.e. relative bias  $(\mathcal{L}_\Delta/\mathcal{L})-1$ in our case) is at most of order $\Delta$ \citep[Chapter~17]{kloeden2011numerical}. Overall, this implies the following empirical relationship for all values of $n_p$ when $\Delta$ is small:
\begin{alignat}{1}
    \textrm{rMSE} &= \frac{1}{\mathcal{L}^2} \mathbb{E}\left\{\left(\hat{\mathcal{L}}_{\Delta} - \mathcal{L} \right)^2 \right\} = \frac{c_1}{N} + c_2\Delta^2= \frac{c_1 }{\mathcal{C}\Delta} + c_2\Delta^2
    \label{eq: empirical relationship}
\end{alignat}
where $\mathcal{C}$ denotes the CPU time spent to run the particle filter (Algorithm \ref{alg: bootstrap particle filter_discrete_time_version}) to completion. 
In the last equality, we use the relationship that $\mathcal{C}$ increases linearly with $NT/\Delta$ which corresponds to $T/\Delta$ propagation steps for $N$ particles. (Figure \ref{fig:Ushape} confirms \eqref{eq: empirical relationship}.) For fixed CPU time of $\mathcal{C}$, the value of $\Delta$ that minimises the relative MSE is $\Delta^*=\left(\frac{c_1}{c_2\mathcal{C}}\right)^{\frac{1}{3}}$. Substituting this $\Delta^*$ into \eqref{eq: empirical relationship} gives the best relative MSE value for each $\mathcal{C}$, which is of order $\mathcal{O}(\mathcal{C}^{-\frac{2}{3}})$ and confirmed in Figure \ref{fig:Comparison with random weight method}. Similarly, we can apply the same idea to determine $\Delta$ that minimises the relative MSE for Algorithm \ref{alg: bootstrap particle filter},
\begin{align}
    \text{rMSE}&\leq \frac{c_1}{N}+\frac{c_2}{\Delta} \exp\left(-\frac{1}{2\Delta}\right)=\frac{c_1}{\mathcal{C}\Delta}+\frac{c_2}{\Delta}\exp\left(-\frac{1}{2\Delta}\right)
    \label{eq: rMSE of unbiased method}
\end{align}
Since the minimisation problem above cannot be solved exactly, one can pursue a surrogate for $\Delta^*$, in its vicinity, by minimising
\begin{equation*}
    f(\Delta) = \frac{c_1}{\mathcal{C}\Delta}+c_2\exp\left(-\frac{1}{2\Delta}\right).
\end{equation*}
Note that $\text{rMSE}(\Delta)>f(\Delta),\quad \forall 0<\Delta<1$. Minimising this equation gives $\Delta^*=\left(2\log\left(\frac{c_2\mathcal{C}}{2c_1}\right)\right)^{-1}$.
 Hence $\Delta^*$, not being the true minimiser of \eqref{eq: rMSE of unbiased method}, is a more conservative solution. Substituting this $\Delta^*$ into \eqref{eq: rMSE of unbiased method} gives an indication of the best relative MSE value for each $\mathcal{C}$, which is of order of $\mathcal{O}\left({\mathcal{C}}^{-1}{\log(\mathcal{C})}\right)$. In practice, we do not recommend this optimisation but rather choose $(\Delta,\eta)$ as  detailed in Section \ref{sec:design_choice} and then stick to this choice even if more CPU time $\mathcal{C}$ has become available.
\begin{figure}[t!]
     \centering
     \begin{subfigure}[b]{0.48\textwidth}
         \centering
         \includegraphics[width=\textwidth]{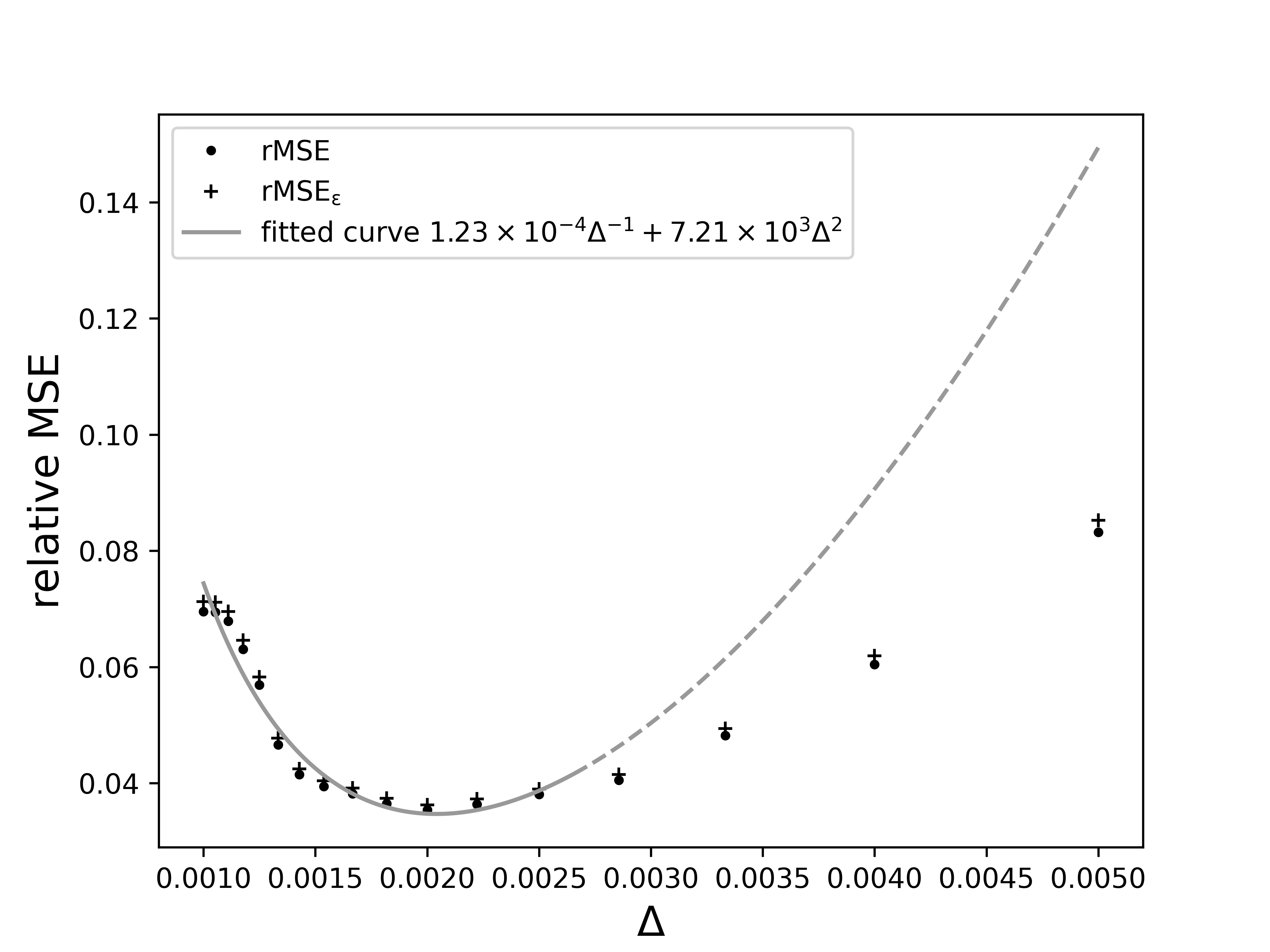}
         \caption{}
         \label{fig:Ushape np=2}
     \end{subfigure}
     \hfill
     \begin{subfigure}[b]{0.48\textwidth}
         \centering
         \includegraphics[width=\textwidth]{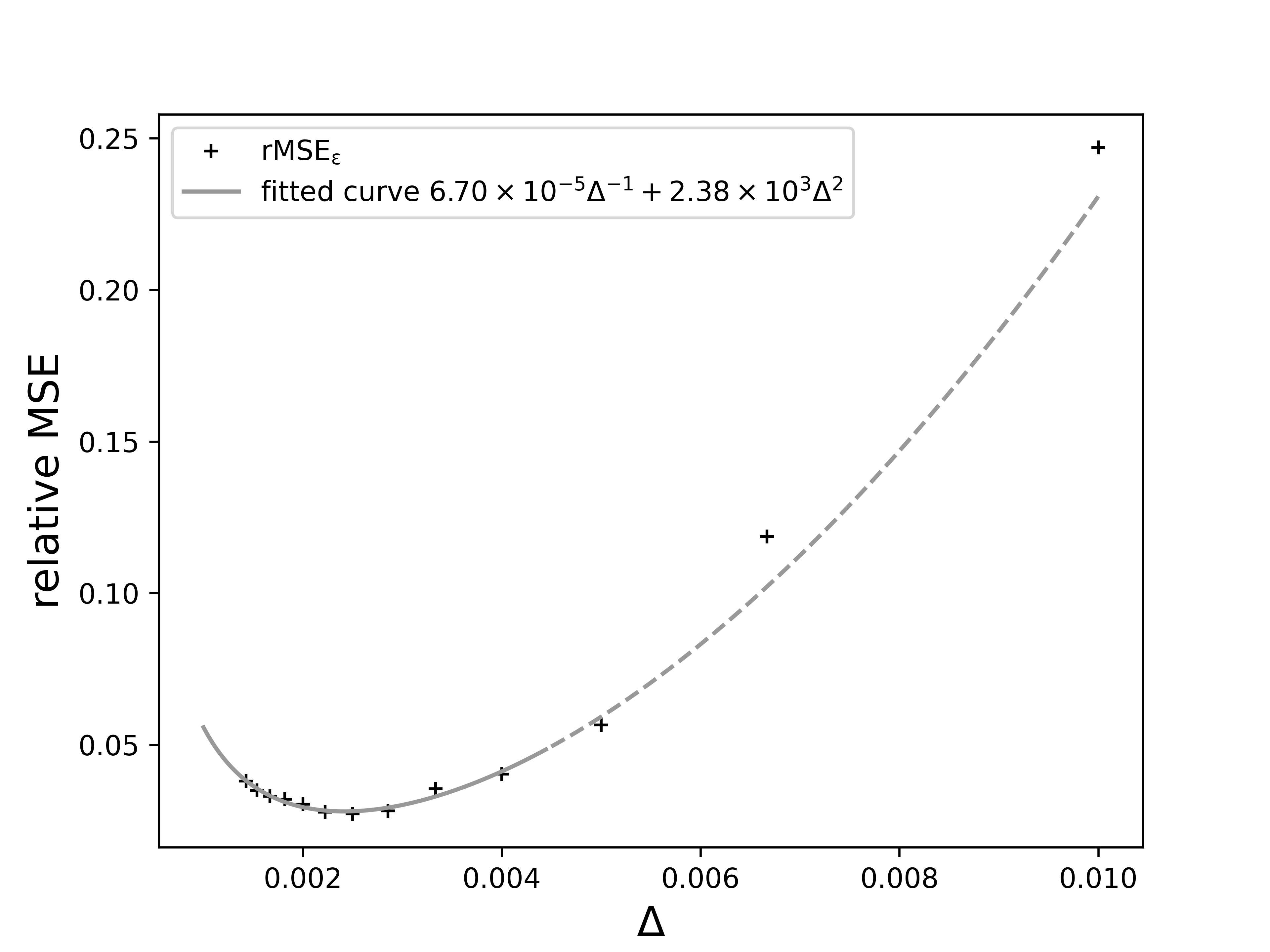}
         \caption{}
         \label{fig:Ushape np=13}
     \end{subfigure}
     \caption{Plot of relative MSE versus $\Delta$ for fixed $1.5s$ CPU time for likelihood estimates computed by Algorithm \ref{alg: bootstrap particle filter_discrete_time_version} (using \eqref{eq: pf_biased likelihood estimates}) for (a) $n_p=2$ and (b) $n_p=13$.
     Overlaid is the fitted relationship in \eqref{eq: empirical relationship} to a range of $\Delta$ values around the minimum of rMSE, illustrated by the solid segment of the line. The dashed segment cover data points there were not used in fitting. The coefficients of the fitted curve to $\text{rMSE}_\epsilon$ are only 2.4\% different to that of the rMSE data points, hence it would be indistinguishable graphically.}
     \label{fig:Ushape}
\end{figure}
We define $\text{rMSE}_\epsilon$ to be \eqref{eq: empirical relationship} with $\mathcal{L}$ replaced with $\mathcal{L}_{\textrm{MC}}=\mathcal{L} + \epsilon$.
Recall we denote by $\mathcal{L}_{\text{MC}}$ the Monte Carlo estimate returned by the modified Algorithm 3 which uses the true path integral given by \eqref{eq:exact_E_k} in Appendix rather than Poisson estimate. We ensure the Monte Carlo error $\epsilon$ is small enough so that our conclusions in comparing the accuracy of Algorithms \ref{alg: bootstrap particle filter_discrete_time_version} and \ref{alg: bootstrap particle filter} are not rendered inaccurate for the case $n_p>2$ studied below. We use the $n_p=2$ case to choose a value of $\epsilon$ that ensures the  best $\Delta$ found using $\text{rMSE}_\epsilon$ is close enough to the desired (best) $\Delta$ for $\text{rMSE}$.

Continuing with for $n_p=2$, Figure \ref{fig:Ushape np=2} reports the rMSE and $\text{rMSE}_\epsilon$ of Algorithm \ref{alg: bootstrap particle filter_discrete_time_version} for a fix CPU budget and different $\Delta$ values with the expected relationship in \eqref{eq: empirical relationship} fitted to a range of $\Delta$ values around the minimum. $\text{rMSE}_{\epsilon}$ uses $\mathcal{L}_{\textrm{MC}}$ which is the average estimate of $\mathcal{L}$ given by $10^6$ runs of modified Algorithm \ref{alg: bootstrap particle filter} with each run using $N=10^6$ particles. We calculate the relative error between $\Delta_\epsilon^*$ and $\Delta^*$, and between $\text{rMSE}_\epsilon(\Delta^*_\epsilon)$ and $\text{rMSE}(\Delta^*)$, using their fitted $c_1$'s and $c_2$'s values,
\begin{align*}
    \left\vert\frac{\Delta_\epsilon^*-\Delta^*}{\Delta^*}\right\vert=1.1\times 10^{-9},\quad\left\vert \frac{\text{rMSE}_\epsilon(\Delta^*_\epsilon)-\text{rMSE}(\Delta^*)}{\text{rMSE}(\Delta^*)}\right\vert=0.024.
\end{align*}
This shows that $10^6$-averaged runs of modified Algorithm \ref{alg: bootstrap particle filter} with $N=10^6$ particles is more than sufficient to produce accurate estimate $\mathcal{L}_{\text{MC}}$ as the substitute of $\mathcal{L}$. We used the same number of Monte Carlo repetitions and $N$ for values of $n_p>2$ up to $n_p=13$, which are reported in  Figure \ref{fig:Ushape np=13}. Both Figure \ref{fig:Ushape np=2} and \ref{fig:Ushape np=13} validate the expression for the rMSE \eqref{eq: empirical relationship} in the locality of the minimum $\Delta$.
\begin{figure}[t!]
    \centering
    \includegraphics[width=0.6\linewidth]{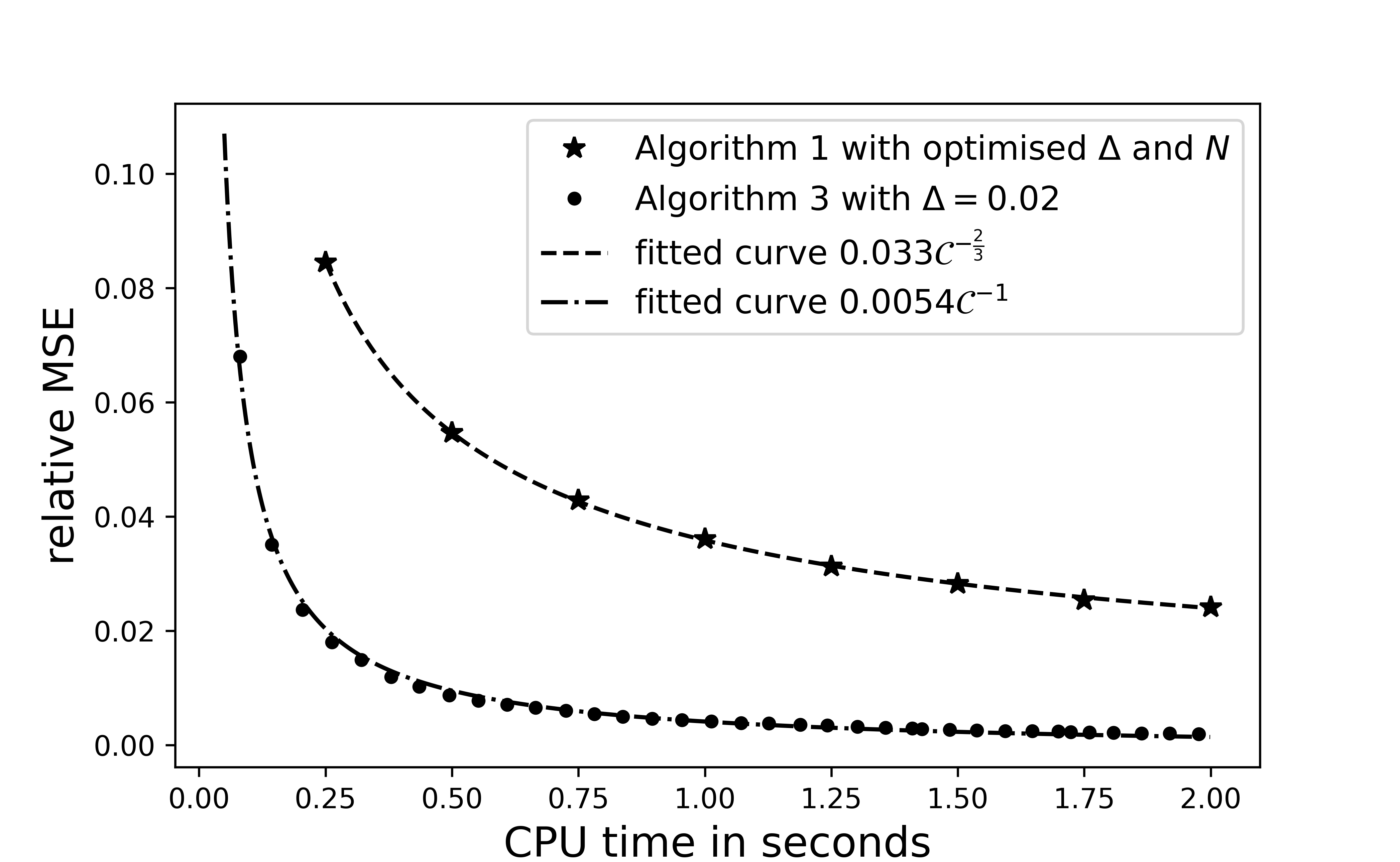}
    \caption{Comparison between likelihood estimates computed by Algorithm \ref{alg: bootstrap particle filter_discrete_time_version} (using \eqref{eq: pf_biased likelihood estimates}) and by Algorithm \ref{alg: bootstrap particle filter} (using \eqref{eq: pf_unbiased likelihood estimates}). The true likelihood is approximated with Algorithm \ref{alg: bootstrap particle filter} using $N=10^4$ particles and $\Delta=0.02$. Note that Algorithm 1 uses optimised $\Delta$ and $N$ to obtain the best MSE for a given CPU budget. Algorithm \ref{alg: bootstrap particle filter} uses fixed $\Delta$ value $\Delta=0.02$ and the design choice for $\eta$ described in Section \ref{sec:design_choice}.}
    \label{fig:Comparison with random weight method}
\end{figure}
We continue to use $\mathcal{L}_{\text{MC}}$ to compare Algorithms \ref{alg: bootstrap particle filter_discrete_time_version} and \ref{alg: bootstrap particle filter}. We use $\mathcal{L}_{\text{MC}}$ to find the smallest relative MSE  Algorithm \ref{alg: bootstrap particle filter_discrete_time_version} can achieve for a given CPU budget, while we use $\mathcal{L}_{\text{MC}}$ to compute the relative MSE of  Algorithm \ref{alg: bootstrap particle filter} for the same CPU budget. For Algorithm \ref{alg: bootstrap particle filter_discrete_time_version}, for each value of $\mathcal{C}$, we repeat the procedure illustrated in Figure \ref{fig:Ushape} to find the $\Delta$ that yields the smallest $\text{rMSE}_{\epsilon}$ -- this $\Delta$ is the minimiser of the fitted line as illustrated in Figure \ref{fig:Ushape np=13}. For Algorithm \ref{alg: bootstrap particle filter}, we spend the budget on increasing the number of particles $N$ while using a fixed $\Delta$ value of $\Delta=0.02$. The results of this comparison are shown in Figure \ref{fig:Comparison with random weight method}. It appears that Algorithm \ref{alg: bootstrap particle filter} achieves the best decay rate of rMSE with CPU budget, which is the inverse relationship, whereas Algorithm \ref{alg: bootstrap particle filter_discrete_time_version} can only achieve a rate of $\mathcal{C}^{-2/3}$.
\subsection{3D Single Molecule Model}
\label{sec: 3D single molecule model}
In this section we apply our methodology to track a moving biological molecule
(biomolecule) in a live cell, in three dimensions, arising from single molecule
fluorescence microscopy.  An illustration of how the data is generated is given
Figure \ref{fig:setup}. Single molecule fluorescence microscopy is a live cell
imaging technique where biomolecules of interest are tagged with a fluorophore,
which are then excited with light at a particular frequency. These molecules
fluoresce under excitation and emit light a different frequency, which is then
captured by a CCD camera after optical magnification. The recorded images are
used to uncover their motion. A mathematical abstraction of the problem is
precisely the model in Section \ref{sec: discrete time HMM}, see also
\cite{ober2020quantitative} and \cite{d2022limits}. In particular, the moving
molecule follows a diffusion model and its observations are the (random)
arrival times and locations of individual photons. The data are both the
arrival times and locations of the photons. The photon arrival times are
governed by the depth of the molecule (see Figure \ref{fig:schematics for total
internal reflection}) as the excitation of the molecule varies inversely with
the molecule's depth due to the attenuation of the excitation light.  Photon
arrival locations are imprecise (noise corrupted) observations of the
molecule's location in the other two dimensions as governed by diffraction
theory. The relevant photon location model is the Born and Wolf model for the
point spread function, which describes how a point light source appears in an
image as it moves in and out of focus \citep{ober2020quantitative}.
\begin{figure}[t!]
    \centering
    \includegraphics[width=0.7\linewidth]{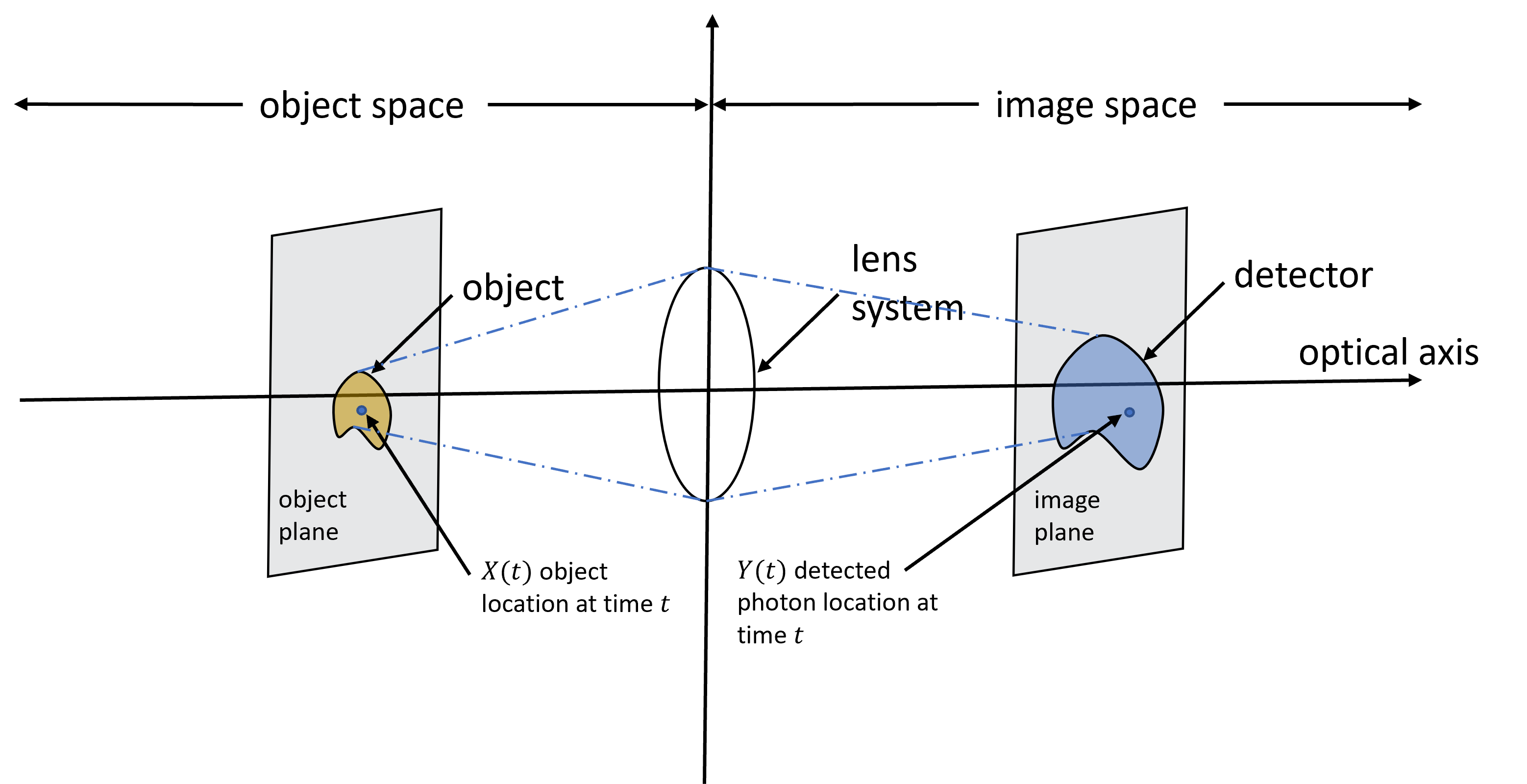}
    \caption{Illustration of the setup about how the image is acquired by a microscope. $X(t)$ denotes the object location at time $t$ and $Y(t)$ denotes the detected photon location at time $t$. }
    \label{fig:setup}
\end{figure}
We let $(X_t)_{0\leq t\leq T}:=(X_{1,t},X_{2,t}, X_{3,t})^\top_{0\leq t\leq T}$ denote the true, 3-dimensional location of the molecule at time $t$. The three components of the molecule state, i.e.$(X_{1,t}, X_{2,t}, X_{3,t})^\mathrm{T}_{0\leq t\leq T}$, are its $(x,y,z)^\top$ location and assumed to follow the \textit{Ornstein-Uhlenbeck} (O-U) model,
\begin{align*}
    dX_{i,t} =-\phi_i(X_{i,t}-\mu_i)dt+dW_{i,t},\qquad \text{for }i=1,2,3
\end{align*}
where $\phi_i>0$ and $(W_{i,t})_{0\leq t\leq T}$, $i=1,2,3$, are independent
Brownian motions. We assume that the initial distribution which generates $X_0$
is $\mathcal{N}(\mu, \Sigma_0)$,
where the covariance matrix $\Sigma_0=p_0\times \mathbb{I}_{3\times 3}$.
The transition density $f^\theta_{\delta}(x'|x)$ of the process can be expressed as follows,
\begin{align}
    X_{i,t+\delta}|&\big(X_{i,t}=x_{i}\big)\sim \mathcal{N}
    \big(\mu_{i}+e^{-\delta \phi_i}(x_{i}-\mu_{i}),
    \frac{1}{2\phi_i}(1-e^{-2\delta\phi_i})\big), \qquad i=1,2,3.
    \label{eq: f12 density}
\end{align}
For an object located at $(x_1,x_2,x_3)^\mathrm{T}\in \mathbb{R}^3$ in the object space (prior to magnification), the location (on the detector) at which a photon is detected is specified probabilistically with $2D$ probability density function,
\begin{equation}
    g^\theta(y|x):=\frac{1}{|M|}q_{x_3}\big(M^{-1}y-(x_{1},x_{2})^\intercal\big), \qquad y\in\mathbb{R}^2
    \label{eq: photon distribution profile}
\end{equation}
where $M\in \mathbb{R}^{2\times 2}$ is an invertible lateral magnification
matrix and the image function $q_{x_3}:\mathbb{R}^2\rightarrow \mathbb{R}$
describes the image of an object in the detector space when that object is
located at $(0,0,x_3)$ in the object space, where $x_3\in \mathbb{R}$ is the
location of the object on the optical axis. This 3D Born and Wolf model is the
resulting image function, derived from diffraction theory, for a point source
that can also be out of focus \citep{born2013principles}. For $(x_1,x_2)\in \mathbb{R}^2$, 
\begin{equation}
    q_{x_3}(x_1,x_2)=\frac{4\pi n_\alpha^2}{\lambda_e^2}\left \vert \int^1_0J_0(\frac{2\pi n_\alpha}{\lambda_e}\sqrt{x_1^2+x_2^2} \rho)\exp(\frac{j\pi n_\alpha^2 x_3}{n_0\lambda_e}\rho^2)\rho d\rho \right\vert^2,
    \label{eq: Born and WOlf profile}
\end{equation}
where  $n_0$ is the refractive index of the objective lens immersion medium and $n_\alpha$ is the numerical aperture of the objective lens. $\lambda_e$ is the emission wavelength of the molecule. $J_0(\cdot)$ and $J_1(\cdot)$ represents the zero-th order and the first order Bessel function of the first kind, respectively. The probability density functions of Born and Wolf model at different defocus levels are plotted in Figure \ref{fig: BW_pdf_different defocused levels}. A large defocus tends to produce images of poor quality and this will further pose difficulty in estimating the molecule's position.

The time instances at which the photons arrive on the detector are random as
well, and the photon emission process is modelled as a Poisson process \citep{ober2020quantitative}. The photon rate, denoted as $\lambda(t)$, is the rate at which photons are emitted by the object at time $t$ is often assumed to be constant. For example, \cite{d2022limits} and \cite{vahid2020fisher} applied particle filtering to jointly calibrate the model and localise the single molecule under the assumption that the molecule is static on the optical axis, or at least their movement on the optical axis has an insignificant effect on the photon rate. In contrast to their modelling assumption, we follow the approach of \cite{szalai2021three} to incorporate movement in all three coordinates. The molecule's depth effects the photon arrival rate and arrival locations, the former through a state (depth) dependent photon detection rate $\lambda(X_t)$ and the latter through the 3D Born and Wolf model. 
\begin{figure}[t!]
     \centering
     \begin{subfigure}[b]{0.48\textwidth}
         \centering
         \includegraphics[width=\textwidth]{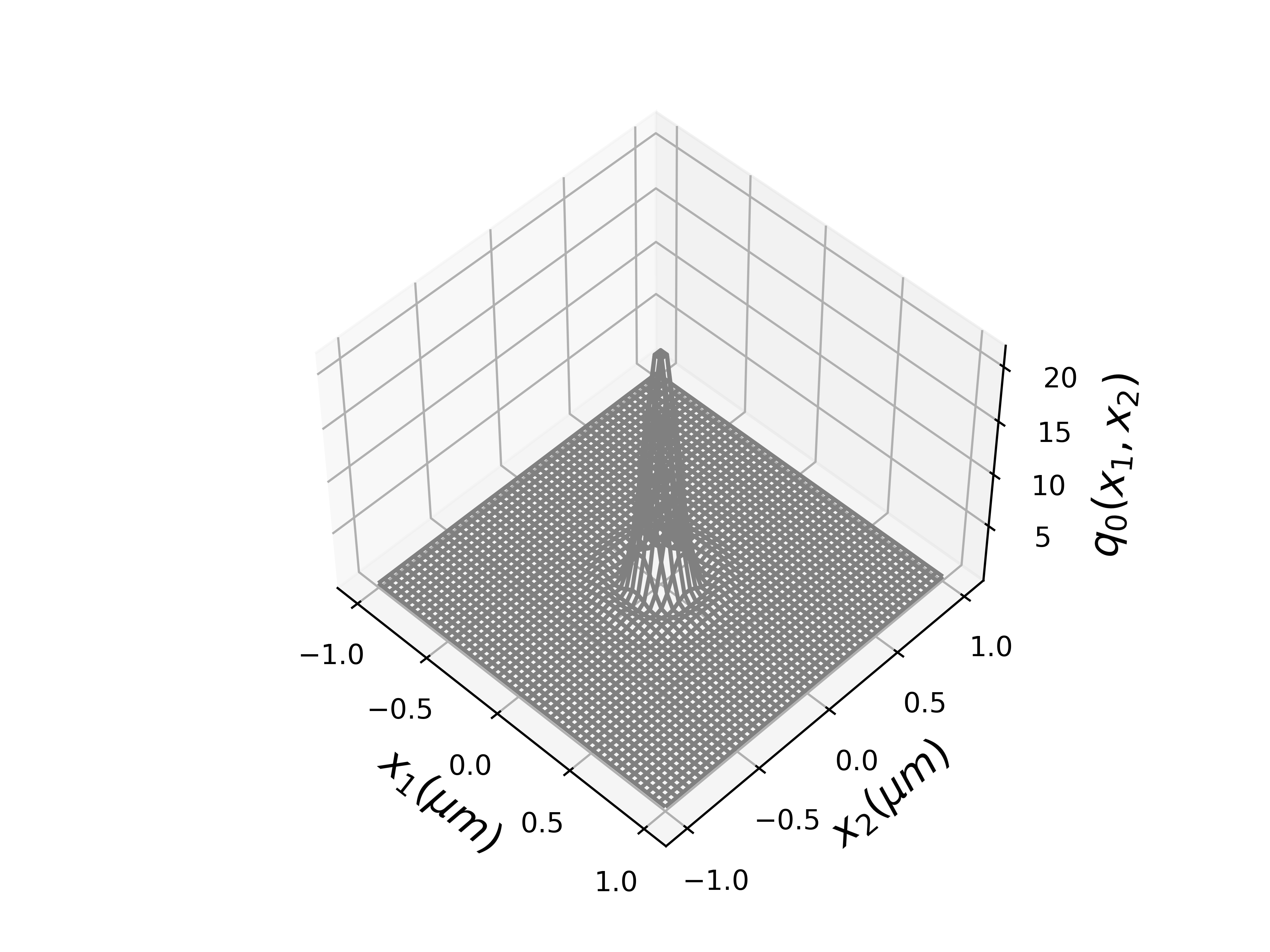}
         \caption{$x_3=0\mu m$}
         \label{fig:BW_pdf_z0}
     \end{subfigure}
     \hfill
     \begin{subfigure}[b]{0.48\textwidth}
         \centering
         \includegraphics[width=\textwidth]{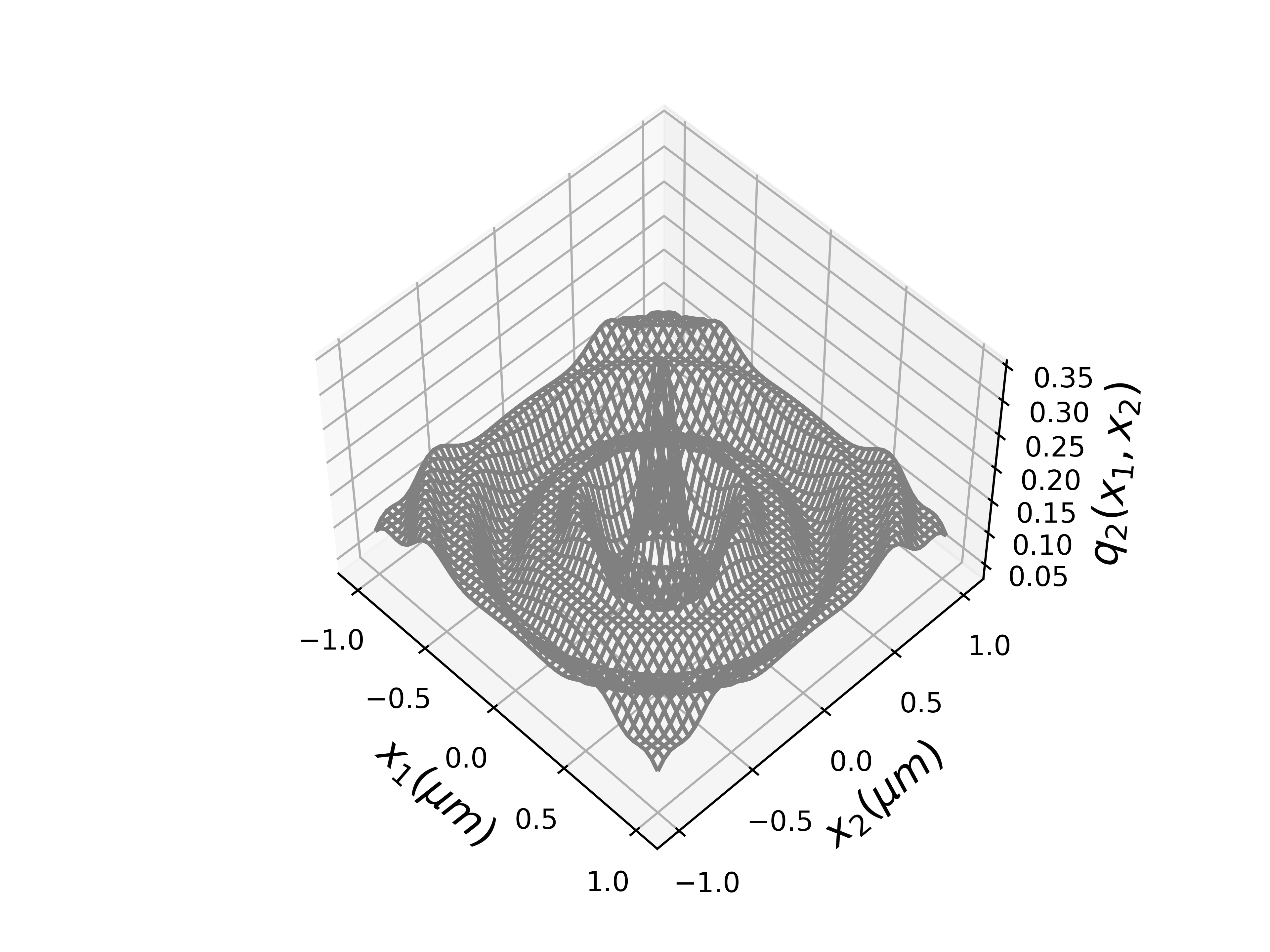}
         \caption{$x_3=2\mu m$}
         \label{fig:BW_pdf_z2}
     \end{subfigure}
     \begin{subfigure}[b]{0.48\textwidth}
         \centering
         \includegraphics[width=\textwidth]{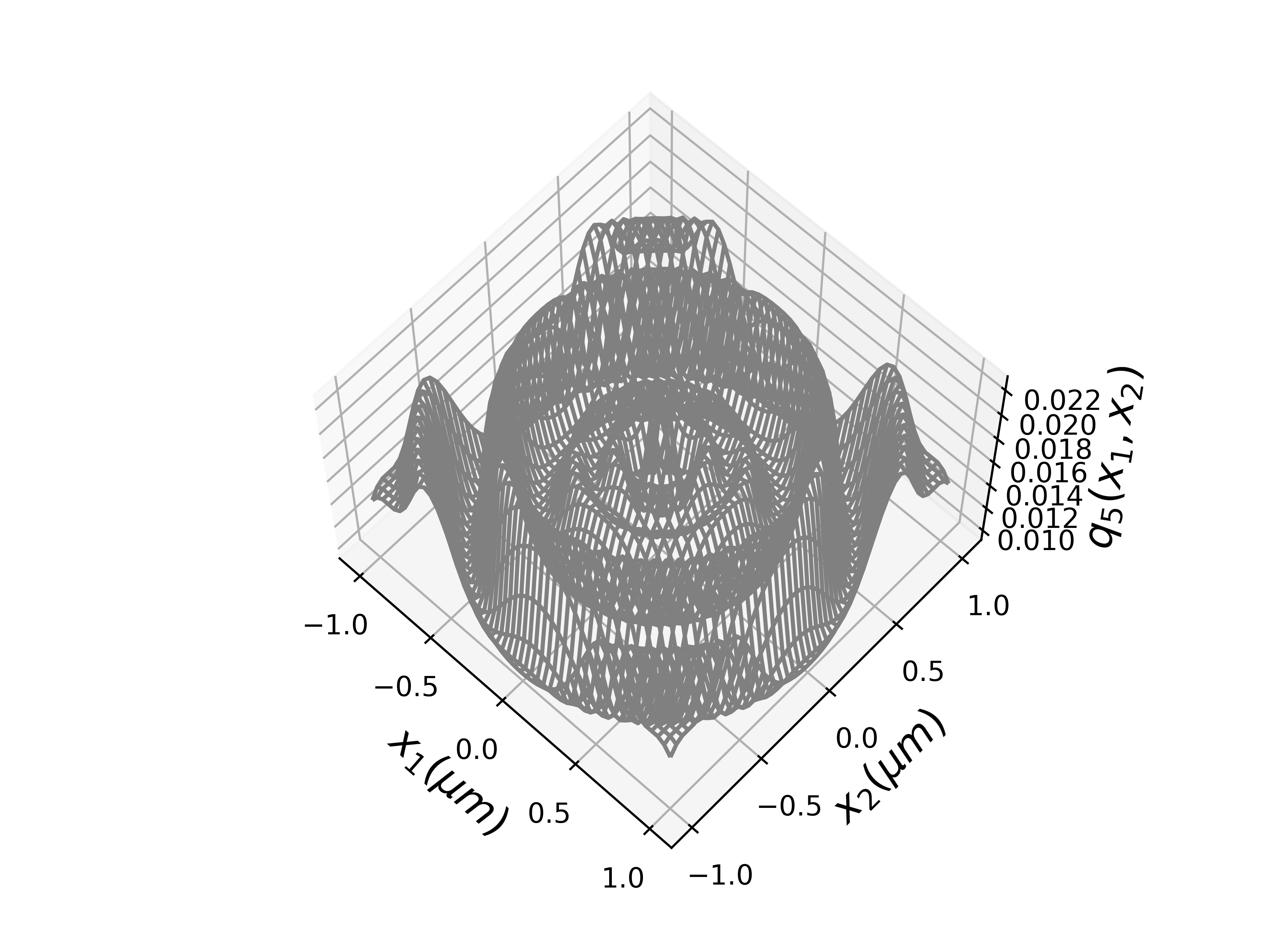}
         \caption{$x_3=5\mu m$}
         \label{fig:BW_pdf_z5}
     \end{subfigure}
     \begin{subfigure}[b]{0.48\textwidth}
         \centering
         \includegraphics[width=\textwidth]{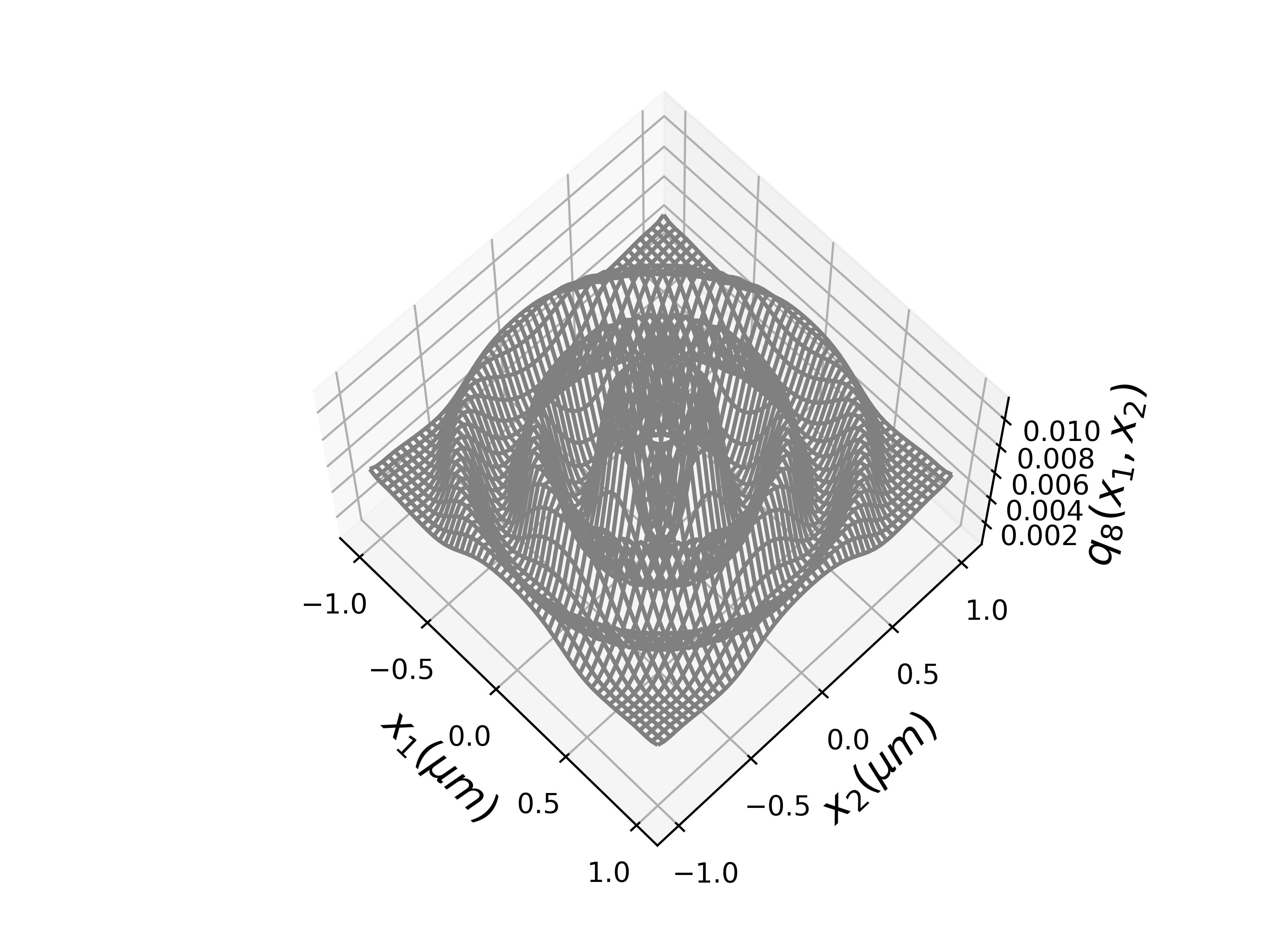}
         \caption{$x_3=8\mu m$}
         \label{fig:BW_pdf_z8}
     \end{subfigure}
     \caption{Born and Wolf point spread function at different defocus levels. Mesh representations are shown for \eqref{eq: Born and WOlf profile} at different defocuses $x_3$, computed with wavelength $\lambda_e=0.52\mu m$, numerical aperture $n_\alpha = 1.4$, refractive index of the objective lens immersion medium $n_0=1.515$. The $x_3$ values shown correspond to point source positions (a) $x_3=0\mu m$ (in focus), (b) $2\mu m$, (c) $5\mu m$ and (d) $8\mu m$.}
     \label{fig: BW_pdf_different defocused levels}
\end{figure}
\begin{figure}[t!]
    \centering
    \includegraphics[width=0.7\linewidth]{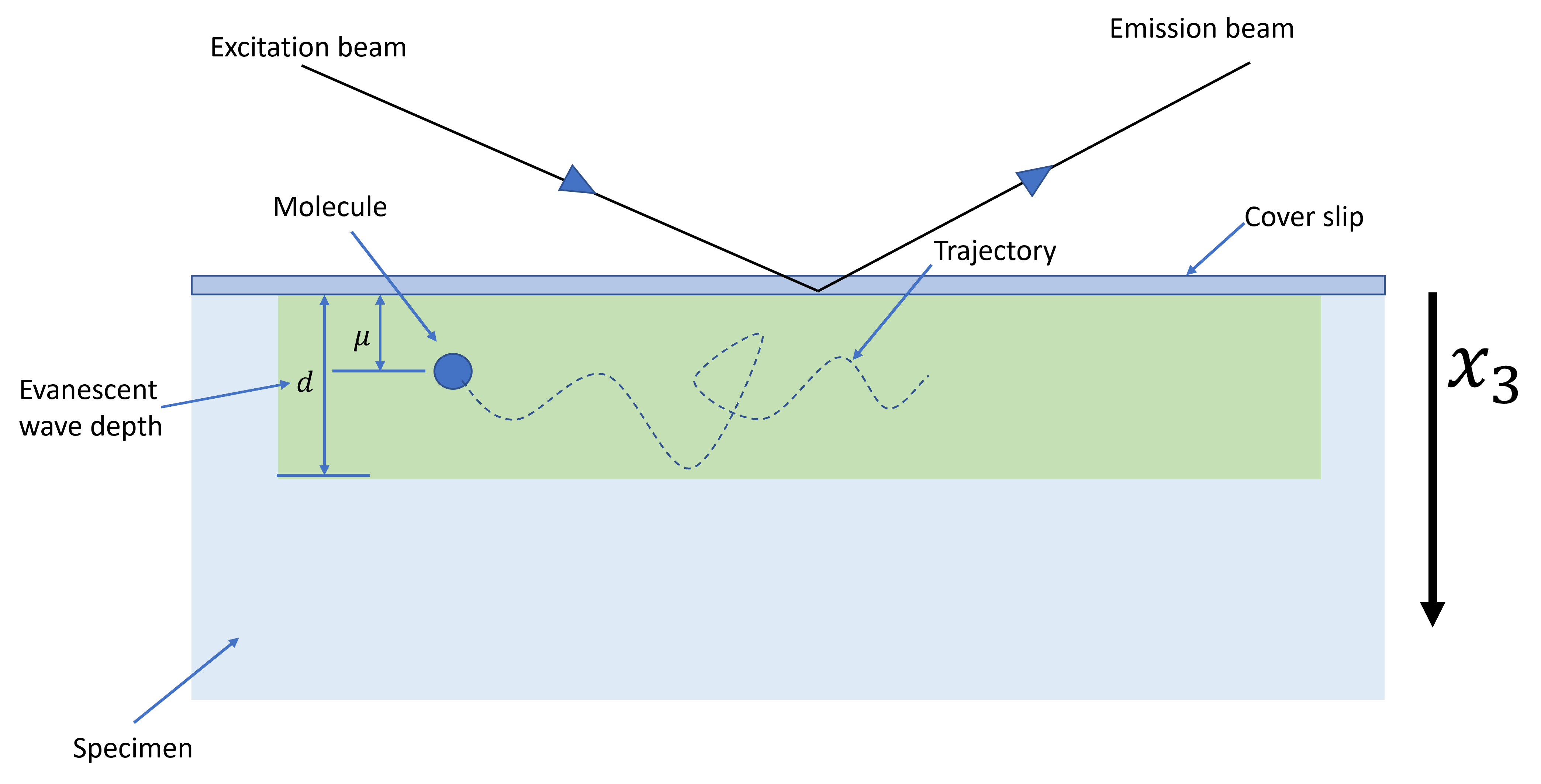}
    \caption{Total internal reflection fluorescence. An excitation beam with wavelength $\lambda_v$ traveling from a high refractive index ($n_1$) medium into a lower refractive index ($n_2$) medium is totally internal reflected at a planar interface. The reflection generates a thin layer of light in the lower refractive index medium, whose intensity decays exponentially along the $x_3$-axis with a characteristic constant $d$. While the molecule moves in the field, it is illuminated/excited and thus fluoresces. Parameter $\mu$ along $x_3$-direction is the mean of $X_{3,t}$ which the molecule diffuses about.}
    \label{fig:schematics for total internal reflection}
\end{figure}
Figure \ref{fig:schematics for total internal reflection} illustrates the total internal reflection phenomenon. For comprehensive review, see \cite{axelrod2001total}. This phenomenon is caused by the surface-associated evanescent electromagnetic field that is generated when an excitation beam is internally reflected at a planar interface between two transparent mediums with different refractive indices, $n_1$ and $n_2$. 
\cite{szalai2021three} have shown that the photon detection rate $\lambda(\cdot)$ decays exponentially along $x_3$-axis,  $\lambda(x_3) = \lambda_0 \exp(-\frac{x_3}{d}),
    \label{eq: intensity function}$
where $\lambda_0$ denotes the rate of photons emitted by a fluorophore at $x_3=0$. 
Under this set-up, one could consider $x_3$-movement of a single molecule as a \textit{reflected diffusion process} since the molecule should reflect when it encounters the cover slip. However, there is no practical approach available for exact simulation of this reflected process. For instance, \cite{blanchet2018exact}'s approach requires infinite expected running time, hence impractical when adapting into the particle filtering algorithms. 
We adopt a simpler approach by assuming standard O-U process which would be suitable if the molecule does not encounter a boundary (i.e. either cover slip $x_3=0$ or its maximum  depth $d$) over its observation period, e.g. if the observation periods are short and/or the molecule is diffusing about a mean depth $\mu$ in the middle of the cell with large $\phi_3$ (i.e. stronger attraction to $\mu$), see Figure \ref{fig:schematics for total internal reflection}.
\begin{figure}[t!]
     \centering
     \begin{subfigure}[b]{0.48\textwidth}
         \centering
         \includegraphics[width=\textwidth]{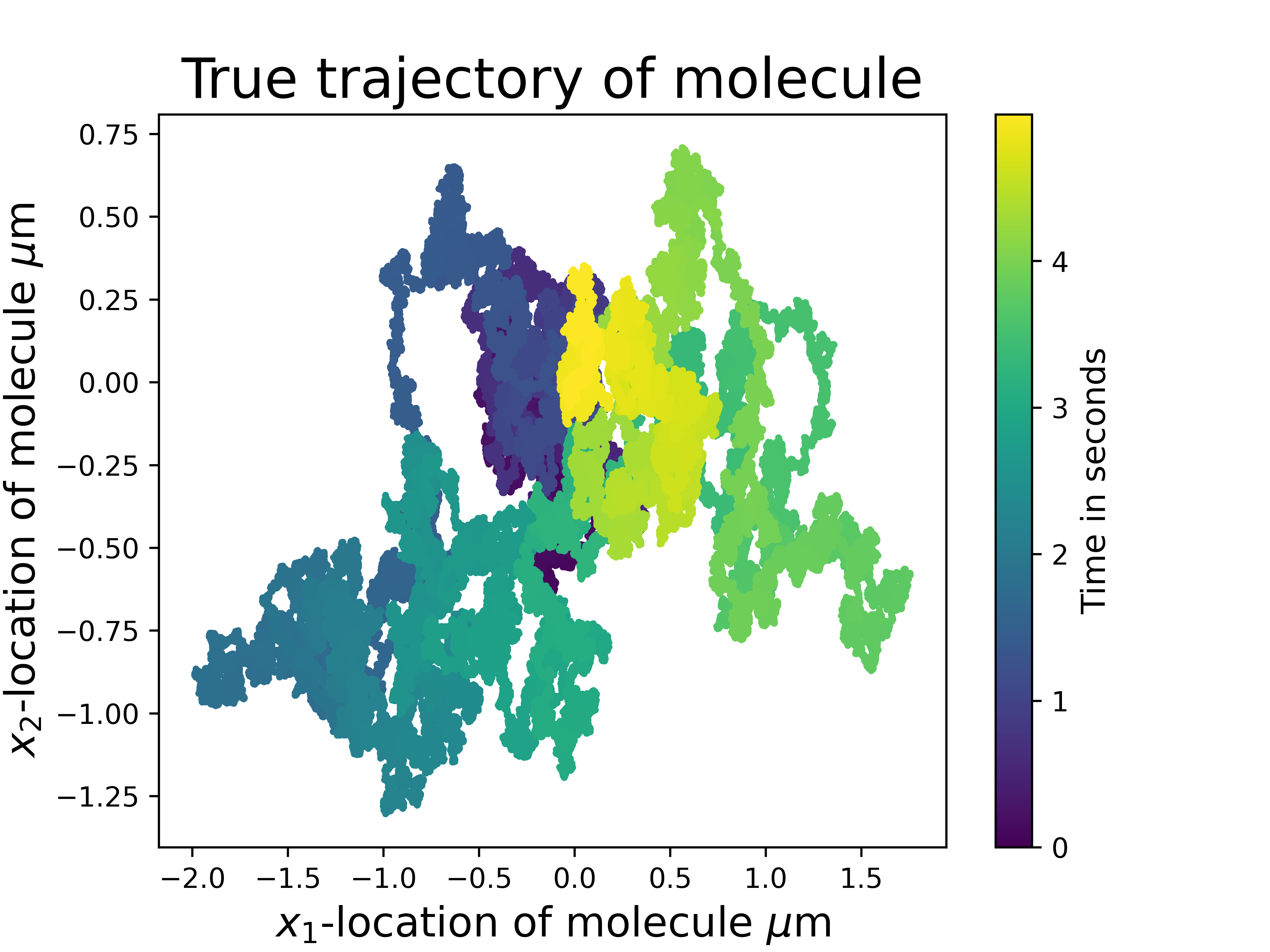}
         \caption{}
         \label{fig:true trajectory}
     \end{subfigure}
     \hfill
     \begin{subfigure}[b]{0.48\textwidth}
         \centering
         \includegraphics[width=\textwidth]{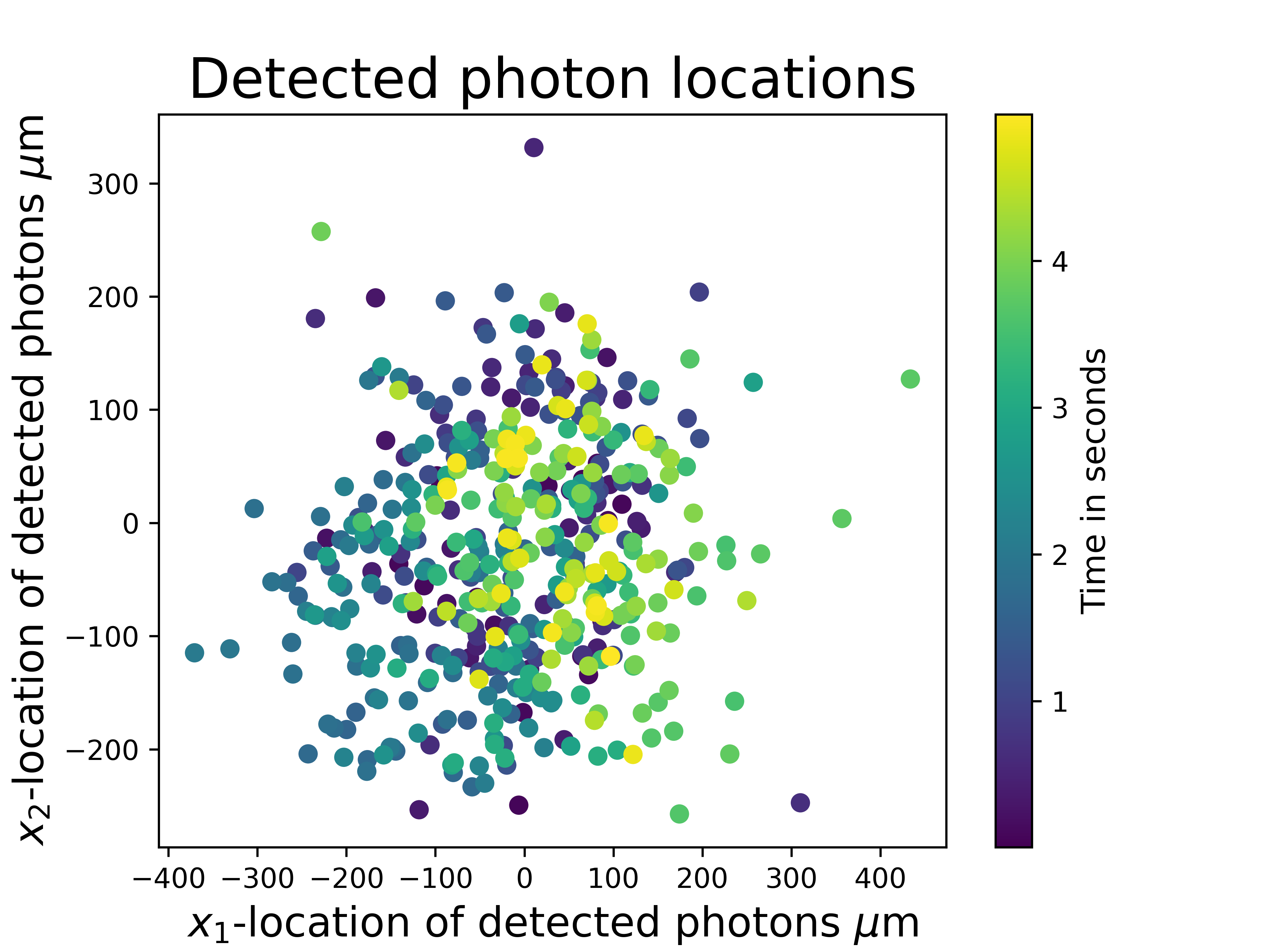}
         \caption{}
         \label{fig:BW model chapter 3}
     \end{subfigure}
     \begin{subfigure}[b]{0.48\textwidth}
         \centering
         \includegraphics[width=\textwidth]{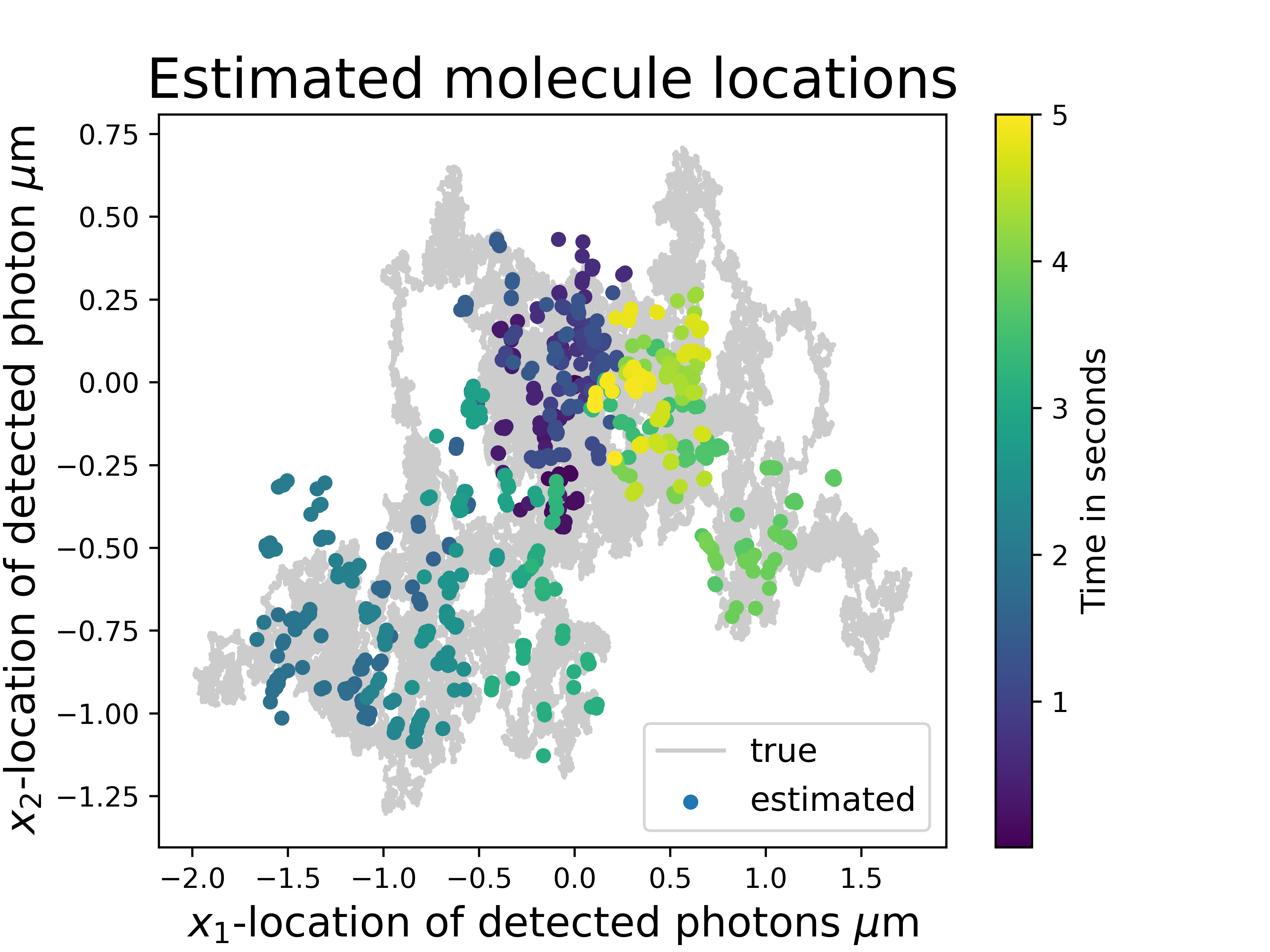}
         \caption{}
         \label{fig:BW_estimated}
     \end{subfigure}
     \begin{subfigure}[b]{0.48\textwidth}
         \centering
         \includegraphics[width=\textwidth]{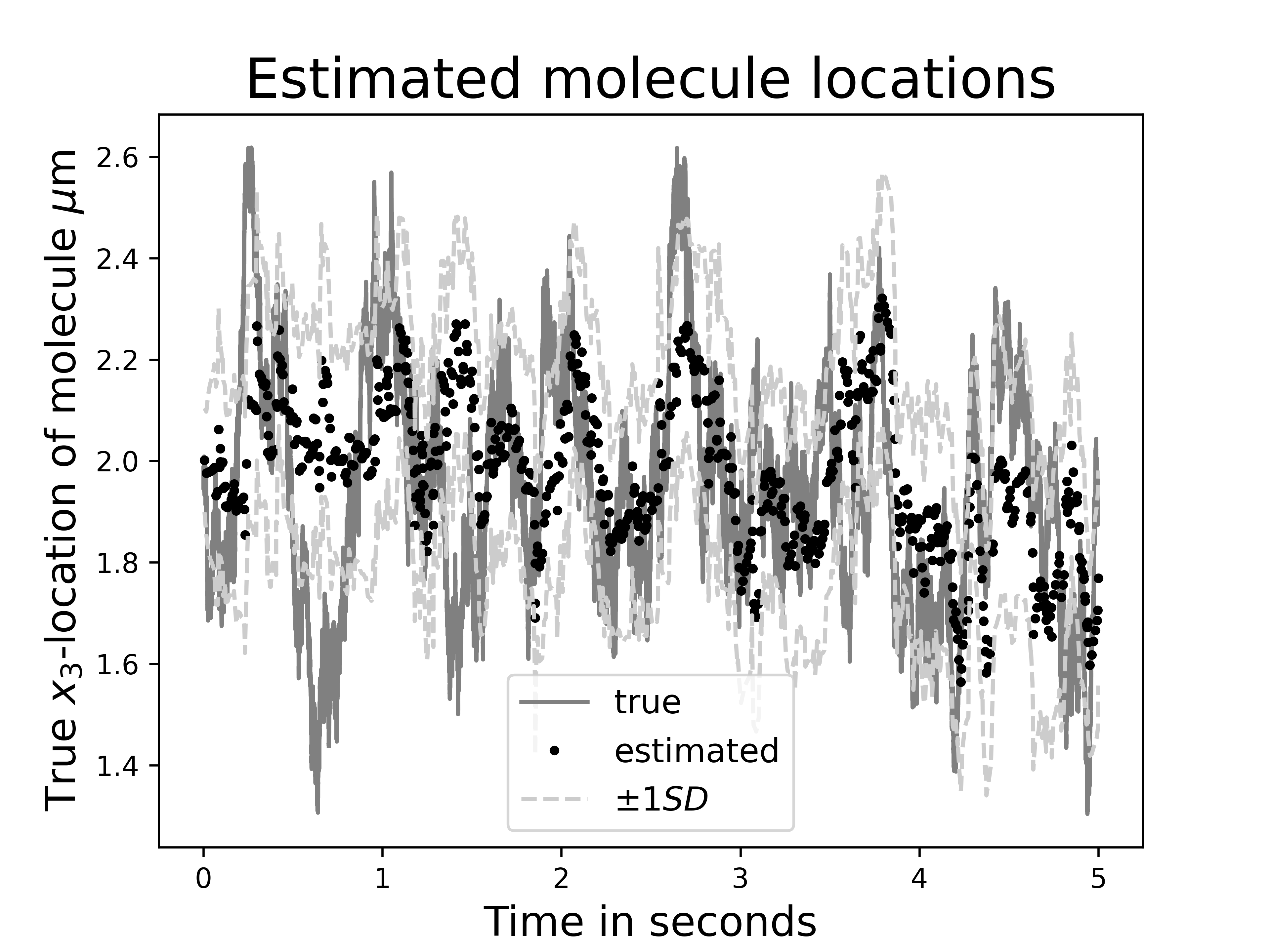}
         \caption{}
         \label{fig:x_3_estimated_BW}
     \end{subfigure}
     \caption{(a) True trajectory of a molecule; (b) observed photon locations; (c) estimated $(x_1,x_2)$ molecule locations and (d) true $x_3$ molecule locations and estimated location.}
\end{figure}
In Figure \ref{fig:true trajectory}, we plot the true trajectory of the molecule for the numerical studies, which is generated by the SDE in \eqref{eq: f12 density} with parameters $\{\phi=(\phi_1, \phi_2, \phi_3)^\top=(1, 1, 4)^\top, \mu=(\mu_1, \mu_2, \mu_3)^\top=(0, 0, 2)^\top, p_0=1/(2\phi)\}$ for the time interval [0, 5.0]. The initial variance $p_0$ is set to be the stationary variance. We used a thinning algorithm (Algorithm \ref{alg: thinning algorithm} detailed in Appendix \ref{sec: thinning algorithm}) to generate the observation times by using the intensity function (see \eqref{eq: intensity function}) with parameters $\{\lambda_0=100, d=20 \mu m\}$. Given these observation times, we generate the observed photon locations with the photon distribution profile given by \eqref{eq: photon distribution profile} and \eqref{eq: Born and WOlf profile}. The associated parameters are $\{M=m\mathbb{I}_{2\times 2}, m=100, n_\alpha=1.4, \lambda_e=0.52 \mu\text{m},n_0=1.515, \sigma_a^2=49\times 10^{-4} \mu\text{m}^2\}$ and corresponding data set of photon locations is shown in Figure \ref{fig:BW model chapter 3}.
The colours in Figure \ref{fig:true trajectory} indicate time, and as described by the legend, the colours lighten with the progress of time. Figure \ref{fig:BW_estimated} shows the mean of the estimated $(X_1,X_2)$ locations of the molecule found using Algorithm \ref{alg: bootstrap particle filter}, which does track the true trajectory. Figure \ref{fig:x_3_estimated_BW} shows true $X_3$ position of the molecule, the mean of the estimated $X_3$ position (also obtained using Algorithm \ref{alg: bootstrap particle filter}), along with the standard deviations. During periods when there are no observations, the estimated $X_3$ value is larger, as we would expect since this corresponds to a smaller photon arrival intensity function. Figure \ref{fig:x_3_estimated_BW} also shows that large $X_3$ state values degrade the estimation quality (which is more clearly seen for the $X_3$ values). This is due to the Born and Wolf observation model for an out-of-focus molecule, see \eqref{eq: Born and WOlf profile}. Additional results of this phenomenon are reported in Appendix \ref{sec: additional experiments}. 

\begin{figure}[t!]
    \centering
    \includegraphics[width=0.6\linewidth]{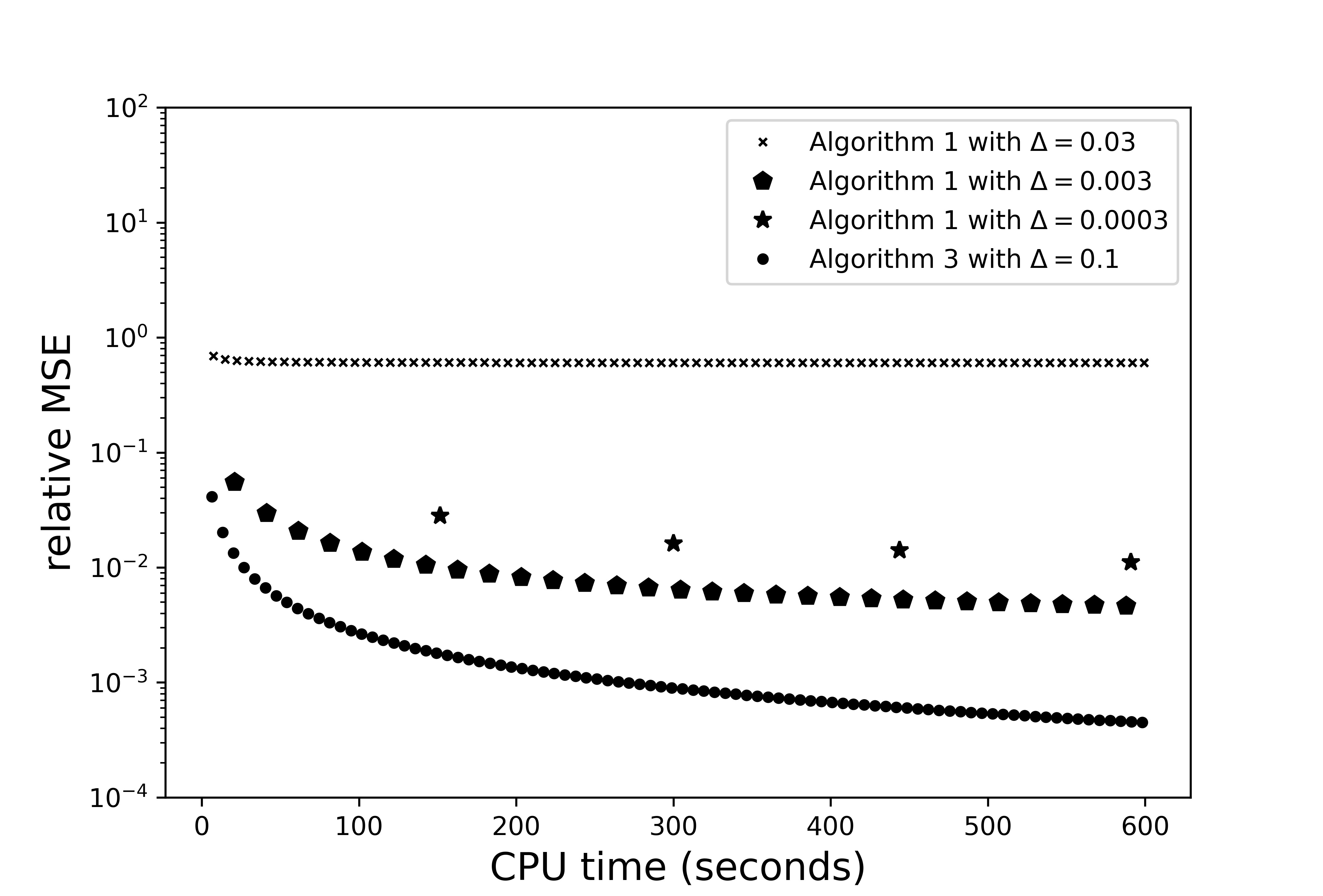}
    \caption{Plot of $\text{rMSE}_{\epsilon}$ of likelihood estimates versus CPU time for Algorithm \ref{alg: bootstrap particle filter} with $\Delta=0.1$ and Algorithm \ref{alg: bootstrap particle filter_discrete_time_version} with $\Delta=0.03, 0.003$ and 0.0003. rMSE of likelihood estimates based on the average estimate of $\mathcal{L}$ given by $10^6$ runs of Algorithm \ref{alg: bootstrap particle filter} with each run using $N=10^6$ particles, and the design choice for $l$ and $\eta$ given in Section \ref{sec:design_choice}. Here 600 CPU seconds corresponds to  $2\times 10^5$ particles for Algorithm \ref{alg: bootstrap particle filter}.}
    \label{fig:comparison dt}
\end{figure}

Figure \ref{fig:comparison dt} shows a comparison between the estimation quality of Algorithms \ref{alg: bootstrap particle filter_discrete_time_version} (with $\Delta=0.03, 0.003$ and $0.0003$) and \ref{alg: bootstrap particle filter} (with $\Delta=0.1$) for this single molecule example. For both methods, the CPU time is increased by increasing the number of particles used in the algorithms. The superiority of Algorithm \ref{alg: bootstrap particle filter} is apparent as measured using relative MSE of the likelihood estimate. As can be seen, the best $\Delta$ for 
Algorithm \ref{alg: bootstrap particle filter_discrete_time_version} is not necessarily the smallest one for a fix CPU budget. This also has practical consequences. For high frequency data, there will be potentially more time intervals between observation arrivals of which are much smaller than $\Delta$. This would lead to a small bias for Algorithm \ref{alg: bootstrap particle filter_discrete_time_version} although at a higher computational cost. Further reducing the bias, the relative MSE would be dominated by the variance if the CPU budget only permits a smaller number of particles.
\subsubsection{Model calibration using PMCMC}
Estimating the parameters of the molecular dynamics is also important in single molecule studies. \cite{d2022limits} calibrate the model
using using maximum likelihood estimation after discretising the path integral. In contrast, we use the particle marginal Metropolis-Hastings (PMMH) algorithm \cite{andrieu2010particle} to sample from the posterior density $p(\theta,x_{0:m}|y_{0:m})$ where $\theta=(\phi, \mu)$. Data is simulated from from the model $\{\phi=(1,1,4)^\top, \mu=(0,0,2)^\top, p_0=(0.5,0.5,0.125)^\top\}^\top$ with intensity function parametrised by $\{\lambda_0=25,d=20\}$. The parameters of the Born and Wolf model remain the same as before. Precise estimation of $\phi_3$ requires a longer time series as it is weakly identified and we use 350 observations collected in the time interval $[0,15s]$. The following independent priors are used: $\phi_3\sim\mathcal{U}(0,10)$ and $\mu_3\sim\mathcal{U}(0,10)$. ($\mathcal{U}$ denotes the continuous uniform distribution.) We used a normal random-walk Metropolis Hastings proposal with initial covariance $0.1\times \mathbb{I}_{2\times2}$ to update the parameters jointly. The continuous covariance adaptation scheme of \cite{haario2001adaptive} is adopted in the PMMH algorithm. We chose the following three experimental settings:
\begin{itemize}
    \item Experiment 1 (low CPU budget): $\mathcal{C}=1.5s$, which only allows a coarse time discretisation, which coincides with the time of arrivals of data for both Algorithms \ref{alg: bootstrap particle filter_discrete_time_version} and \ref{alg: bootstrap particle filter}. This forces large $\Delta$ in $\mathcal{L}_\Delta$ for Algorithm \ref{alg: bootstrap particle filter_discrete_time_version}. $N$ is adjusted accordingly so that CPU budget is the same for both algorithms.
    \item Experiment 2 (larger CPU budget): $\mathcal{C}=2.5s$ permits a finer time discretisation than the observation arrival times. The best $\Delta$ and $N$ (within the CPU budget) are chosen for Algorithm \ref{alg: bootstrap particle filter_discrete_time_version} using the procedure outlined in Section \ref{sec: benchmark}. (Employing a larger CPU budget allows a smaller $\Delta$ than Experiment 1 in $\mathcal{L}_\Delta$.) For Algorithm \ref{alg: bootstrap particle filter}, we used $\Delta=0.01$ and its CPU cost adjusted $N$.
    \item Experiment 3 (effective sample size based comparison): $\mathcal{C}=2.0s$, the effective sample size (ESS) for PMMH using  Algorithm \ref{alg: bootstrap particle filter} with $\Delta=0.01$ is found, and then the best $\Delta$ and $N$ are chosen for Algorithm \ref{alg: bootstrap particle filter_discrete_time_version} while ensuring its ESS matches that of Algorithm \ref{alg: bootstrap particle filter}. The ESS, which measures the number of `independent samples,' is
    \begin{equation*}
        \text{ESS}=\frac{M}{-1+2\sum_{t=0}^K\left(\rho_{2t}+\rho_{2t+1}\right)}
    \end{equation*}
    where $\rho_t$ is estimated autocorrelation at lag $n$ and $K$ is the last integer for which the sum in the sum bracket is still positive.  The general trend is that the ESS of PMMH with Algorithm \ref{alg: bootstrap particle filter_discrete_time_version} increases when $\Delta$ is increased (i.e. larger $N$ for the fixed CPU budget), although the estimation will be more biased for Algorithm \ref{alg: bootstrap particle filter_discrete_time_version}. 
\end{itemize}
We ran the algorithms for $10^5$ with a $10^4$ burn-in iterations. Figure \ref{fig:pmmh} displays the estimates of the marginal posterior densities for $\mu_3$ and $\phi_3$ for all three experiments.
\begin{figure}[t!]
    \centering
    \includegraphics[width=0.4\linewidth]{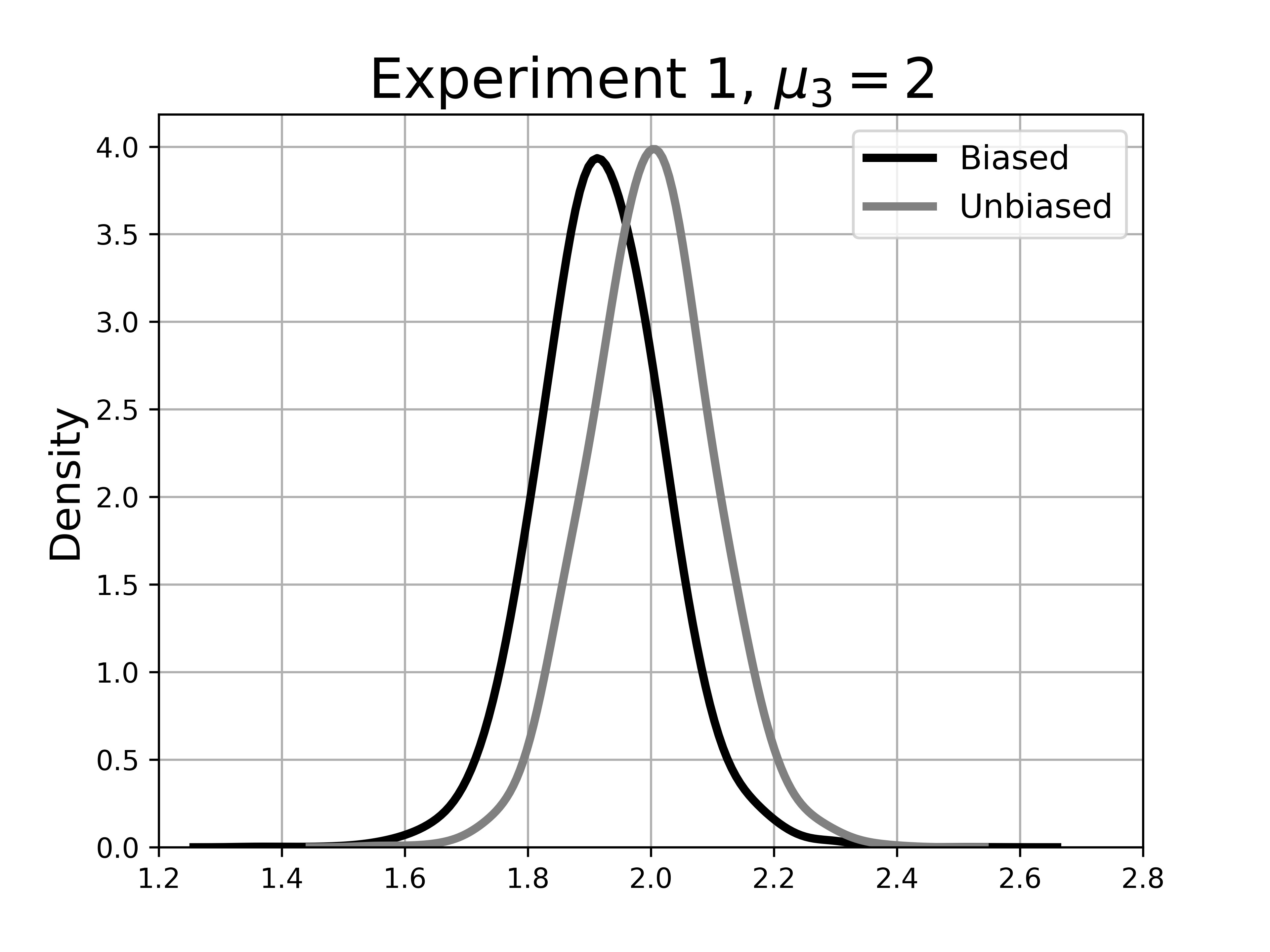}
    \includegraphics[width=0.4\linewidth]{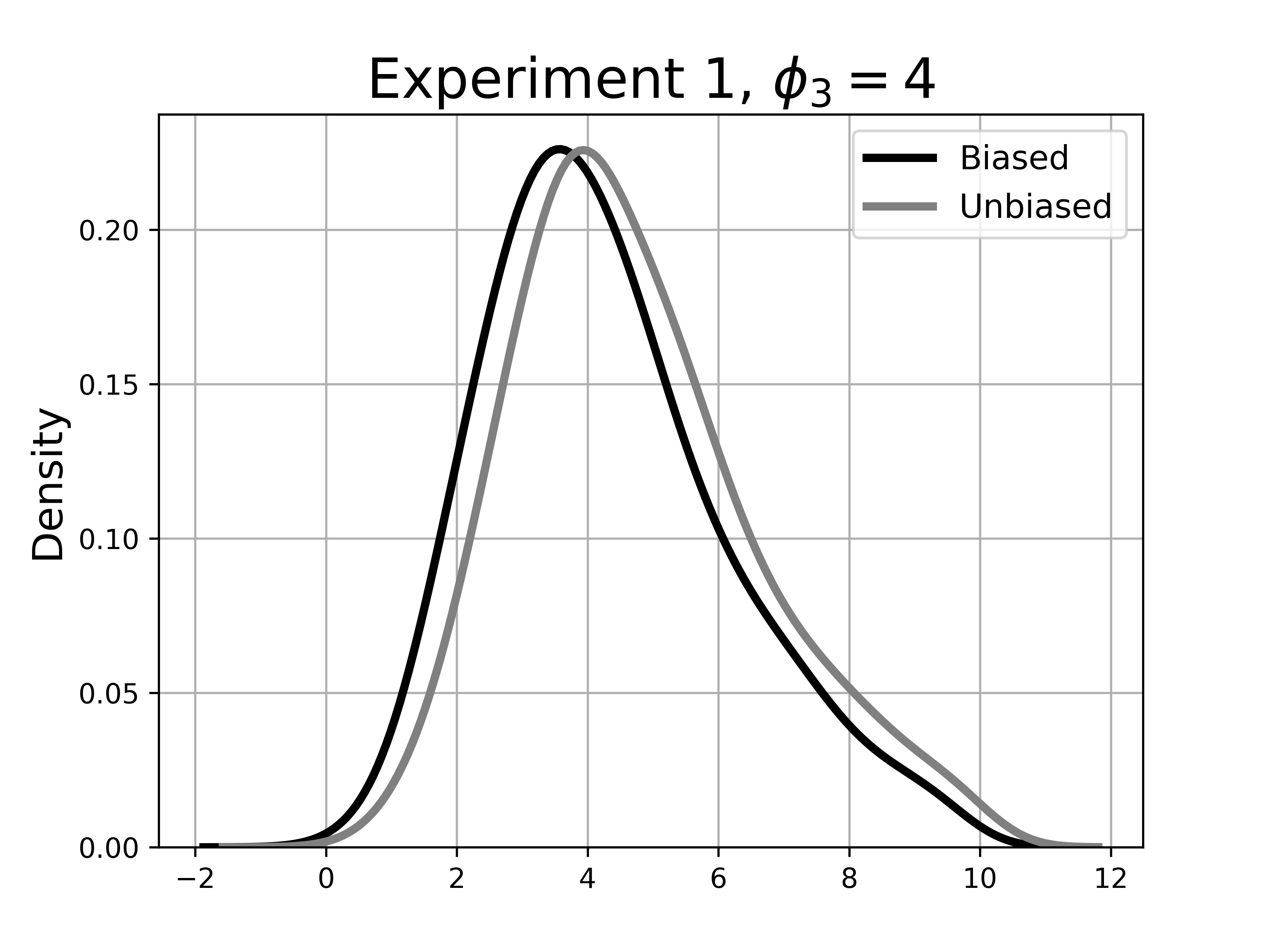}
    \includegraphics[width=0.4\linewidth]{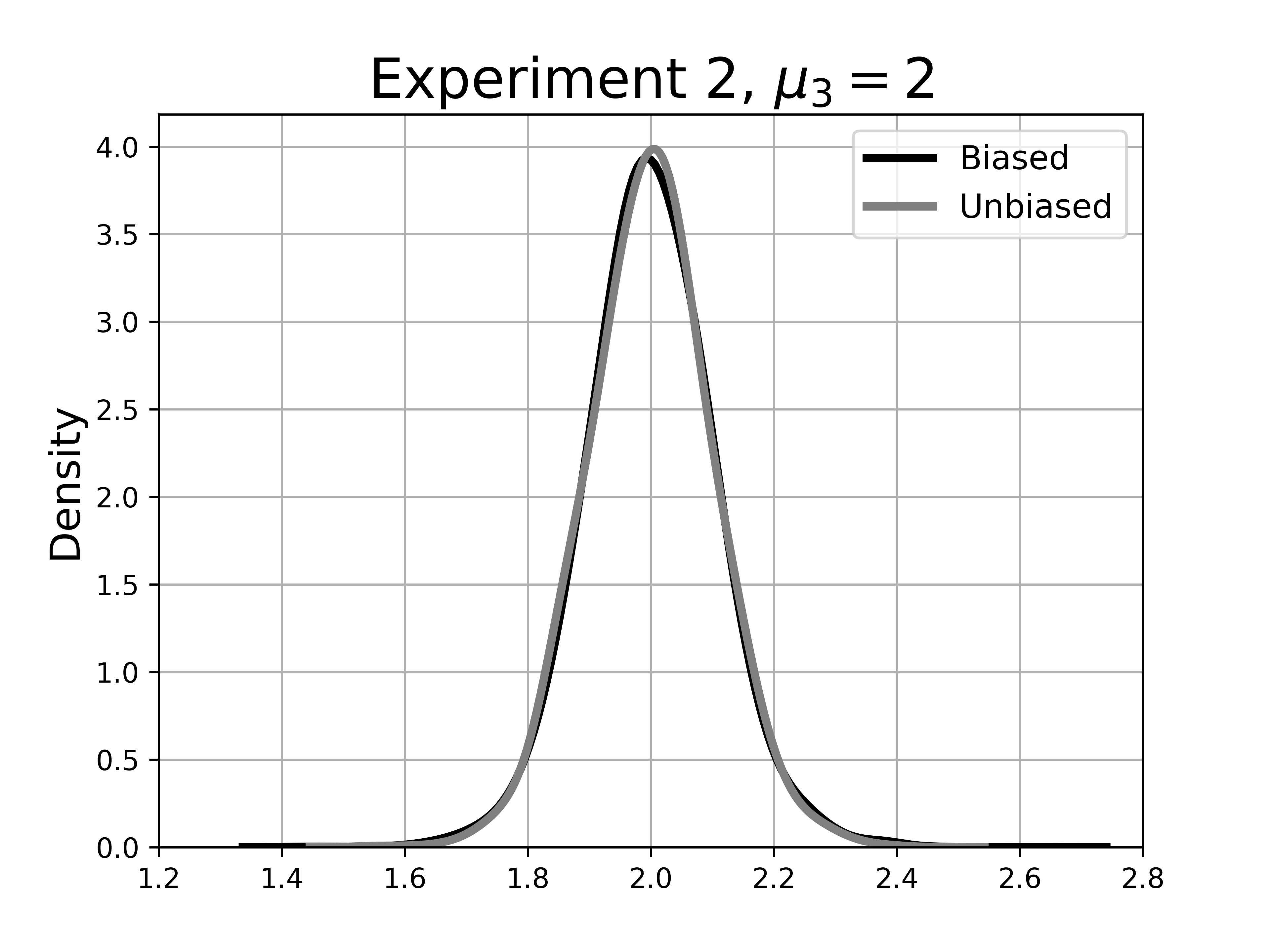}
    \includegraphics[width=0.4\linewidth]{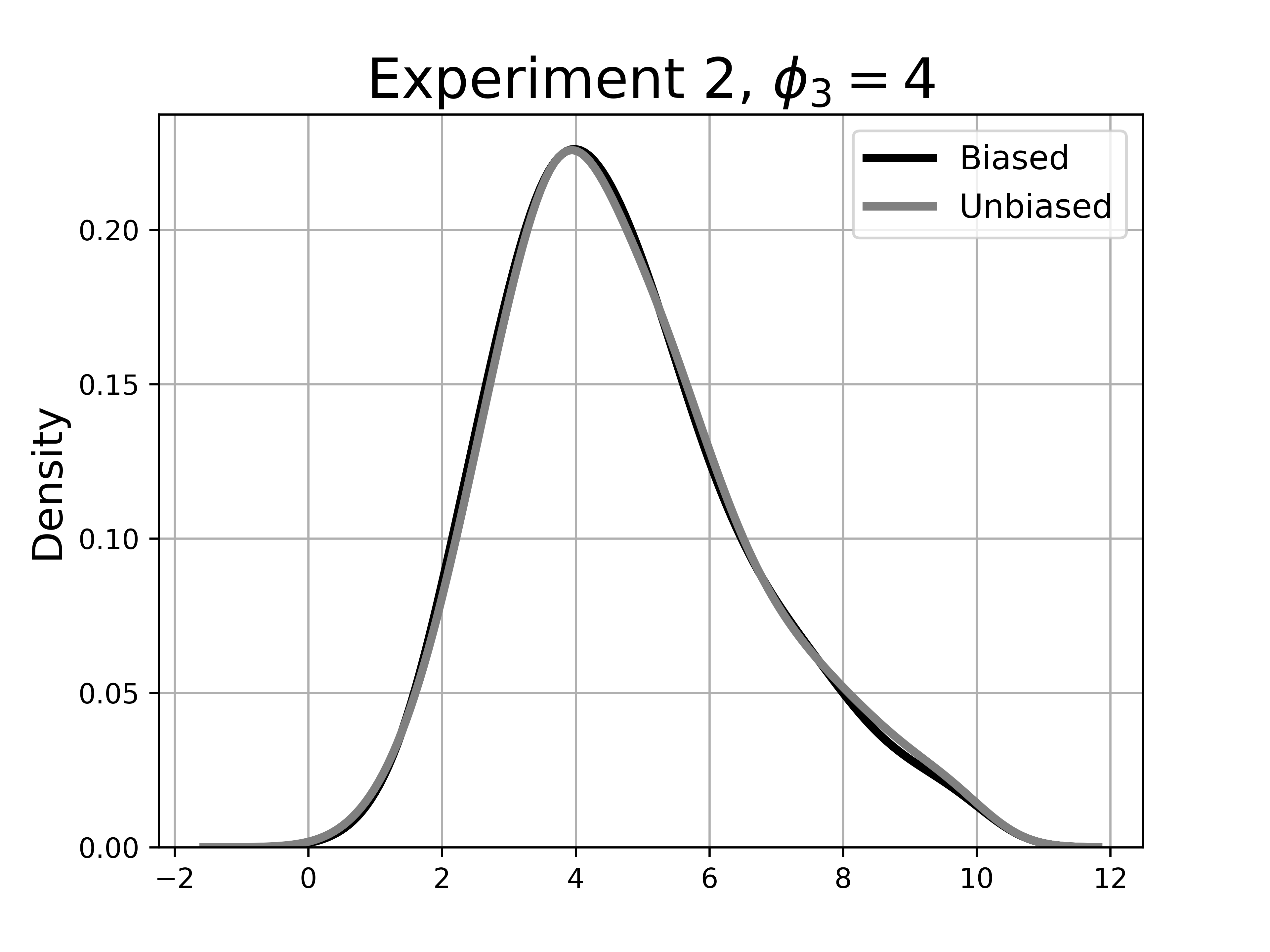}
    \includegraphics[width=0.4\linewidth]{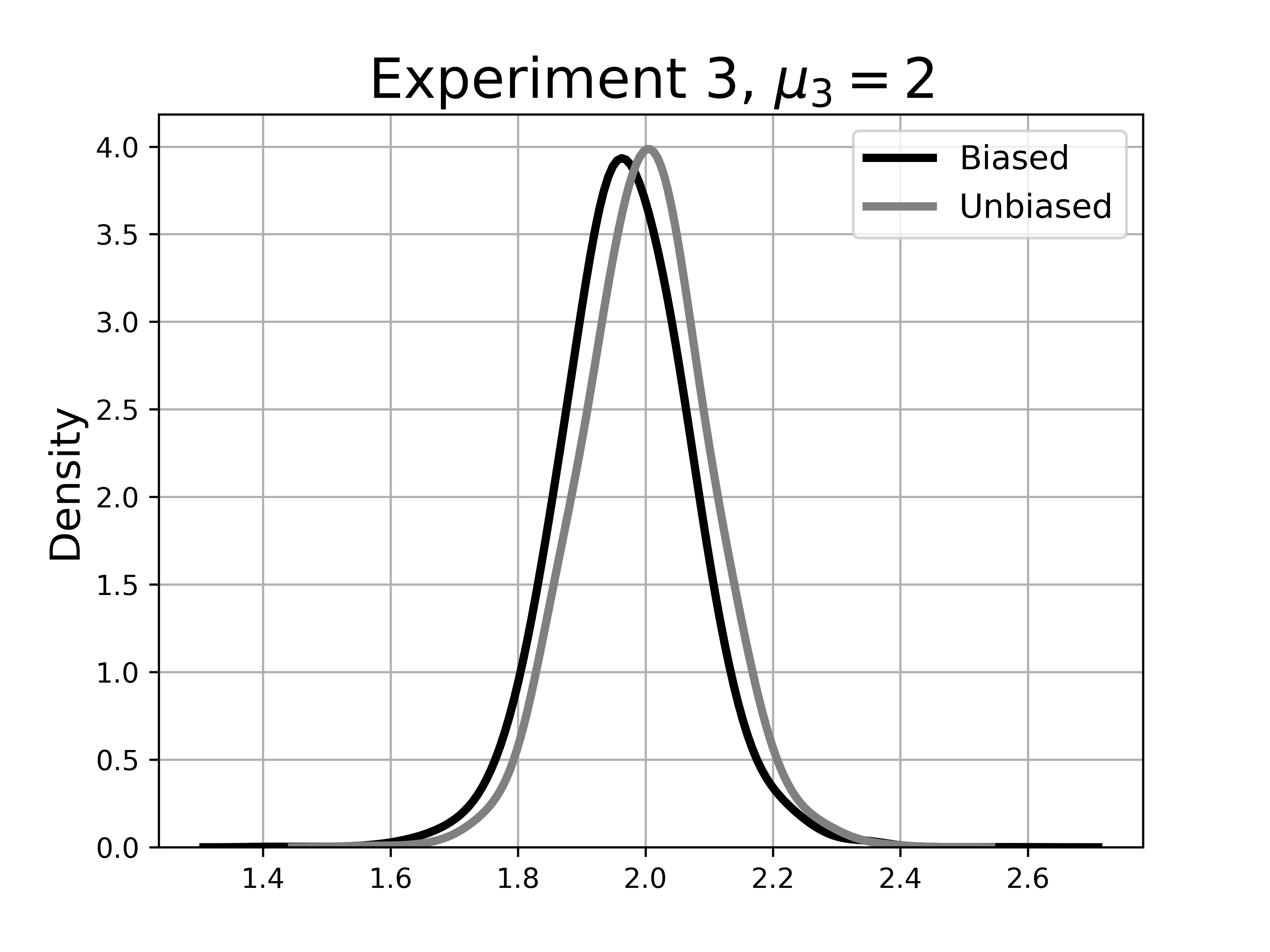}
    \includegraphics[width=0.4\linewidth]{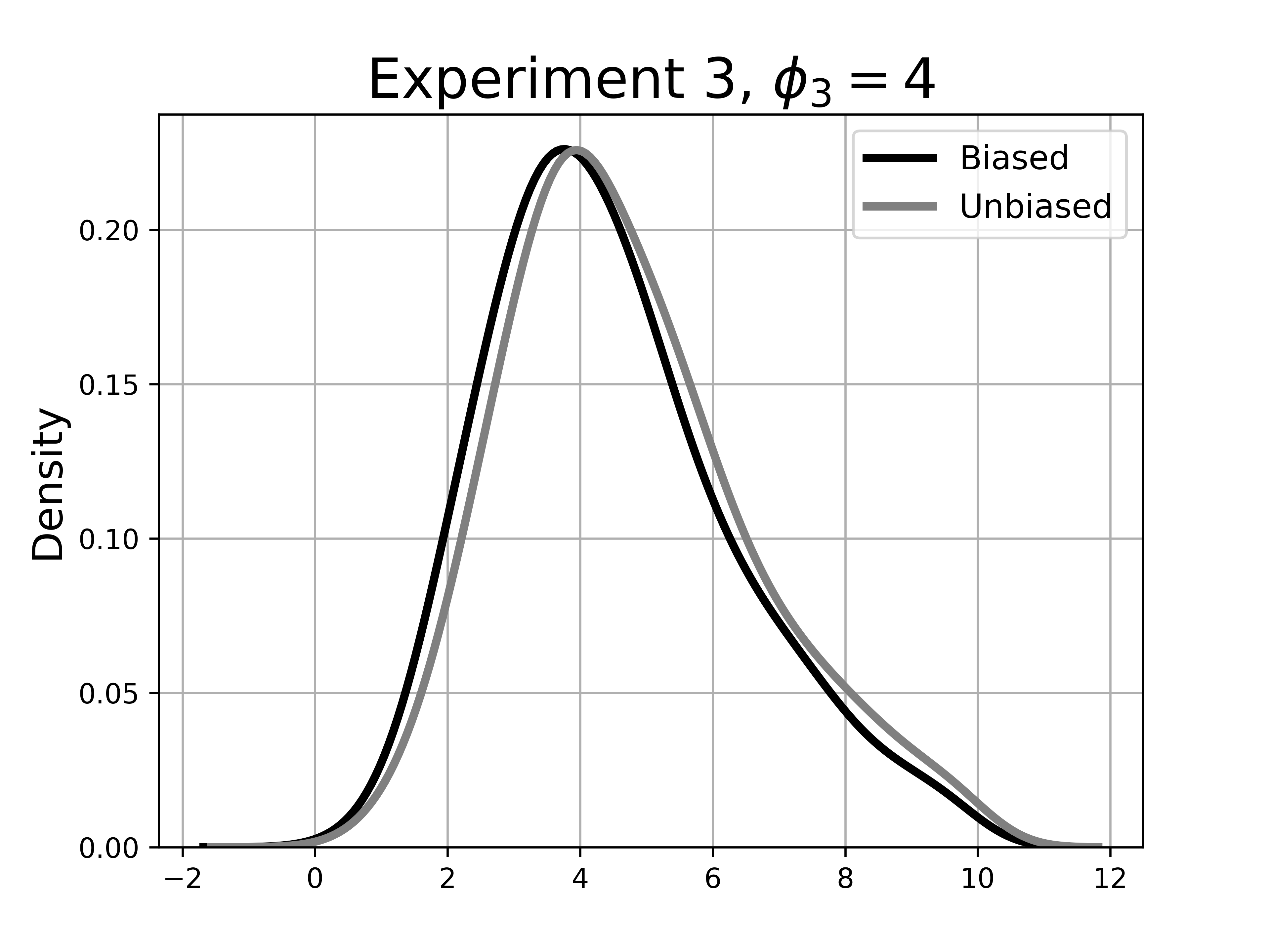}
    \caption{PMMH experiments: posterior densities $p(\mu_3|y_{t_1:t_{n_p}})$ and $p(\phi_3|y_{t_1:t_{n_p}})$ for experiment 1 in top panel, experiment 2 in middle panel and experiment 3 in bottom panel.}
    \label{fig:pmmh}
\end{figure}
Experiment 1 shows that Algorithm \ref{alg: bootstrap particle filter} effectively removes bias in the estimation of parameters while the performance of Algorithm \ref{alg: bootstrap particle filter_discrete_time_version} is compromised by the limited CPU budget. Experiment 2 shows that given sufficient budget Algorithm \ref{alg: bootstrap particle filter_discrete_time_version} is able to achieve almost the same performance of Algorithm \ref{alg: bootstrap particle filter}. Comparison between experiments 1 and 2 show that the unbiased posterior produced by Algorithm \ref{alg: bootstrap particle filter} remain unaffected by the amount of available CPU budget. Experiment 3 is carried out under the setting that both algorithms produce equally ``efficient'' MCMC samples for a fix CPU budget, the result shows that Algorithm \ref{alg: bootstrap particle filter_discrete_time_version} still yields some amount of bias to the estimation.

\section{Conclusion}

In this paper we addressed the problem of smoothing and model calibration for a
partially observed diffusion with a Cox process observation model. The
intractable likelihood was estimated using the positive part of a Poisson
estimate, for the path integrals within, embedded within particle filtering.
The probability of encountering a negative Poisson estimate in one complete
particle filtering pass through the data was strongly controlled by adjusting
$\eta=\mathcal{O}(\Delta l)$. As such, due to the rarity of the occurrence of a
negative estimate -- which triggers the particle weight truncation -- the time
discretisation error which biases conventional particle implementations such as
in \cite{d2022limits} is effectively removed in this work. 
The numerical results showed that our proposed particle method (Algorithm
\ref{alg: bootstrap particle filter}) outperforms the conventional
(discretisation-based) particle filter in terms of relative MSE, ours decaying
with order $\mathcal{O}(\mathcal{C}^{-1})$ compared to
$\mathcal{O}(\mathcal{C}^{-\frac{2}{3}})$ where $\mathcal{C}$ is the
computational budget. Our particle filter was then applied to a challenging
three-dimensional single molecule microscopy example to estimate both the
trajectory of the moving molecule and to calibrate the model. The bias in the
posterior distribution for the model parameters computed using a conventional
implementation like \cite{d2022limits} was clearly illustrated, whereas in ours
it  was not discernible. Although the bias in the conventional method can be
reduced by employing a smaller $\Delta$, the time discretisation interval, this
not only requires significant additional CPU time, it will also prohibit the
application of backward sampling steps in particle filtering. This is a
direction in which this work could taken further, which is to define a forward
filtering backward sampling implementing of our method. In the context of
diffusions, this is a challenging problem, see \cite{yonekura2022online} for a
recent study.
\appendix
\section{Bridge density for linear Gaussian diffusion}
Consider the following stochastic differential equation (SDE),
\label{sec: bridge density}
\begin{equation}
    dX_t=b(t,X_t)dt+\sigma(t,X_t)dW_t
    \label{eq: linear SDE}
\end{equation}
where $X_t$ is an $n$-dimensional diffusion process, $W_t$ is an $m$-dimensional standard Brownian motion for $m\leq n$. For $b(t,X_t):=b_0+b_1(t)X_t$ and $\sigma(t,X_t):=\sigma(t)\in\mathbb{R}^{n\times m}$, the solution to \eqref{eq: linear SDE} at discrete time points $t_0<t_1<\ldots$ is given by \cite{jazwinski2007stochastic, evans2012introduction}
\begin{equation}
    X_{t_{i+1}}=\Phi(t_i,t_{i+1})X_{t_i}+a(t_i,t_{i+1})+\int_{t_i}^{t_{i+1}}\Phi(t_i,t)\sigma(t)dW_t
\end{equation}
where the fundamental matrix function $\Phi\in\mathbb{R}^{n\times n}$ satisfies the following for all $s,t,u\geq t_0$
\begin{align*}
    \frac{d\Phi(s,t)}{dt}=b_1(t)\Phi(s,t),\quad
    \Phi(t,t)=\mathbb{I}_{n\times n}, \quad\Phi(s,t)\Phi(t,u)=\Phi(s,u),
\end{align*}
the vector $a(t_i, t_{i+1})\in \mathbb{R}^n$ is given by
$
    a(t_i, t_{i+1})=\int_{t_i}^{t_{i+1}} b_0\Phi(t_i,t)dt.
$
Therefore the transition density $f_{t_i,t_{i+1}}(x'|x)$ can be expressed as a Gaussian as follows,
\begin{align*}
    f_{t_i,t_{i+1}}(x'|x) &:= \frac{\exp\left(-\frac{1}{2}\left(x'-\mu(x,t_i,t_{i+1})\right)^\top R^{-1}(t_i,t_{i+1})\left(x'-\mu(x, t_i,t_{i+1})\right)\right)}{\sqrt{|2\pi R^{-1}(t_i,t_{i+1})|}},\\
    &\propto \frac{\exp\left(-\frac{1}{2}\left(x-\hat{\mu}(x', t_i,t_{i+1})\right)^\top \hat{R}^{-1}(t_i,t_{i+1})\left(x-\hat{\mu}(x', t_i,t_{i+1})\right)\right)}{\sqrt{|2\pi\hat{R}^{-1}(t_i,t_{i+1})|}}\\&:=\hat{f}_{t_i,t_{i+1}}(x|x'),
\end{align*}
where
\begin{align*}
    \mu(x,t_i,t_{i+1})&:=\Phi(t_i,t_{i+1})x+a(t_i,t_{i+1}),\\
    \hat{\mu}(x',t_i,t_{i+1})&:=\Phi^{-1}(t_i,t_{i+1})x'-\Phi^{-1}(t_i,t_{i+1})a,\\
    R(t_i,t_{i+1})&:=\int_{t_i}^{t_{i+1}} \Phi(t_i,t)\sigma(t)\sigma^\top (t)\Phi^\top(t_i,t)dt,\\
    \hat{R}(t_i, t_{i+1})&:=\int_{t_i}^{t_{i+1}}\Phi^{-1}(t,t_{i+1})\sigma(t)\sigma^\top(t)\Phi^{-\top}(t,t_{i+1})dt.
\end{align*}
Assume $s<\tau<t$, then in addition to sampling $p(x_\tau|x_s)$ exactly, one can also sample $X_\tau\sim p(x_\tau|x_s,x_t)$ exactly where
\begin{align*}
    &p(x_\tau|x_s,x_t)\propto f_{s,\tau}(x_\tau |x_s)f_{\tau, t}(x_t|x_\tau)\propto f_{s, \tau}(x_\tau|x_s) \hat{f}_{\tau,t}(x_\tau|x_t)\\
    &\propto\text{N}\left(x_\tau;\left(R^{-1}(s,\tau)+\hat{R}^{-1}(\tau,t)\right)^{-1}\left(R^{-1}(s,\tau)\mu(x_s,s,\tau)+\hat{R}^{-1}(\tau,t)\hat{\mu}(x_t,\tau,t)\right),\right.\\&\left.\quad\left(R^{-1}(s,\tau)+\hat{R}^{-1}(\tau,t)\right)^{-1}\right).
\end{align*}
\section{Proof of Lemma 1}
\label{sec: lemma 1}
The following propositions will be used in the final proof.
\begin{proposition}
\label{prop: supremum of one side brownian bridge}
Let $X$ be the Brownian motion which starts at $X_0=x_0$, then the following equality holds for any $a>0$:
\begin{align*}
    &\Pr\Big(\sup_{0\leq s\leq \Delta}X_s-X_0\geq a| X_\Delta=x_\Delta\Big)\\=&\begin{cases}\exp\Big\{-\frac{2a}{\Delta}[a-(x_\Delta-x_0)]\Big\}, &a>x_\Delta-x_0\\
    1, &a\leq x_\Delta-x_0
    \end{cases}
\end{align*}
\end{proposition}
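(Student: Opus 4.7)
The proposition is the classical distribution of the maximum of a Brownian bridge, so the plan is to derive it from the reflection principle and a density-ratio calculation, treating the two cases separately.

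\textbf{Trivial case $a\le x_\Delta-x_0$.} I would first observe that on the event $\{X_\Delta=x_\Delta\}$ we have $\sup_{0\le s\le\Delta}X_s-X_0\ge X_\Delta-X_0=x_\Delta-x_0\ge a$ almost surely, so the conditional probability equals $1$. No further work is needed for this case.

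\textbf{Nontrivial case $a>x_\Delta-x_0$.} The plan is to apply the reflection principle to the unconditional Brownian motion started at $x_0$ and then disintegrate against the terminal density. Let $\tau_a=\inf\{s\ge 0:X_s-X_0\ge a\}$ and write the joint event $\{\sup_{0\le s\le\Delta}(X_s-X_0)\ge a,\;X_\Delta\in\mathrm{d}x_\Delta\}$. Reflecting the path $(X_s)_{s\ge\tau_a}$ about the level $x_0+a$ gives a measure-preserving bijection between paths in this event and paths ending at $2(x_0+a)-x_\Delta$; since $a>x_\Delta-x_0$ the reflected endpoint $2a+x_0-x_\Delta$ lies strictly above $x_0+a$, so the supremum automatically exceeds $a$ and the reflection is valid in both directions. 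Hence
\begin{equation*}
\Pr\bigl(\sup_{0\le s\le\Delta}(X_s-X_0)\ge a,\;X_\Delta\in\mathrm{d}x_\Delta\bigr)=\Pr\bigl(X_\Delta\in\mathrm{d}(2a+x_0-x_\Delta)\bigr).
\end{equation*}

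\textbf{Conditioning.} Dividing by the density of $X_\Delta\sim\mathcal{N}(x_0,\Delta)$ at $x_\Delta$, I obtain
\begin{equation*}
\Pr\bigl(\sup_{0\le s\le\Delta}(X_s-X_0)\ge a\bigm|X_\Delta=x_\Delta\bigr)=\frac{\mathrm{N}(2a+x_0-x_\Delta;x_0,\Delta)}{\mathrm{N}(x_\Delta;x_0,\Delta)}.
\end{equation*}
The quadratic exponents subtract to
\begin{equation*}
(2a-(x_\Delta-x_0))^2-(x_\Delta-x_0)^2=4a\bigl[a-(x_\Delta-x_0)\bigr],
\end{equation*}
which upon dividing by $2\Delta$ gives the stated bound $\exp\{-\tfrac{2a}{\Delta}[a-(x_\Delta-x_0)]\}$.

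\textbf{Main obstacle.} The only subtle step is justifying the reflection-principle identity at the level of conditional distributions rather than unconditional ones; the key point is the strict inequality $2a+x_0-x_\Delta>x_0+a$ (which follows from $a>x_\Delta-x_0$), ensuring the reflected terminal value lies above the barrier so that the involution does not spuriously miss or double-count paths on the boundary. Once this is in place the computation is a one-line Gaussian ratio.
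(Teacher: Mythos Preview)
Your proposal is correct and follows essentially the same route as the paper: define the hitting time $\tau_a$, apply the reflection principle to obtain the joint density $\Pr(\tau_a\le\Delta,\,X_\Delta\in\mathrm{d}x_\Delta)$ as a shifted Gaussian, then divide by the marginal density of $X_\Delta$ and simplify the exponent. Your treatment is slightly more explicit than the paper's in handling the trivial case and in flagging why the bijection is valid when $a>x_\Delta-x_0$, but the argument is otherwise identical.
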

\begin{proof}
\label{pro: proposition 1}
Define $\tau_a$ as the hitting time of $a$ as follows,
\begin{equation*}
    \tau_a = \inf\left\{s\in\left[0,\Delta\right]|X_s-X_0= a\right\}
\end{equation*}
A hitting time is also a stopping time. Then by applying the reflection principle (please refer to Theorem 2.19 of \cite{morters2010brownian}), the process $\{X^*:t\geq 0\}$, called Brownian motion $\{X_t:t\geq 0\}$ reflected at $\tau_a$, defined by 
\begin{align*}
    X^*_t &= X_t\mathbb{I}_{t\leq \tau_a}+(2X_{\tau_a}-X_t)\mathbb{I}_{t>\tau_a}\\
    &=X_t\mathbb{I}_{t\leq \tau_a}+(2a+2x_0-X_t)\mathbb{I}_{t>\tau_a}
\end{align*}
is also a Brownian motion. Thus,
\begin{align*}
    &\text{Pr}\Big(\tau_a\leq \Delta, X_\Delta\in[x_\Delta,x_\Delta+dx]\Big)\\=&\text{Pr}\Big(X_\Delta^*\in[2a+2x_0-x_\Delta-dx,2a+2x_0-x_\Delta]\Big)\\
    =&\text{Pr}\Big(X_\Delta^*-X_0\in[2a+x_0-x_\Delta-dx,2a+x_0-x_\Delta]\Big)\\
    =&\frac{dx}{\sqrt{2\pi \Delta}}\exp\Big\{-\frac{\left[2a-\left(x_\Delta-x_0\right)\right]^2}{2\Delta}\Big\}.
\end{align*}
Note that,
\begin{align*}
    &\text{Pr}\Big(X_\Delta\in[x_\Delta,x_\Delta+dx]\Big)\\=&\text{Pr}\Big(X_\Delta-X_0\in[x_\Delta-x_0,x_\Delta-x_0+dx]\Big)\\
    =&\frac{dx}{\sqrt{2\pi \Delta}}\exp\Big\{-\frac{(x_\Delta-x_0)^2}{2\Delta}\Big\}.
\end{align*}
Division between two equations above concludes the proof.
\end{proof}
\begin{proposition}
Let $X$ be the Brownian motion which starts at $X_0=x_0$, then the following equality holds for any $a>0$:
\begin{align*}
    &\Pr\left(\inf_{0\leq s\leq \Delta}X_s-X_0\leq -a|X_\Delta=x_\Delta\right)\\=&\begin{cases}
    \exp\left\{-\frac{2a}{\Delta}\left[a+\left(x_\Delta-x_0\right)\right]\right\}, &a> -(x_\Delta-x_0)\\1,&a\leq-(x_\Delta-x_0).\end{cases}
\end{align*}
\end{proposition}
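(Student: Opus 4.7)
The plan is to reduce the infimum statement to the supremum statement already proved in the previous proposition by a one-line symmetry argument. Define $\tilde{X}_s := 2x_0 - X_s$; by reflection symmetry of Brownian motion, $\tilde{X}$ is again a Brownian motion starting at $\tilde{X}_0 = x_0$. The map $X \mapsto \tilde{X}$ sends $X_s - X_0$ to $-(X_s - X_0)$, so
\[
\inf_{0\leq s\leq\Delta}(X_s - X_0) \leq -a \iff \sup_{0\leq s\leq\Delta}(\tilde{X}_s - \tilde{X}_0) \geq a,
\]
while the conditioning $X_\Delta = x_\Delta$ becomes $\tilde{X}_\Delta = 2x_0 - x_\Delta$, so that $\tilde{X}_\Delta - \tilde{X}_0 = -(x_\Delta - x_0)$.

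Applying the previous proposition to $\tilde{X}$ in the nontrivial regime $a > \tilde{X}_\Delta - \tilde{X}_0$, which is exactly $a > -(x_\Delta - x_0)$, gives
\[
\Pr\!\left(\sup_{0\leq s\leq\Delta}(\tilde{X}_s - \tilde{X}_0) \geq a \,\Big|\, \tilde{X}_\Delta = 2x_0 - x_\Delta\right) = \exp\!\left\{-\frac{2a}{\Delta}\bigl[a + (x_\Delta - x_0)\bigr]\right\},
\]
which is precisely the claimed expression. The degenerate case $a \leq -(x_\Delta - x_0)$ is immediate: then $X_\Delta - X_0 = x_\Delta - x_0 \leq -a$, so the infimum over $[0,\Delta]$ is automatically bounded above by $-a$ (attained at least at $s=\Delta$), and the conditional probability equals $1$.

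An equally clean alternative, entirely self-contained, is to mirror the reflection-principle computation used in the preceding proposition: introduce the hitting time $\tau_{-a} = \inf\{s\in[0,\Delta]: X_s - X_0 = -a\}$, reflect the path after $\tau_{-a}$ to produce a Brownian motion $X^*$ ending at $2(x_0 - a) - X_\Delta$, and obtain the joint event density
\[
\Pr\bigl(\tau_{-a}\leq\Delta,\, X_\Delta \in [x_\Delta, x_\Delta+dx]\bigr) = \frac{dx}{\sqrt{2\pi\Delta}}\exp\!\left\{-\frac{[2a + (x_\Delta - x_0)]^2}{2\Delta}\right\}.
\]
Dividing by the unconditional Gaussian density of $X_\Delta$ recovers the same exponential. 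I do not anticipate any substantive obstacle in either route; the symmetry reduction is a single change of variables, while the direct reflection argument is a verbatim mirror of the one already carried out for the supremum, with $a$ replaced by $-a$ in the appropriate places.
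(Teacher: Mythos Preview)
Your proposal is correct. Your second route---introducing $\tau_{-a}$ and reflecting the Brownian motion at that stopping time---is exactly the approach the paper takes, so there is nothing to add there.

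Your primary route, the symmetry reduction via $\tilde{X}_s = 2x_0 - X_s$, is a genuinely different and slightly more economical argument: rather than rerunning the reflection-principle computation, you invoke the previous proposition as a black box after a single change of variables. This buys you a shorter proof and makes the relationship between the two propositions transparent (one is the other under $x\mapsto -x$). The paper's direct approach, by contrast, is self-contained and does not rely on having the supremum result already in hand. Both are entirely standard; your symmetry argument is the cleaner of the two once the first proposition is available.
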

\begin{proof}
A similar approach as in Proof \ref{pro: proposition 1} but define $\tau_{-a}=\inf\{s\in[0,\Delta]|X_s-X_0=-a\}$ and apply the reflection principle by defining the Brownian motion $\{X^*:t\geq 0\}$, the Brownian motion $\{X_t:t\geq 0\}$ reflected at $\tau_{-a}$, formally defined by
\begin{equation*}
    X_t^*=X_t\mathbb{I}_{t\leq -\tau_a}+(-2a+2x_0-X_t)\mathbb{I}_{t\geq -\tau_a}.
\end{equation*}

\end{proof}
\begin{proposition}
\label{prop: two sided distribution of Brownian bridge}
Let $X$ be defined as in Proposition \ref{prop: supremum of one side brownian bridge}, then the following inequality holds:
\begin{align*}
    &\Pr\Big(\sup_{0\leq s\leq \Delta}|X_s-X_0|\geq a| X_\Delta=x_\Delta\Big)\\\leq&\begin{cases}2\exp\Big\{-\frac{2a}{\Delta}[a-|x_\Delta-x_0|]\Big\}, &a>|x_\Delta-x_0|\\
    1, &a<|x_\Delta-x_0|
    \end{cases}
\end{align*}
\end{proposition}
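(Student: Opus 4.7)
The plan is to decompose the two-sided event via a union bound and then invoke the two preceding propositions.

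First I would observe the set identity
\[
\Bigl\{\sup_{0\leq s\leq \Delta}|X_s-X_0|\geq a\Bigr\}
=\Bigl\{\sup_{0\leq s\leq \Delta}(X_s-X_0)\geq a\Bigr\}\cup\Bigl\{\inf_{0\leq s\leq \Delta}(X_s-X_0)\leq -a\Bigr\},
\]
so that conditioning on $X_\Delta=x_\Delta$ and applying the union bound yields
\[
\Pr\Bigl(\sup_{0\leq s\leq\Delta}|X_s-X_0|\geq a\,\bigl|\,X_\Delta=x_\Delta\Bigr)\le P_+ + P_-,
\]
where $P_+$ and $P_-$ denote the two one-sided conditional probabilities from Propositions \ref{prop: supremum of one side brownian bridge} and its infimum counterpart.

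Next, in the regime $a>|x_\Delta-x_0|$ both one-sided propositions apply in their exponential form, giving
\[
P_+ = \exp\!\Bigl\{-\tfrac{2a}{\Delta}[a-(x_\Delta-x_0)]\Bigr\},\qquad
P_- = \exp\!\Bigl\{-\tfrac{2a}{\Delta}[a+(x_\Delta-x_0)]\Bigr\}.
\]
Since $\pm(x_\Delta-x_0)\le |x_\Delta-x_0|$, we have $a-(x_\Delta-x_0)\ge a-|x_\Delta-x_0|$ and $a+(x_\Delta-x_0)\ge a-|x_\Delta-x_0|$, so each exponent is dominated by $-\tfrac{2a}{\Delta}[a-|x_\Delta-x_0|]$. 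Hence
\[
P_+ + P_- \le 2\exp\!\Bigl\{-\tfrac{2a}{\Delta}[a-|x_\Delta-x_0|]\Bigr\},
\]
which is the first case of the claim.

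Finally, in the regime $a<|x_\Delta-x_0|$ any probability is trivially bounded by $1$, matching the second case. The only mildly subtle step is keeping the signs straight when showing both one-sided bounds are dominated by the same symmetric exponential, but this is a one-line sign check rather than a genuine obstacle.
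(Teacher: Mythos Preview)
Your proposal is correct and follows essentially the same approach as the paper: decompose the two-sided event into the one-sided supremum and infimum events, apply the union bound together with the two preceding propositions, and then dominate both exponentials by the symmetric one with $|x_\Delta-x_0|$. The paper's proof is exactly this argument, written out as a short chain of displayed inequalities.
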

\begin{proof}
\begin{align*}
&\text{Pr}\Big(\sup_{0\leq s\leq \Delta}|X_s-X_0|\geq a|X_\Delta=x_\Delta\Big)\\\leq&\text{Pr}\Big(\sup_{0\leq s\leq \Delta}X_s-X_0\geq a|X_\Delta=x_\Delta\Big)+\text{Pr}\Big(\inf_{0\leq s\leq \Delta}X_s-X_0\leq -a|X_\Delta=x_\Delta\Big)\\
=&\begin{cases}\exp\left\{-\frac{2a}{\Delta}\left[a-(x_\Delta-x_0)\right]\right\}+\exp\left\{-\frac{2a}{\Delta}\left[a+(x_\Delta-x_0)\right]\right\} &a>|x_\Delta-x_0|\\1, &a\leq |x_\Delta-x_0|\end{cases}\\
\leq&\begin{cases}2\exp\left\{-\frac{2a}{\Delta}\left[a-|x_\Delta-x_0|\right]\right\} &a>|x_\Delta-x_0|\\1, &a\leq |x_\Delta-x_0|\end{cases}
\end{align*}
\end{proof}
\subsection*{Proof of Lemma 1}
\begin{proof}
\begin{align*}
    &\text{Pr}\left(E_1>0\Big\vert\kappa=k>0\right)\\=&\mathbb{E}\left\{\mathbb{I}\left[\left(\prod^k_{j=1}\left(1+\frac{\Delta}{\eta}\left(\lambda\left(X_{0}\right)\right)-\lambda\left(X_{\tau_j}\right)\right)\right)>0\right]\Big\vert\kappa=k,\tau_1,\ldots,\tau_k\right\}\\
    \geq &\mathbb{E}\left\{\mathbb{I}\left[\max_{j\in\{1,\ldots, k\}}\frac{\Delta}{\eta}|\lambda\left(X_0\right)-\lambda\left(X_{\tau_j}\right)|<1\right]\Big\vert\kappa=k,\tau_1,\ldots,\tau_k\right\}\\
    =&\text{Pr}\left(\max_{j\in\{1,\ldots, k\}}|\lambda\left(X_0\right)-\lambda\left(X_{\tau_j}\right)|\leq \frac{\eta}{\Delta}\Big\vert\kappa=k,\tau_1,\ldots,\tau_k\right).
\end{align*}
We can obtain an upperbound for $\text{Pr}(E_1<0|\kappa=k)$ by
\begin{align*}
    \text{Pr}\left(E_1<0\Big\vert\kappa=k\right)&\leq\text{Pr}\left(\max_{j\in\{1,\ldots, k\}}|\lambda\left(X_0\right)-\lambda\left(X_{\tau_j}\right)|\geq \frac{\eta}{\Delta}\Big\vert \kappa=k,\tau_1,\ldots,\tau_k\right)\\
    &\leq \text{Pr}\left(\max_{j\in\{1,\ldots, k\}}|X_0-X_{\tau_j}|\geq \frac{\eta}{\Delta l}\Big\vert \kappa=k,\tau_1,\ldots,\tau_k\right)
\end{align*}
where we assume $\lambda(\cdot)$ is an $l-$Lipschitz function.
\begin{align}
&\text{Pr}\left(\max_{j\in\{1,\ldots, k\}}|X_0-X_{\tau_j}|\geq \frac{\eta}{\Delta l}\Big\vert \kappa=k,\tau_1,\ldots,\tau_k\right)\nonumber\leq \text{Pr}\left(\sup_{0\leq s\leq \Delta}|X_0-X_s|\geq \frac{\eta}{\Delta l}\right).
\end{align}
Applying Proposition \ref{prop: two sided distribution of Brownian bridge}, we have
\begin{equation*}
    \text{Pr}\left(E_1<0\vert \kappa>0\right)\leq\begin{cases} 2\exp\left\{-\frac{\frac{2\eta}{\Delta l}\left(\frac{\eta}{\Delta l}-|x_\Delta-x_0|\right)}{\Delta}\right\},&\frac{\eta}{\Delta l}\geq|x_\Delta-x_0|\\
     1,&\frac{\eta}{\Delta l}\leq|x_\Delta-x_0|\end{cases}
\end{equation*}
\end{proof}
\section{Expectation of the probability bound}
\label{sec: expectation of the probability bound}
This section establishes the unqualified bound (\eqref{eq:avg_bound}). The goal is to determine the following expectation for $Y=|X_\Delta-X_0|$ where $X_\Delta-X_0\sim\mathcal{N}(0,\Delta)$, (thus $Y$ is a half-normal random variable):
\begin{align*}
&\mathbb{E}\left\{\mathbb{I}\left[E<0\right]|\kappa>0\right\}\\
=&\mathbb{E}\left\{\mathbb{I}\left[E<0\right]\times \mathbb{I}\left[Y<\frac{\eta}{\Delta l}\right]+\mathbb{I}\left[E<0\right]\times \mathbb{I}\left[Y\geq\frac{\eta}{\Delta l}\right]|\kappa>0\right\}\\
    \leq&\mathbb{E}\left\{2\exp\left(-\frac{\frac{2\eta}{\Delta l}\left(\frac{\eta}{\Delta l}-Y\right)}{\Delta}\right)\mathbb{I}\left[Y< \frac{\eta}{\Delta l}\right]\right\}+\Pr\left(Y\geq \frac{\eta}{\Delta l}\right)\\
    =&\int_0^\infty 2\exp\left(-\frac{\frac{2\eta}{\Delta l}\left(\frac{\eta}{\Delta l}-y\right)}{\Delta}\right)\mathbb{I}\left[y< \frac{\eta}{\Delta l}\right]\times \frac{\sqrt{2}}{ \sqrt{\pi\Delta}}\exp\left(-\frac{y^2}{2\Delta}\right)dy+\Pr\left(Y\geq \frac{\eta}{\Delta l}\right)\\
    =&\int_0^\frac{\eta}{\Delta l} \frac{2\sqrt{2}}{\sqrt{\pi\Delta}}\exp\left(-\frac{2\eta^2}{\Delta^3l^2}\right)\times \exp\left(-\frac{\left(y-\frac{2\eta}{\Delta l}\right)^2-\frac{4\eta^2}{\Delta^2 l^2}}{2\Delta}\right)dy+\Pr\left(Y\geq \frac{\eta}{\Delta l}\right)\\
    =&\int_0^{\frac{\eta}{\Delta l}}\frac{2\sqrt{2}}{\sqrt{\pi\Delta}} \exp\left(-\frac{\left(y-\frac{2\eta}{\Delta l}\right)^2}{2\Delta}\right)dy + \Pr\left(\left\vert X_\Delta-X_0\right\vert\geq \frac{\eta}{\Delta l}\right) \\
    =&4\left[\Phi\left(\frac{2\eta}{\Delta^{\frac{3}{2}}l}\right)-\Phi\left(\frac{\eta}{\Delta^{\frac{3}{2}}l}\right)\right]+2\times\left(1-\Phi\left(\frac{\eta}{\Delta^{\frac{3}{2}}l}\right)\right)\\
    =&2+4\Phi\left(\frac{2\eta}{\Delta^{\frac{3}{2}}l}\right)-6\Phi\left(\frac{\eta}{\Delta^{\frac{3}{2}}l}\right).
\end{align*}
\section{Proof of Lemma 2}
\label{sec: lemma 2}
\begin{proof}
\begin{align}
    E_i &= \exp\left(-\Delta\lambda(X_{(i-1)\Delta})\right)\prod_{j=1}^{\kappa_i}\left(1+\frac{\lambda(X_{(i-1)\Delta})-\lambda(X_{\tau_j})}{l}\right)\nonumber\\
    &\leq \prod_{j=1}^{\kappa_i} \left(1+\left\vert X_{(i-1)\Delta}-X_{\tau_j}\right\vert\right)\nonumber\\
    &\leq\prod_{j=1}^{\kappa_i} \left(1+\max_{(i-1)\Delta \leq s\leq i\Delta}\left\vert X_s-X_{(i-1)\Delta}\right\vert \right) \nonumber
    \\
    &=    \left(1+\max_{0\leq s\leq \Delta}\left\vert B_s\right\vert \right)^{\kappa_i}=:F_i.
    \label{eq: estimator bound}
\end{align}
We truncate the Poisson estimate as $E^+_i=E_i\mathbb{I}_{A_i^c}$ and bound $\mathbb{I}_{A}\prod_{i=1}^mE_i$ as follows.
\begin{equation}
    \left\vert\mathbb{E}\left\{\mathbb{I}_A \prod_{i=1}^m E_i\right\}\right\vert\leq \mathbb{E}\left\{\prod_{i=1}^m E_i^2\right\}^{\frac{1}{2}}\mathbb{E}\left\{\mathbb{I}_A\right\}^{\frac{1}{2}}.
    \label{eq: omitted term bound}
\end{equation}
\newline
The term $\mathbb{E}\left\{\mathbb{I}_A\right\}^{\frac{1}{2}}$ can be bound using the union bound
\begin{equation}
    \mathbb{E}\left\{\mathbb{I}_A\right\}\leq \sum_{i=1}^m \mathbb{E}\left\{\mathbb{I}_{A_i}\right\}= m \mathbb{E}\left\{\mathbb{I}_{A_i}\right\}=m\times
    2\exp\left(-\frac{1}{2\Delta}\right)
    \label{eq: probability bound}
\end{equation}
The other term can be proved to be finite, i.e. $\mathbb{E}\left\{\prod_{i=1}^m E_i^2\right\}<\infty$.
Since the increment of Brownian motion $X$ is independent of each other, and using the inequality \eqref{eq: estimator bound}, one can show 
\begin{align*}
    \mathbb{E}\left\{\prod_{i=1}^mE_i^2\right\}\leq \mathbb{E}\left\{\prod_{i=1}^m F_i^2\right\}= \mathbb{E}\left\{F_i^2\right\}^m, \qquad \forall \quad i.
\end{align*}
It remains therefore to bound $\mathbb{E}\left\{F_i^2\right\}$. 
\begin{align*}
    \mathbb{E}\left\{F_i^2\right\}&=\mathbb{E}\left\{\left(1+\max_{0\leq s\leq \Delta}\left\vert B_s\right\vert \right)^{2\kappa_i}\right\}\\
    &= \mathbb{E}\left\{\sum_{k=0}^\infty\frac{(\Delta l)^ke^{-\Delta l}}{k!}\left(1+\max_{0\leq s\leq \Delta}\left\vert B_s\right\vert \right)^{2k}\right\}\\
    &= \exp(-\Delta l)\mathbb{E}\left\{\exp\left(\Delta l(1+\max_{0\leq s\leq \Delta} |B_s|)^2\right)\right\}\\
    &\leq \exp(-\Delta l)\mathbb{E}\left\{\exp\left(\Delta l(2+2\times\max_{0\leq s\leq \Delta} B_s^2)\right)\right\}
    \\
    &=\exp(\Delta l)\mathbb{E}\left\{\exp\left(2\Delta l\times\max_{0\leq s\leq \Delta} B_s^2\right)\right\}
    \\
    &=\exp(\Delta l) \times \int_0^{\infty}\Pr\left(\exp\left(2\Delta l\times\max_{0\leq s\leq \Delta} B_s^2\right)>w\right)dw\\
         &=\exp(\Delta l)\times\left[ 1+ \int_{1}^{\infty}\Pr\left(\exp\left(2\Delta l\times\max_{0\leq s\leq \Delta} B_s^2\right)>w\right)dw\right]\\
    &= \exp(\Delta l)\times\left[ 1+\int_{1}^\infty\Pr\left(\max_{0\leq s\leq \Delta} \left\vert B_s\right\vert >\sqrt{\frac{\log(w)}{2\Delta l}}\right)dw\right]\\
    &\leq\exp(\Delta l) \times\left[ 1+ \int_1^\infty 2\exp\left(-\frac{\log(w)}{4\Delta^2 l}\right)dw\right]
    \\
    &=\exp(\Delta l)\times\left[ 1+ \int_1^\infty 2w^{-\frac{1}{4\Delta^2 l}}dw\right]
    \\
    &= \exp(\Delta l) \times \left(\frac{1+4\Delta^2 l}{1-4\Delta^2 l}\right)
\end{align*}
where in the fourth last line we apply the inequality for running maximum of Brownian motion which starts at zero, i.e. $\text{Pr}\left(\max_{0\leq s\leq \Delta}\left\vert B_s\right\vert>a\right)\leq 2\exp\left(-\frac{a^2}{2\Delta}\right)$ for any positive number $a$.

Therefore,
\begin{align}
    \mathbb{E}\left\{E_1^2\cdots E_m^2\right\}\leq \exp\left(Tl\right) \times \left(\frac{1+4\Delta^2 l}{1-4\Delta^2 l}\right)^m.
    \label{eq: second moment bound}
\end{align}
Plugging \eqref{eq: probability bound} and \eqref{eq: second moment bound} into \eqref{eq: omitted term bound} concludes the proof.
\end{proof}
\section{Wald experiments}
\label{sec: wald experiments}
In this section, we wish to numerically show that the Wald estimate is biased, i.e. $\mathbb{E}^\theta(\hat{L}(\theta))/L(\theta)$ changes as $\theta$ changes where $K = \inf\{k>0:E^{(1)}+\ldots+E^{(k)}>0\}$, $\hat{L}(\theta)=\sum_{i=1}^K E^{(i)}$ and $L(\theta)=\mathbb{E}^\theta\left(G^\theta(X_0)\right)$.

Here is an example of which we know the true solution to. The dynamics that describes how one dimensional process $X$ evolves is given by
\begin{equation*}
    dX_t = bdt+dW_t
\end{equation*}
where $b$ is some constant and $W$ is a one dimensional Brownian motion. Hence 
\begin{equation*}
    X_t|X_{t'}=x \sim \mathcal{N}\left(x+b\times (t-t'), t-t'\right).
\end{equation*}
Thus, $\theta=b$ in this case. We can exactly calculate $L(\theta)$ for $\lambda(x)=x+10$,
\begin{equation*}
    L(\theta)=\mathbb{E}\left(\exp\left(-\int_0^T\lambda(X_t)dt\right)\Big\vert X_0=0\right)=\exp\left(-10T-\frac{b}{2}T^2+\frac{1}{6}T^3\right)
\end{equation*}
where the expectation is taken with respect to Brownian motion $X|X_0=0$. Note that $\int_0^T (10+X_0+bt+W_t-W_0) dt\sim \mathcal{N}\left(10T+\frac{b}{2}T^2, \frac{1}{3}T^3\right)$ for $X_0=W_0=0$.

We obtain $N=10^6$ independent samples $\hat{L}(b)$ for every value of $b$: 
\begin{equation*}
    \hat{L}'(b)=\frac{1}{N}\sum_{j=1}^N \sum_{i=1}^{K_j} E_j^{(i)} \approx \mathbb{E}^\theta(\hat{L}(b))
\end{equation*}
where $K_j = \inf\{k>0:E^{(1)}_j+\ldots+E^{(k)}_j>0\}$. Each $E_j^{(i)}$ is an independent sample where $E_j^{(i)}\leftarrow \mathrm{PE}(T,0,T,0)$. 
\begin{figure}[t!]
     \centering
     \begin{subfigure}[b]{0.32\textwidth}
         \centering
         \includegraphics[width=\textwidth]{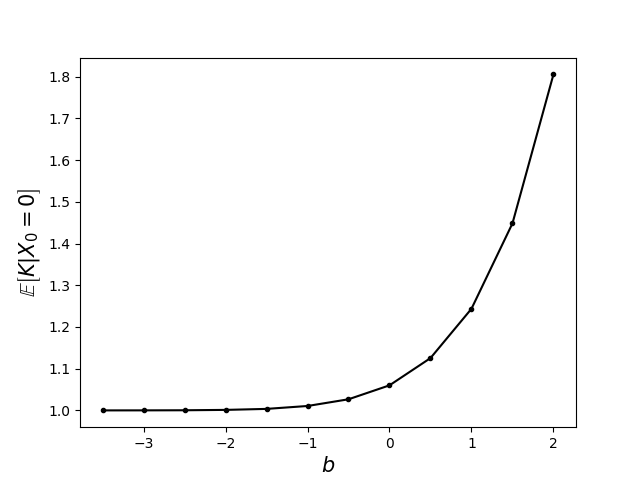}
         \caption{}
         \label{fig: draws}
     \end{subfigure}
     \hfill
     \begin{subfigure}[b]{0.32\textwidth}
         \centering
         \includegraphics[width=\textwidth]{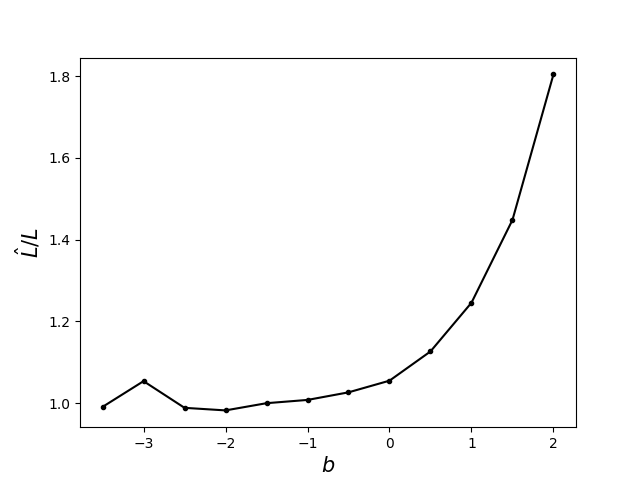}
         \caption{}
         \label{fig: wald ratio}
     \end{subfigure}
          \hfill
     \begin{subfigure}[b]{0.32\textwidth}
         \centering
         \includegraphics[width=\textwidth]{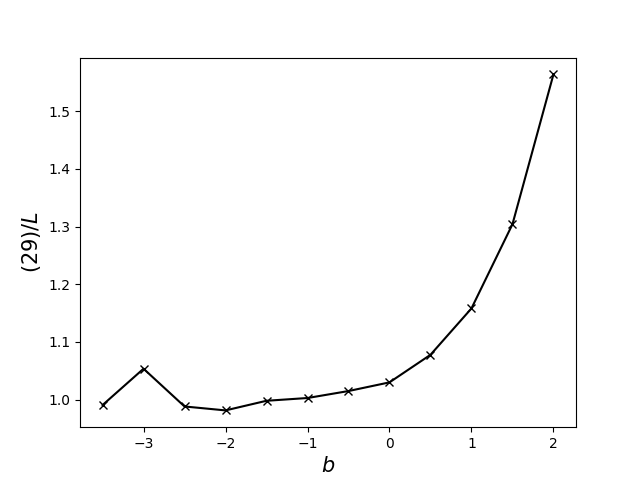}
         \caption{}
         \label{fig: wald ratio2}
     \end{subfigure}
     \caption{(a) Plot of $\mathbb{E}^\theta(K|X_0=0)$, (b) $\hat{L}'/L$ and (c) estimate \eqref{eq: wald ratio 2}/$L$ versus $\theta=b$.} 
\end{figure}
Figure \ref{fig: draws} shows that as $b$ increases, the number of draws to make Wald estimate positive increases. For this example $E$ is 
\begin{equation*}
    E = \exp\left(-T(X_0+10)\right)\prod_{i=1}^\kappa \left[1+\left[X_0-X_{\tau_i}\right]\right]
\end{equation*}
where $\kappa \sim \mathcal{P}o(T)$ and $\tau_1,\ldots, \tau_\kappa \sim \mathcal{U}(0, T)$ are i.i.d. samples. Larger $b$ (i.e. larger drift dragging the particle towards positive direction) increases the chances of meeting negative Poisson estimate. In Figure \ref{fig: wald ratio}, we notice a clear trend that the empirical ratio, $\hat{L}'(b)/L(b)$, increases with $b$. Finally we plot 
\begin{equation}
    \frac{1}{N}\sum_{j=1}^N \frac{1}{K_j}\sum_{i=1}^{K_j} E_j^{(i)}.
    \label{eq: wald ratio 2}
\end{equation}
\section{Exact computation of likelihood function}
\label{sec: exact computation of likelihood function}
This section is to determine the following likelihood function.
\begin{equation*}
    \mathbb{E}\left\{\exp\left[-\int^T_\tau \left(\alpha X_t+\beta\right) dt\right]\Big\vert X_\tau=x_0, X_T=x_1\right\}
\end{equation*}
The procedure can be splitted into 4 steps.
\begin{enumerate}
    \item  As $X_t|X_T$ is a Gaussian process, the Lebesgue integral is Gaussian random variable, see \cite{folland1999real}: approximate the given integral as Riemann sums and each Riemann sum is Gaussian and hence the limit will also be Gaussian.
    \begin{equation*}
        \left(\int^T_\tau \alpha X_t+\beta dt \Big\vert X_\tau=x_0, X_T=x_1 \right)\sim \mathcal{N}\left(\alpha\mu+\beta\left(T-\tau\right), \alpha^2\sigma^2\right)
    \end{equation*}

    \item Calculate mean $\mu$:
\begin{align*}
    \mathbb{E}\left[\int^T_\tau X_tdt\Big\vert X_\tau=x_0,X_T=x_1\right]&=\int^T_\tau\mathbb{E}\left[X_t\Big\vert X_\tau=x_0,X_T=x_1\right]dt\\
    &=\int^T_\tau x_0 +\frac{t-\tau}{T-\tau}\left(x_1-x_0\right)dt\\
    &=\frac{1}{2}\left(T-\tau\right)\left(x_0+x_1\right)
\end{align*}
\item Calculate variance $\sigma^2$:
\begin{align*}
    &\mathbb{E}\left[\left(\int^T_\tau X_tdt\right)\left( \int^T_\tau X_t dt\right)\Big\vert X_\tau=x_0,X_T=x_1\right]- \mu^2
    \\
    =&\mathbb{E}\left[\int_{\left[\tau,T\right]^2}X_uX_vdudv\Big\vert X_\tau=x_0,X_T=x_1\right]-\mu^2\\
    =&\int_{\left[\tau,T\right]^2}\mathbb{E}\left[X_uX_v\Big\vert X_\tau=x_0,X_T=x_1\right]dudv-\mu^2\\
    =&\int_{\left[\tau,T\right]^2}\text{cov}\left(X_u, X_v\right)+\mathbb{E}\left[X_u\Big\vert X_\tau=x_0,X_T=x_1\right]\times \mathbb{E}\left[X_v\Big\vert X_\tau=x_0,X_T=x_1\right]dudv -\mu^2\\ 
    =&\int_{[\tau,T]^2} \frac{(u\wedge v-\tau)(T-u\vee v)}{T-\tau} +\left(x_0 +\frac{u-\tau}{T-\tau}(x_1-x_0)\right)\times\left(x_0 +\frac{v-\tau}{T-\tau}(x_1-x_0)\right) dudv -\mu^2\\
    =& \int_\tau^T \int^v_\tau \frac{(u-\tau)(T-v)}{T-\tau} dudv +\int^T_\tau\int^T_v \frac{(v-\tau)(T-u)}{T-\tau}dudv\\
    =&\frac{(T-\tau)^3}{12}
\end{align*}
\item Calculate the likelihood:
\begin{align*}
    &\mathbb{E}\left\{\exp\left[-\int^T_\tau (\alpha X_t+\beta) dt\right]\Big\vert X_\tau=x_0,X_T=x_1\right\}\\=&\exp\left[-\frac{\alpha}{2}(T-\tau)(x_0+x_1)-\beta(T-\tau)+\frac{\alpha^2(T-\tau)^3}{24}\right]
\end{align*}
\end{enumerate}
Therefore, the exact likelihood for 
$t_{1:n_p}$ and $y_{{t_1}:{t_{n_p}}}$, where $n_p$ is the number of observations, is 
\begin{align}
    \mathcal{L}=\mathbb{E}&\left\{\left[\prod_{i=1}^{n_p}\left(X_{t_i}+10\right)g^\theta(y_{t_i}|X_{t_i})\right.\right.\left.\times\exp\left(-\frac{t_i-t_{i-1}}{2}\left(X_{t_{i-1}}+X_{t_i}\right)-10(t_{i}-t_{i-1})+\frac{(t_i-t_{i-1})^3}{24}\right)\right]\nonumber
    \\
    &\left.\times\exp\left(-\frac{T-t_{n_p}}{2}\left(X_{t_{n_p}}+X_T\right)-10(T-t_{n_p})+\frac{(T-t_{n_p})^3}{24}\right)\right\}.
    \nonumber
\end{align}
To find the ground truth for values of $n_p>2$, we use Algorithm \ref{alg: bootstrap particle filter} with line 8 using the exact evaluation (given by
\eqref{eq:exact_E_k}). This allows the computation of Monte Carlo estimate described in Section \ref{sec: benchmark}.
\begin{equation}
    E_k^{(i)}=\exp\left[-\frac{1}{2}\left(t_k^\Delta-t_{k-1}^\Delta\right)\left(X_{k-1}^\Delta+X_k^\Delta\right)-10\left(t_k^\Delta-t_{k-1}^\Delta\right)+\frac{\left(t_k^\Delta-t_{k-1}^\Delta\right)^3}{24}\right].
    \label{eq:exact_E_k}
\end{equation} 
\section{No Observation Case and Two Observation Case}
\label{sec: exact likelihood two observation}
The exact likelihood for no observation received within $[0,T]$ is
\begin{align}
    &\mathbb{E}\left\{\exp\left(-\int_0^T\lambda(X_s)ds\right)\right\}\nonumber\\
    =&\mathbb{E}\left\{\mathbb{E}\left[\exp\left(-\int_0^T\lambda(X_s)ds\right)\left\vert X_0=0, X_T\right.\right]\right\}\nonumber\\
    =&\mathbb{E}\left\{\exp\left(-\frac{TX_T}{2}-10T+\frac{T^3}{24}\right)\right\}\nonumber\\
    =&\int_{-\infty}^\infty \frac{1}{\sqrt{2\pi T}}\exp\left(-10T+\frac{T^3}{24}+\frac{T^3}{8}\right)\exp\left(-\frac{(x_T-\frac{T^2}{2})^2}{2T}\right)dx_T\nonumber\\
    =&\exp\left(-10T+\frac{T^3}{6}\right)\nonumber
\end{align}
and the exact likelihood for two observations received within $[0,T]$ is
\begin{align}
    \mathcal{L}=&\mathbb{E}\left\{\left[\prod_{i=1}^{2}\lambda\left(X_{t_i}\right)g^\theta\left(y_{t_i}\vert X_{t_i}\right)\exp\left(-\int_{t_{i-1}}^{t_i}\lambda(X_s)ds\right)\right] \exp\left(-\int_{t_2}^{T}\lambda\left(X_s\right)ds\right)\right\}\nonumber\\
    =&\int_{-\infty}^\infty 
    \int_{-\infty}^\infty
    \int_{-\infty}^\infty
    \left(v_1+10\right)\left(v_1+v_2+10\right)\times\frac{1}{2\pi \sigma_y^2}\exp\left(-\frac{(y_{t_1}-v_1)^2+(y_{t_2}-v_1-v_2)^2}{2\sigma_y^2}\right)\nonumber\\
    &\times\exp\left(-\frac{t_1v_1}{2}-\frac{t_2-t_1}{2}(2v_1+v_2)-\frac{T-t_2}{2}(2v_1+2v_2+v_3)-10T\right)\nonumber\\
    &\times\exp\left(\frac{t_1^3+(t_2-t_1)^3+(T-t_2)^3}{24}\right)\times\frac{1}{\sqrt{2\pi t_1}}\times\frac{1}{\sqrt{2\pi (t_2-t_1)}}\times\frac{1}{\sqrt{2\pi(T-t_2)}}\nonumber\\
    &\times\exp\left(-\frac{v_1^2}{2t_1}-\frac{v_2^2}{2(t_2-t_1)}-\frac{v_3^2}{2(T-t_2)}\right)dv_3 dv_2dv_1\nonumber
\end{align}
The first integral with respect to $v_3$ is
\begin{align*}
    \int_{-\infty}^\infty \exp\left(-\frac{T-t_2}{2}v_3-\frac{v_3^2}{2(T-t_2)}\right)dv_3=\sqrt{2\pi (T-t_2)}\exp\left(\frac{1}{8}\left(T-t_2\right)^3\right).
\end{align*}
The second integral with respect to $v_2$ is
\begin{align*}
    &\int_{-\infty}^\infty \left(v_1+v_2+10\right)\exp\left(-\frac{\left(y_{t_2}-v_1-v_2\right)^2}{2\sigma_y^2}\right)\\
    &\times\exp\left(-\frac{t_2-t_1}{2}v_2-\frac{T-t_2}{2}\times 2v_2\right)\times\exp\left(-\frac{v_2^2}{2(t_2-t_1)}\right)dv_2\\
    =&\int_{-\infty}^\infty (v_1+v_2+10)\exp\left(-\left(\frac{1}{2\sigma_y^2}+\frac{1}{2(t_2-t_1)}\right)v_2^2-\left(\frac{-y_{t_2}+v_1}{\sigma_y^2}-\frac{t_1+t_2-2T}{2}\right)v_2\right)\\&\times\exp\left(-\frac{y_{t_2}^2-2v_1y_{t_2}+v_1^2}{2\sigma_y^2}\right)dv_2\\
    =&\int_{-\infty}^\infty (v_1+v_2+10)\exp\left(-\frac{y_{t_2}^2-2v_1y_{t_2}+v_1^2}{2\sigma_y^2}+\frac{\mu_2^2}{2\sigma_2^2}\right)\exp\left(-\frac{(v_2-\mu_2)^2}{2\sigma_2^2}\right)dv_2\\
    =&\sqrt{2\pi \sigma_2^2}\left(\mu_2+v_1+10\right)\exp\left(-\frac{y_{t_2}^2-2v_1y_{t_2}+v_1^2}{2\sigma_y^2}+\frac{\mu_2^2}{2\sigma_2^2}\right)
\end{align*}
where 
\begin{equation*}
    \sigma_2^2=\left(\frac{1}{\sigma_y^2}+\frac{1}{t_2-t_1}\right)^{-1}, \qquad\mu_2=\sigma_2^2\left(\frac{y_{t_2}-v_1}{\sigma_y^2}+\frac{t_1+t_2-2T}{2}\right)=\sigma_2^2(av_1+b)
\end{equation*} 
and for 
\begin{equation*}
    a=-\frac{1}{\sigma_y^2},\qquad b=\frac{y_{t_2}}{\sigma_y^2}+\frac{t_1+t_2-2T}{2}.
\end{equation*}

The third integral with respect to $v_1$ is as follows,
\begin{align*}
    &\int_{-\infty}^\infty \left(v_1+10\right)\left(\mu_2+v_1+10\right)\exp\left(-\frac{\left(y_{t_1}-v_1\right)^2+y_{t_2}^2-2v_1y_{t_2}+v_1^2}{2\sigma_y^2}\right)\\&\times\exp\left(\frac{\sigma_2^2}{2}\left(a^2v_1^2+2abv_1+b^2\right)\right)\times\exp\left(-\frac{t_1v_1}{2}-\left(t_2-t_1\right)v_1-(T-t_2)v_1\right)\exp\left(-\frac{v_1^2}{2t_1}\right)dv_1
    \\
    =&\int_{-\infty}^\infty \left[(\sigma_2^2a+1)v_1^2+\left(10(\sigma_2^2a+1)+\sigma_2^2b+20\right)v_1+10\sigma_2^2b+100\right]\\
    &\times\exp\left(-\left(\frac{1}{\sigma_y^2}-\frac{a^2\sigma_2^2}{2}+\frac{1}{2t_1}\right)v_1^2+\left(\frac{2y_{t_1}+2y_{t_2}}{2\sigma_y^2}+ab\sigma_2^2+\frac{t_1-2T}{2}\right)v_1\right) \\
    &\times \exp\left(-\frac{y_{t_1}^2+y_{t_2}^2}{2\sigma_y^2}+\frac{b^2\sigma_2^2}{2}\right)dv_1\\
   =&\sqrt{2\pi \sigma_1^2}\left[(\sigma_2^2a+1)(\mu_1^2+\sigma_1^2)+\left(10(\sigma_2^2a+1)+\sigma_2^2b+20\right)\mu_1+10\sigma_2^2b+100\right]\\&\times\exp\left(-\frac{y_{t_1}^2+y_{t_2}^2}{2\sigma_y^2}+\frac{b^2\sigma_2^2}{2}+\frac{\mu_1^2}{2\sigma_1^2}\right)
\end{align*}
where $\sigma_1^2= \left(\frac{2}{\sigma_y^2}-a^2\sigma_2^2+\frac{1}{t_1}\right)^{-1}$ and $\mu_1=\sigma_1^2\left(\frac{y_{t_1}+y_{t_2}}{\sigma_y^2}+ab\sigma_2^2+\frac{t_1-2T}{2}\right)$

Therefore,
\begin{align*}
    \mathcal{L}=&\frac{1}{2\pi\sigma_y^2}\times\frac{\sigma_1\sigma_2}{\sqrt{t_1\left(t_2-t_1\right)}}\left[\left(\sigma_2^2a+1\right)\left(\mu_1^2+\sigma_1^2\right)+\left(10\left(\sigma_2^2a+1\right)+\sigma_2^2b+20\right)\mu_1+10\sigma_2^2b+100\right]\\
    &\times\exp\left(-\frac{y_{t_1}^2+y_{t_2}^2}{2\sigma_y^2}+\frac{b^2\sigma_2^2}{2}+\frac{\mu_1^2}{2\sigma_1^2}\right)
    \times\exp\left(-10T+\frac{t_1^3+(t_2-t_1)^3+(T-t_2)^3}{24}+\frac{1}{8}\left(T-t_2\right)^3\right) \nonumber.
\end{align*}
The exact likelihood are used to compute the relative MSE for Section \ref{sec: numerical experiments} and Appendix \ref{sec: empirical relationship between rel var and delta}.
\section{Empirical relationship between relative variance and $\Delta$}
\label{sec: empirical relationship between rel var and delta}
\begin{figure}[t!]
     \centering
     \includegraphics[width=0.55\linewidth]{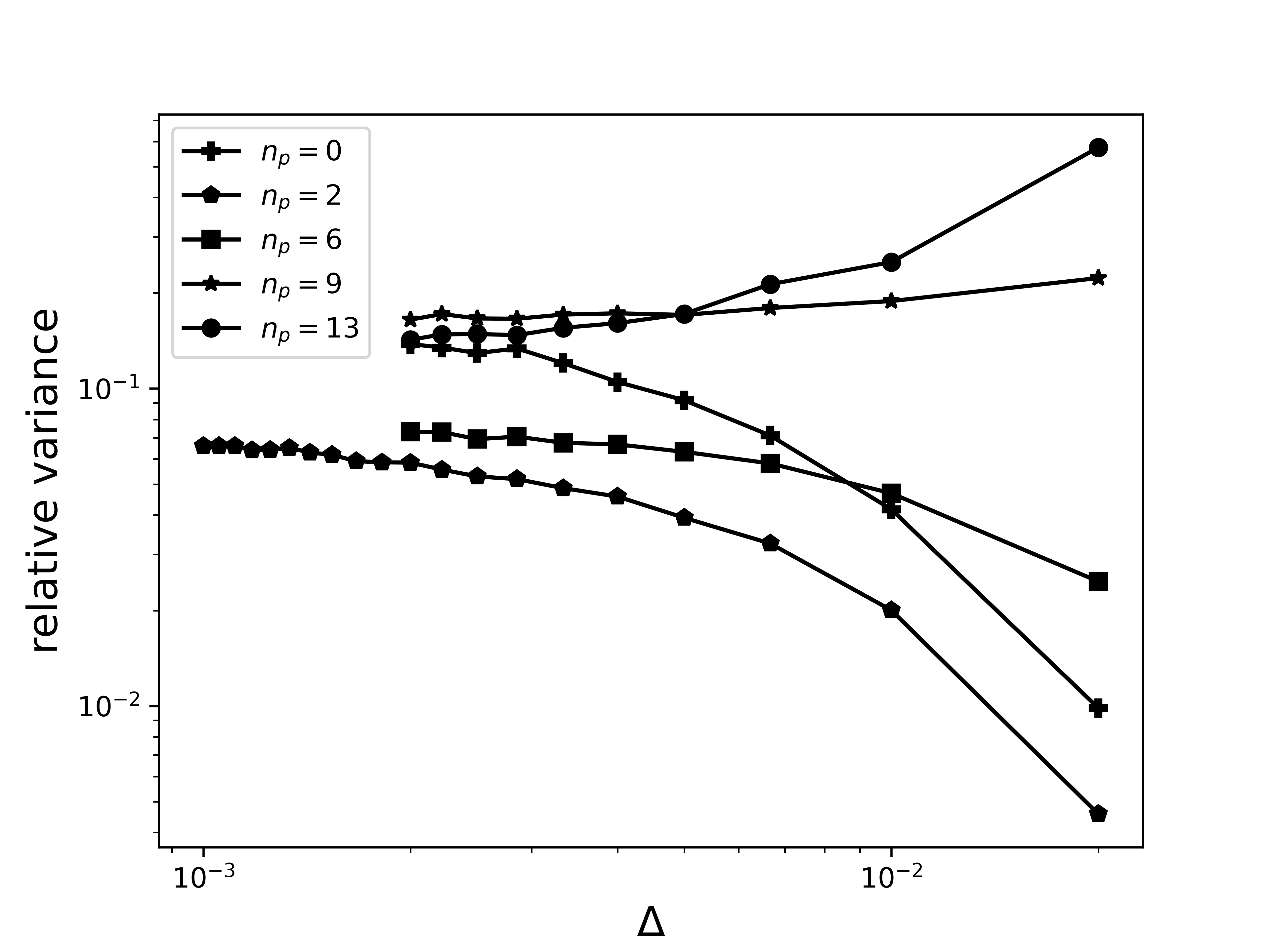}
     \caption{Plot of relative variance versus $\Delta$ for different values of $n_p$ and fixed $N=100$ in log scale.}
    \label{fig:relative variance vs Delta different np}
\end{figure}
Figure \ref{fig:relative variance vs Delta different np} reports the relationship between relative variance and $\Delta$ for different $n_p$ values and fixed $N=100$. For $n_p=0$ and $n_p=2$ cases, the exact likelihood is computed using solutions calculated in Section \ref{sec: exact likelihood two observation}, for other larger values of $n_p$, the Monte Carlo estimate $\mathcal{L}_{\text{MC}}$ is used in relative variance computation. Results show that the relationship between relative variance and $\Delta$ can be highly $n_p$-dependent. As $n_p$ of problem increases, the rate of change in relative variance becomes less positive when $\Delta$ approaches zero. A more general trend that applies to all values of $n_p$ is that the relative variance eventually becomes constant as $\Delta$ goes to zero.

\section{Thinning algorithm for creating data}
\label{sec: thinning algorithm}
This section describes the thinning algorithm we use to generate observation data. Please refer Algorithm \ref{alg: thinning algorithm} for details.
\begin{algorithm}[t!]
\DontPrintSemicolon
  
  \KwInput{$\lambda_{\max}=\lambda_0$, $T$}
    Generate $N\sim\mathcal{P}o(\lambda_{\max}T)$;\\
    Generate $t_1,t_2,\ldots, t_N\sim\mathcal{U}(0,T)$;\\
    Sort $t_1,t_2,\ldots, t_N$ and relabel them so that $t_1<t_2<\ldots<t_N;$\\
    Generate $X_0\sim \nu(x)$ and set $\tau=0$;
    \\
  \For{$i \in \{1:N\}$}{Propagate $X_{t_i}$ from previous $\tau$, i.e. $X_{t_i}\sim f_{t_i-\tau}^\theta(x_{t_i}|X_{\tau})$;
  \\
  Generate $U\sim\mathcal{U}(0,1)$;
  \\
  \If{$U\leq \lambda(X_{t_i})/\lambda_{\max}$}{Keep $t_i$ as a real observation time and set $\tau=t_i$;\\ Generate $y_{t_i}$ which is a realisation of $Y_{t_i}\sim g^{\theta}(y|X_{t_i})$ with Born and Wolf point spread function.}
  }
  \KwOutput{all pairs of $(t_i, y_{t_i})$}
\caption{Thinning Algorithm for simulating the observation times with intensity function $\lambda(X_t)$ on $[0,T]$}
\label{alg: thinning algorithm}
\end{algorithm}
\section{Additional experiments}
\label{sec: additional experiments}
\begin{figure}[t!]
     \centering
     \begin{subfigure}[b]{0.48\textwidth}
         \centering
         \includegraphics[width=\textwidth]{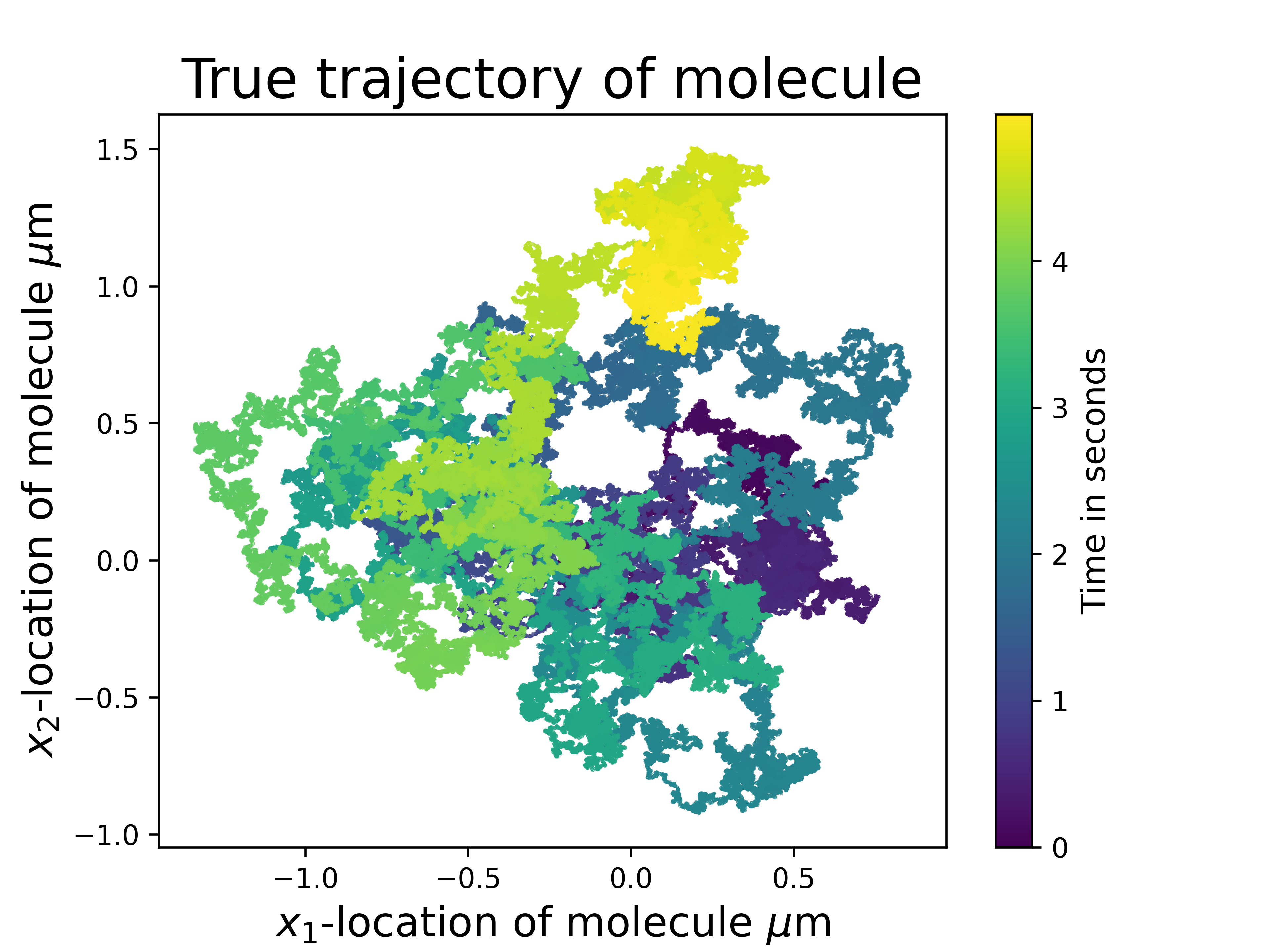}
         \caption{}
         \label{fig:true trajectory worse}
     \end{subfigure}
     \hfill
     \begin{subfigure}[b]{0.48\textwidth}
         \centering
         \includegraphics[width=\textwidth]{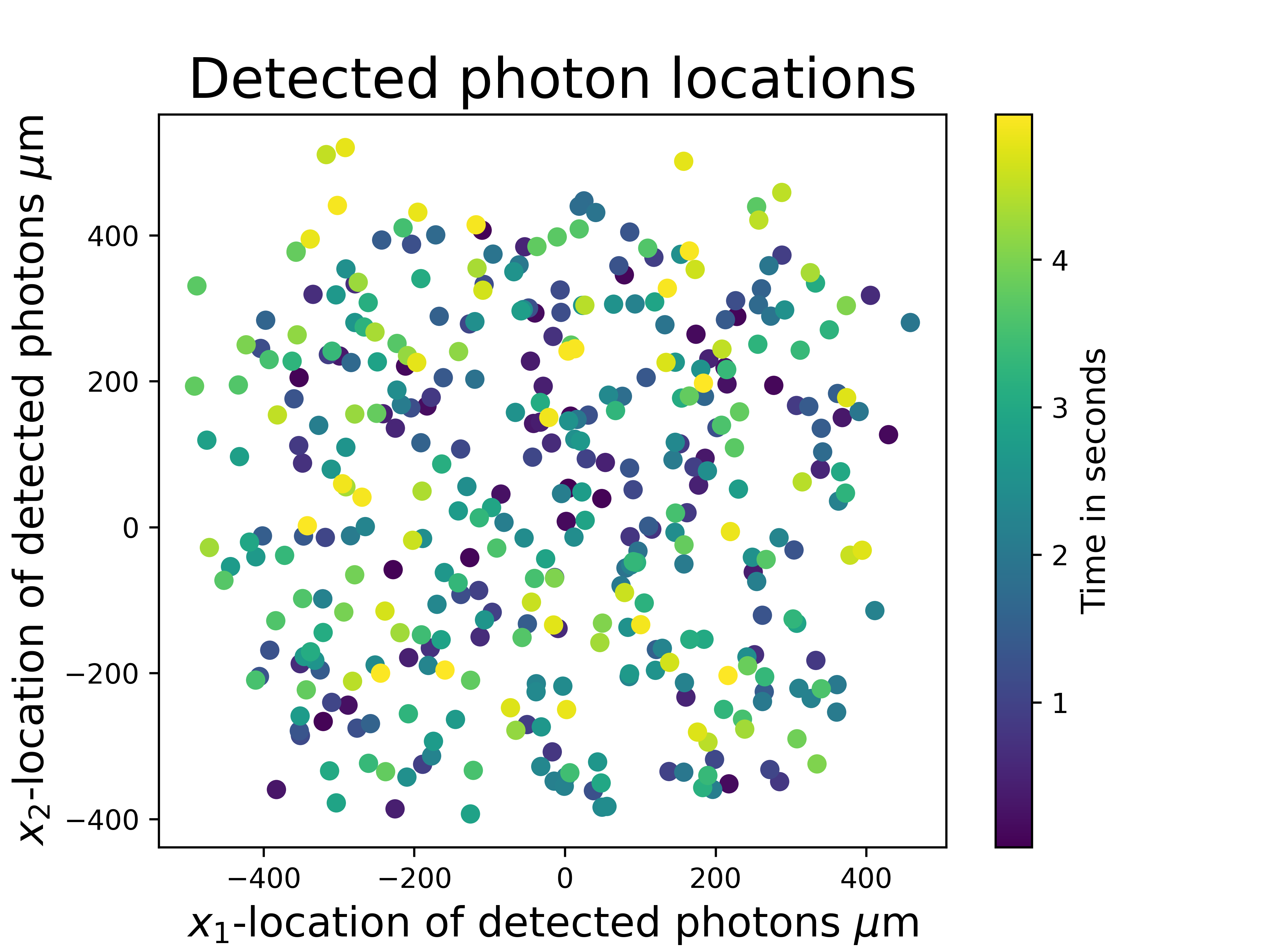}
         \caption{}
         \label{fig:BW model worse}
     \end{subfigure}
     \begin{subfigure}[b]{0.48\textwidth}
         \centering
         \includegraphics[width=\textwidth]{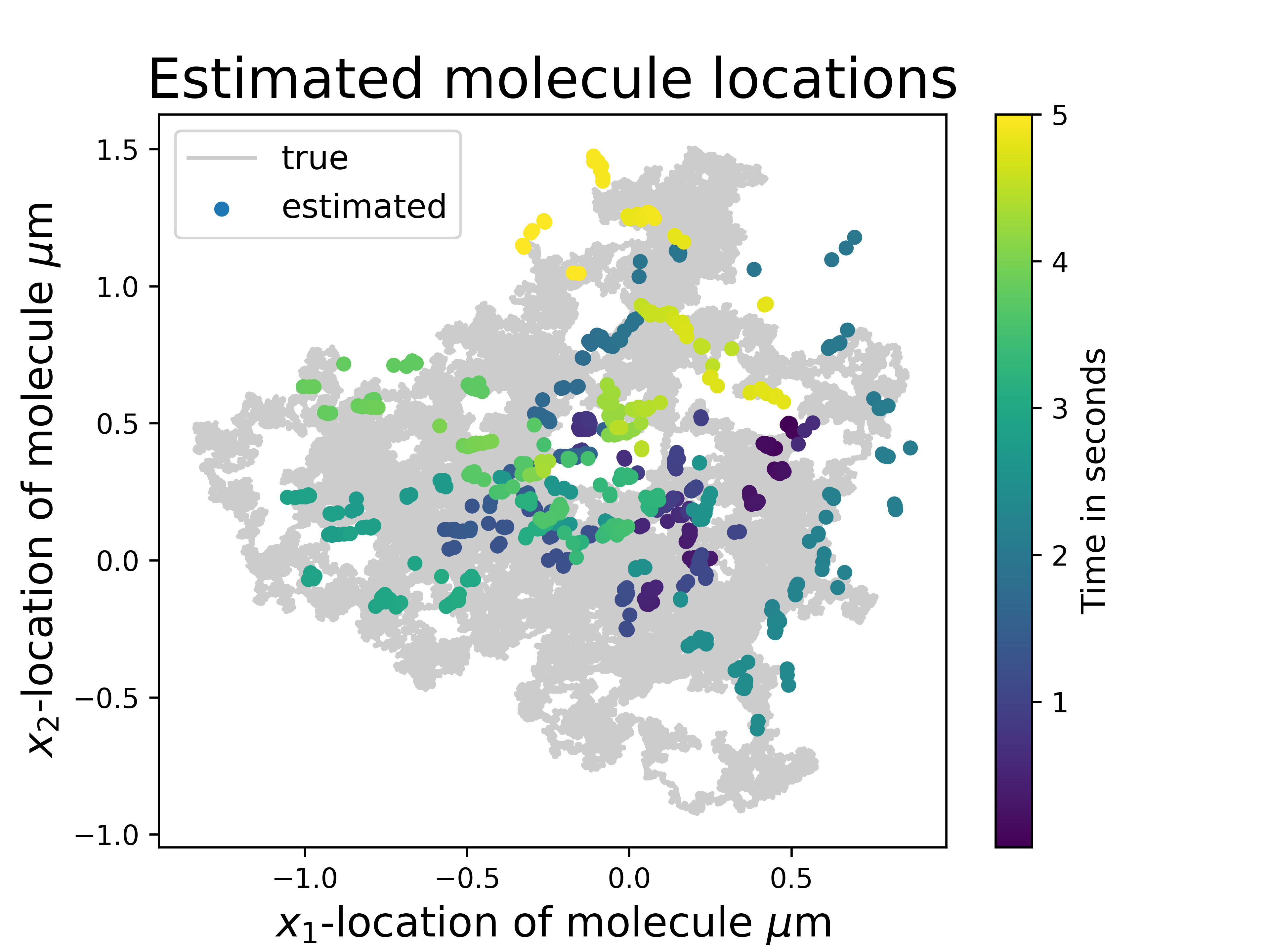}
         \caption{}
         \label{fig:BW_estimated worse}
     \end{subfigure}
     \begin{subfigure}[b]{0.48\textwidth}
         \centering
         \includegraphics[width=\textwidth]{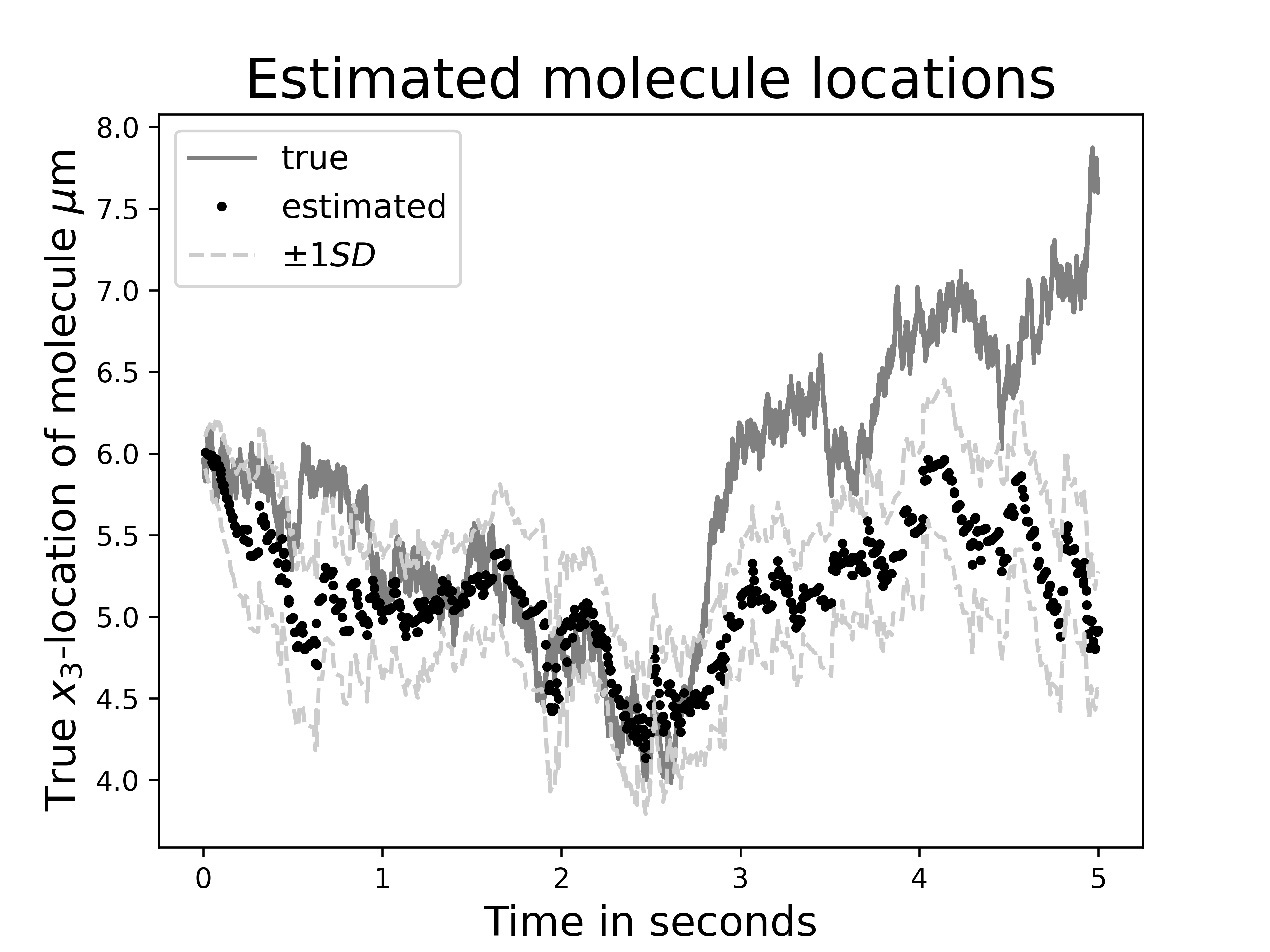}
         \caption{}
         \label{fig:x_3_estimated_BW worse}
     \end{subfigure}
     \caption{(a) True trajectory of a molecule; (b) observed photon locations; (c) estimated $(x_1, x_2)$ molecule locations  and (d) true $x_3$ molecule locations and estimated location.}
\end{figure}
In Figure \ref{fig:true trajectory}, we plot the true trajectory of a molecule, and simulate using parameters $\{\theta=(1.0, 1.0, 1.0)^\top, \mu=(0.5, 0.5, 6.0)^\top, p_0=0.01\}$ for time interval [0, 5.0]. Other parameters remain the same as in Section \ref{sec: 3D single molecule model}. Figure \ref{fig:BW_estimated} shows the filtered $(x_1, x_2)$ mean locations of molecules, which deviate from their right positions. Figure \ref{fig:x_3_estimated_BW} shows the filtered mean of $x_3$ and regions of $\pm$1 standard deviation together with the true state of $X_3$ at the observation times. In comparison to the Figure \ref{fig:BW_estimated} and \ref{fig:x_3_estimated_BW}, Figure \ref{fig:BW_estimated worse} and \ref{fig:x_3_estimated_BW worse} shows that higher values of $x_3$ degrade the estimation quality of particle filtering algorithm on the state of molecule and this is due to the exponential function structure of Born and Wolf image function which generates photons that are detected very far from the true molecule position.
\bibhang=1.7pc
\bibsep=2pt
\fontsize{9}{14pt plus.8pt minus .6pt}\selectfont
\renewcommand\bibname{\large \bf References}
\expandafter\ifx\csname
natexlab\endcsname\relax\def\natexlab#1{#1}\fi
\expandafter\ifx\csname url\endcsname\relax
  \def\url#1{\texttt{#1}}\fi
\expandafter\ifx\csname urlprefix\endcsname\relax\def\urlprefix{URL}\fi

 \bibliographystyle{chicago}      
 \bibliography{ref.bib}   

\begin{thebibliography}{}

\bibitem[\protect\citeauthoryear{Andrieu, Doucet, and Holenstein}{Andrieu
  et~al.}{2010}]{andrieu2010particle}
Andrieu, C., A.~Doucet, and R.~Holenstein (2010).
\newblock Particle {M}arkov chain {M}onte {C}arlo methods.
\newblock {\em Journal of the Royal Statistical Society: Series B (Statistical
  Methodology)\/}~{\em 72\/}(3), 269--342.

\bibitem[\protect\citeauthoryear{Axelrod}{Axelrod}{2001}]{axelrod2001total}
Axelrod, D. (2001).
\newblock Total internal reflection fluorescence microscopy in cell biology.
\newblock {\em Traffic\/}~{\em 2\/}(11), 764--774.

\bibitem[\protect\citeauthoryear{Beskos, Papaspiliopoulos, Roberts, and
  Fearnhead}{Beskos et~al.}{2006}]{beskos2006exact}
Beskos, A., O.~Papaspiliopoulos, G.~O. Roberts, and P.~Fearnhead (2006).
\newblock Exact and computationally efficient likelihood-based estimation for
  discretely observed diffusion processes (with discussion).
\newblock {\em Journal of the Royal Statistical Society: Series B (Statistical
  Methodology)\/}~{\em 68\/}(3), 333--382.

\bibitem[\protect\citeauthoryear{Black and Scholes}{Black and
  Scholes}{2019}]{black2019pricing}
Black, F. and M.~Scholes (2019).
\newblock The pricing of options and corporate liabilities.
\newblock In {\em World Scientific Reference on Contingent Claims Analysis in
  Corporate Finance: Volume 1: Foundations of CCA and Equity Valuation}, pp.\
  3--21. World Scientific.

\bibitem[\protect\citeauthoryear{Blanchet and Murthy}{Blanchet and
  Murthy}{2018}]{blanchet2018exact}
Blanchet, J. and K.~Murthy (2018).
\newblock Exact simulation of multidimensional reflected brownian motion.
\newblock {\em Journal of Applied Probability\/}~{\em 55\/}(1), 137--156.

\bibitem[\protect\citeauthoryear{Born and Wolf}{Born and
  Wolf}{2013}]{born2013principles}
Born, M. and E.~Wolf (2013).
\newblock {\em Principles of optics: electromagnetic theory of propagation,
  interference and diffraction of light}.
\newblock Elsevier.

\bibitem[\protect\citeauthoryear{Chopin, Singh, Soto, and Vihola}{Chopin
  et~al.}{2022}]{chopin2022resampling}
Chopin, N., S.~S. Singh, T.~Soto, and M.~Vihola (2022).
\newblock On resampling schemes for particle filters with weakly informative
  observations.
\newblock {\em arXiv preprint arXiv:2203.10037\/}.

\bibitem[\protect\citeauthoryear{d'Avigneau, Singh, and Ober}{d'Avigneau
  et~al.}{2022}]{d2022limits}
d'Avigneau, A.~M., S.~S. Singh, and R.~J. Ober (2022).
\newblock Limits of accuracy for parameter estimation and localization in
  single-molecule microscopy via sequential {M}onte {C}arlo methods.
\newblock {\em SIAM Journal on Imaging Sciences\/}~{\em 15\/}(1), 139--171.

\bibitem[\protect\citeauthoryear{Del~Moral}{Del~Moral}{2004}]{del2004feynman}
Del~Moral, P. (2004).
\newblock {\em Feynman-{K}ac formulae: genealogical and interacting particle
  systems with applications}, Volume~88.
\newblock Springer.

\bibitem[\protect\citeauthoryear{Durham and Gallant}{Durham and
  Gallant}{2002}]{durham2002numerical}
Durham, G.~B. and A.~R. Gallant (2002).
\newblock Numerical techniques for maximum likelihood estimation of
  continuous-time diffusion processes.
\newblock {\em Journal of Business \& Economic Statistics\/}~{\em 20\/}(3),
  297--338.

\bibitem[\protect\citeauthoryear{Evans}{Evans}{2012}]{evans2012introduction}
Evans, L.~C. (2012).
\newblock {\em An introduction to stochastic differential equations},
  Volume~82.
\newblock American Mathematical Soc.

\bibitem[\protect\citeauthoryear{Fearnhead, Papaspiliopoulos, and
  Roberts}{Fearnhead et~al.}{2008}]{fearnhead2008particle}
Fearnhead, P., O.~Papaspiliopoulos, and G.~O. Roberts (2008).
\newblock Particle filters for partially observed diffusions.
\newblock {\em Journal of the Royal Statistical Society: Series B (Statistical
  Methodology)\/}~{\em 70\/}(4), 755--777.

\bibitem[\protect\citeauthoryear{Fearnhead, Papaspiliopoulos, Roberts, and
  Stuart}{Fearnhead et~al.}{2010}]{fearnhead2010random}
Fearnhead, P., O.~Papaspiliopoulos, G.~O. Roberts, and A.~Stuart (2010).
\newblock Random-weight particle filtering of continuous time processes.
\newblock {\em Journal of the Royal Statistical Society: Series B (Statistical
  Methodology)\/}~{\em 72\/}(4), 497--512.

\bibitem[\protect\citeauthoryear{Folland}{Folland}{1999}]{folland1999real}
Folland, G.~B. (1999).
\newblock {\em Real analysis: modern techniques and their applications},
  Volume~40.
\newblock John Wiley \& Sons.

\bibitem[\protect\citeauthoryear{Gillespie}{Gillespie}{1976}]{gillespie1976general}
Gillespie, D.~T. (1976).
\newblock A general method for numerically simulating the stochastic time
  evolution of coupled chemical reactions.
\newblock {\em Journal of computational physics\/}~{\em 22\/}(4), 403--434.

\bibitem[\protect\citeauthoryear{Gillespie}{Gillespie}{1977}]{gillespie1977exact}
Gillespie, D.~T. (1977).
\newblock Exact stochastic simulation of coupled chemical reactions.
\newblock {\em The journal of physical chemistry\/}~{\em 81\/}(25), 2340--2361.

\bibitem[\protect\citeauthoryear{Haario, Saksman, and Tamminen}{Haario
  et~al.}{2001}]{haario2001adaptive}
Haario, H., E.~Saksman, and J.~Tamminen (2001).
\newblock An adaptive {M}etropolis algorithm.
\newblock {\em Bernoulli\/}, 223--242.

\bibitem[\protect\citeauthoryear{Jasra, Yu, and Heng}{Jasra
  et~al.}{2020}]{jasra2020multilevel}
Jasra, A., F.~Yu, and J.~Heng (2020).
\newblock Multilevel particle filters for the non-linear filtering problem in
  continuous time.
\newblock {\em Statistics and Computing\/}~{\em 30\/}(5), 1381--1402.

\bibitem[\protect\citeauthoryear{Jazwinski}{Jazwinski}{2007}]{jazwinski2007stochastic}
Jazwinski, A.~H. (2007).
\newblock {\em Stochastic processes and filtering theory}.
\newblock Courier Corporation.

\bibitem[\protect\citeauthoryear{Kloeden and Platen}{Kloeden and
  Platen}{2011}]{kloeden2011numerical}
Kloeden, P. and E.~Platen (2011).
\newblock {\em Numerical Solution of Stochastic Differential Equations}.
\newblock Stochastic Modelling and Applied Probability. Springer Berlin
  Heidelberg.

\bibitem[\protect\citeauthoryear{McAdams and Arkin}{McAdams and
  Arkin}{1997}]{mcadams1997stochastic}
McAdams, H.~H. and A.~Arkin (1997).
\newblock Stochastic mechanisms in gene expression.
\newblock {\em Proceedings of the National Academy of Sciences\/}~{\em
  94\/}(3), 814--819.

\bibitem[\protect\citeauthoryear{Merton}{Merton}{1975}]{merton1975optimum}
Merton, R.~C. (1975).
\newblock Optimum consumption and portfolio rules in a continuous-time model.
\newblock In {\em Stochastic optimization models in finance}, pp.\  621--661.
  Elsevier.

\bibitem[\protect\citeauthoryear{M{\"o}rters and Peres}{M{\"o}rters and
  Peres}{2010}]{morters2010brownian}
M{\"o}rters, P. and Y.~Peres (2010).
\newblock {\em Brownian motion}, Volume~30.
\newblock Cambridge University Press.

\bibitem[\protect\citeauthoryear{Nicolau}{Nicolau}{2002}]{nicolau2002new}
Nicolau, J. (2002).
\newblock A new technique for simulating the likelihood of stochastic
  differential equations.
\newblock {\em The Econometrics Journal\/}~{\em 5\/}(1), 91--103.

\bibitem[\protect\citeauthoryear{Ober, Ward, and Chao}{Ober
  et~al.}{2020}]{ober2020quantitative}
Ober, R.~J., E.~S. Ward, and J.~Chao (2020).
\newblock {\em Quantitative Bioimaging: An Introduction to Biology,
  Instrumentation, Experiments, and Data Analysis for Scientists and
  Engineers}.
\newblock CRC Press.

\bibitem[\protect\citeauthoryear{Obukhov}{Obukhov}{1959}]{obukhov1959description}
Obukhov, A.~M. (1959).
\newblock Description of turbulence in terms of lagrangian variables.
\newblock {\em Advances in Geophysics\/}~{\em 6}, 113--116.

\bibitem[\protect\citeauthoryear{Papaspiliopoulos}{Papaspiliopoulos}{2011}]{papaspiliopoulos_2011}
Papaspiliopoulos, O. (2011).
\newblock {\em Monte Carlo probabilistic inference for diffusion processes: a
  methodological framework}, pp.\  82–103.
\newblock Cambridge University Press.

\bibitem[\protect\citeauthoryear{Pardoux and Pignol}{Pardoux and
  Pignol}{1984}]{pardoux1984etude}
Pardoux, E. and M.~Pignol (1984).
\newblock Etude de la stabilit{\'e} de la solution d'une eds bilin{\'e}aire
  {\`a} coefficients p{\'e}riodiques. application au mouvement des pales
  d'h{\'e}licopt{\`e}re.
\newblock In {\em Analysis and Optimization of systems}, pp.\  92--103.
  Springer.

\bibitem[\protect\citeauthoryear{Rousset and Doucet}{Rousset and
  Doucet}{2006}]{rousset2006discussion}
Rousset, M. and A.~Doucet (2006).
\newblock Discussion on `exact and computationally efficient likelihood-based
  inference for discretely observed diffusion processes'.
\newblock {\em Journal of the Royal Statistical Society: Series B (Statistical
  Methodology)\/}~{\em 68\/}(3), 375--376.

\bibitem[\protect\citeauthoryear{Szalai, Siarry, Lukin, Williamson, Unsain,
  C{\'a}ceres, Pilo-Pais, Acuna, Refojo, Owen, et~al.}{Szalai
  et~al.}{2021}]{szalai2021three}
Szalai, A.~M., B.~Siarry, J.~Lukin, D.~J. Williamson, N.~Unsain,
  A.~C{\'a}ceres, M.~Pilo-Pais, G.~Acuna, D.~Refojo, D.~M. Owen, et~al. (2021).
\newblock Three-dimensional total-internal reflection fluorescence nanoscopy
  with nanometric axial resolution by photometric localization of single
  molecules.
\newblock {\em Nature communications\/}~{\em 12\/}(1), 1--13.

\bibitem[\protect\citeauthoryear{Vahid, Hanzon, and Ober}{Vahid
  et~al.}{2020}]{vahid2020fisher}
Vahid, M.~R., B.~Hanzon, and R.~J. Ober (2020).
\newblock Fisher information matrix for single molecules with stochastic
  trajectories.
\newblock {\em SIAM Journal on Imaging Sciences\/}~{\em 13\/}(1), 234--264.

\bibitem[\protect\citeauthoryear{Wagner}{Wagner}{1988}]{wagner1988unbiased}
Wagner, W. (1988).
\newblock Unbiased multi-step estimators for the {M}onte {C}arlo evaluation of
  certain functional integrals.
\newblock {\em Journal of computational physics\/}~{\em 79\/}(2), 336--352.

\bibitem[\protect\citeauthoryear{Wagner}{Wagner}{1989}]{wagner1989undiased}
Wagner, W. (1989).
\newblock Undiased {M}onte {C}arlo estimators for functionals of weak solutions
  of stochastic diffretial equations.
\newblock {\em Stochastics: An International Journal of Probability and
  Stochastic Processes\/}~{\em 28\/}(1), 1--20.

\bibitem[\protect\citeauthoryear{Yonekura and Beskos}{Yonekura and
  Beskos}{2022}]{yonekura2022online}
Yonekura, S. and A.~Beskos (2022).
\newblock Online smoothing for diffusion processes observed with noise.
\newblock {\em Journal of Computational and Graphical
  Statistics\/}~(just-accepted), 1--35.

\end{thebibliography}
 \newpage

\end{document}